\documentclass[journal,twocolumn,final]{IEEEtran}

\usepackage{amsmath,amssymb}
\usepackage{amsthm}
\usepackage{amsfonts}
\usepackage{mathtools}

\usepackage{color}

\usepackage[ breaklinks,
			 colorlinks = true,
             linkcolor = blue,
             urlcolor  = blue,
             citecolor = red,
             anchorcolor = blue,
]{hyperref}

\usepackage{enumitem}
\setlist[1]{itemsep=0pt}

\newtheorem{lemma}{Lemma}
\newtheorem{theorem}[lemma]{Theorem}
\newtheorem{prop}[lemma]{Proposition}
\newtheorem{corollary}[lemma]{Corollary}
\theoremstyle{plain}
\newtheorem{definition}{Definition}

\DeclareMathOperator{\supp}{supp}

\newcommand{\cP}{\mathcal{P}}
\newcommand{\cPp}{\mathcal{P}_{>0}}
\newcommand{\cS}{\mathcal{S}}
\newcommand{\cB}{\mathcal{B}}
\newcommand{\bbN}{\mathbb{N}}
\newcommand{\bbR}{\mathbb{R}}
\newcommand{\bbQ}{\mathbb{Q}}
\newcommand{\bbRp}{\mathbb{R}_{\geq 0}}

\newcommand{\mfF}{\mathfrak{F}}
\newcommand{\cT}{\mathcal{T}}
\newcommand{\cL}{\mathcal{L}}

\newcommand{\p}{\vec{p}}
\newcommand{\q}{\vec{q}}
\renewcommand{\r}{\vec{r}}
\newcommand{\s}{\vec{s}}
\renewcommand{\t}{\vec{t}}
\newcommand{\uu}[1]{\vec{u}^{(#1)}}   
\newcommand{\ee}[1]{\vec{e}_{#1}}     

\newcommand{\eps}{\varepsilon}
\newcommand{\eqdef}{\coloneqq}
\newcommand{\D}{\mathbb{D}}
\newcommand{\Do}{\overline{\mathbb{D}}_{\mathbb{H}}}
\newcommand{\Du}{\underline{\mathbb{D}}_{\mathbb{H}}}
\renewcommand{\H}{\mathbb{H}}
\newcommand{\majo}{\succsim}
\newcommand{\trump}{\succsim^*}
\newcommand{\cata}{\succsim_c}

\renewcommand{\vec}[1]{\mathbf{#1}}

\begin{document}

\title{Entropy and relative entropy from information-theoretic principles}

\author{Gilad~Gour and Marco~Tomamichel,~\IEEEmembership{Senior Member,~IEEE}%
\thanks{
G.~Gour is with the Department of Mathematics and Statistics, Institute for Quantum Science and Technology,
University of Calgary, Alberta, Canada.} 
\thanks{M.~Tomamichel is with the Department of Electrical and Computer Engineering as well as the Centre for Quantum Technologies, National University of Singapore, Singapore.
Email: marco.tomamichel@nus.edu.sg} 
}

\maketitle

\begin{abstract}
	We introduce an axiomatic approach to entropies and relative entropies that relies only on minimal information-theoretic axioms, namely monotonicity under mixing and data-processing as well as additivity for product distributions. We find that these axioms induce sufficient structure to establish continuity in the interior of the probability simplex and meaningful upper and lower bounds, e.g., we find that every relative entropy satisfying these axioms must lie between the R\'enyi divergences of order $0$ and $\infty$. We further show simple conditions for positive definiteness of such relative entropies and a characterisation in terms of a variant of relative trumping. Our main result is a one-to-one correspondence between entropies and relative entropies.
\end{abstract}


\section{Introduction}

There is a rich literature on axiomatic derivations of entropies and relative entropies, starting already in Shannon's seminal work~\cite{shannon48} and then refined by Faddeev~\cite{faddeev56}, Diderrich~\cite{diderrich75} and Acz\'el-Forte-Ng~\cite{aczel74}, amongst others. Such approaches first focussed on deriving the Shannon entropy until the scope was extended by R\'enyi~\cite{renyi61}. Detailed reviews of the various axiomatic derivations can be found in the books by Acz\'el-Dar\'oczy~\cite{aczel75} and Ebanks-Sahoo-Sander~\cite{ebanks98}, and a rough guide through the literature was more recently compiled by Csisz\'ar~\cite{csiszar08}. 

Some of the axioms used in the above-mentioned works can be seen as inspired by operational or information-theoretic considerations\,---\,for example the requirement that the entropy or relative entropy is additive for product distributions is not only desirable mathematically but necessary for the quantity to attain operational meaning as an information measure in an asymptotic setting where rates are considered. Some other axioms, however, lack a clear information-theoretic motivation. To see this, let us look at entropy first.

An entropy, if we want it to be compatible with our intuitive notion, should be an additive measure of uncertainty about the outcome of a random experiment. We thus expect it to be invariant under relabelling of outcomes, i.e.\ permutations of the probability distribution as well as adding and removing unused labels. In R\'enyi's derivation this invariance under permutation is required specifically. However, in both Shannon's and R\'enyi derivation of entropy, we find the following additional assumption~\cite{shannon48}:
\begin{quote}
\emph{If a choice be broken down into two successive choices, the original $\H$ should be the weighted sum of the individual values of $\H$.}
\end{quote}
This essentially fixes the rule $\H(XY) = \H(X) + \sum_{x} \Pr[X = x]\, \H(Y | X = x)$ for the joint entropy of two random variables $X$ and $Y$, and helps to single out the Shannon entropy as the unique additive uncertainty measure satisfying this rule. However, it is not evident why any meaningful additive measure of uncertainty should necessarily satisfy this. Indeed, R\'enyi went on to relax this assumption. In~\cite[Postulate 5$'$]{renyi61}, the above is replaced with a more general mean which allows for the exponential weighting of entropy contributions seen in the R\'enyi family of entropies. However, while this is useful to isolate R\'enyi entropies, it is hard to justify this axiom information-theoretically. Moreover, continuity inside the probability simplex is required explicitly by both Shannon and R\'enyi. Although this is often a very natural property for operationally meaningful information measures, we have not seen a direct operational argument for its necessity. Indeed, we will show that it follows from more directly operationally motivated axioms.

In this work we start with a different, more information-theoretically motivated set of axioms we would like entropies, divergences and relative entropies to satisfy. It is worth pointing out at this point that the nomenclature for entropies, divergences and relative entropies is not consistent throughout the literature. In the remainder of this section we present a convention that makes sense for this paper and we believe also more generally in the context of information theory and statistics. It is however at odds with how the terms are used in some of the literature. Most prominently, Tsallis entropies~\cite{tsallis88} are generally not additive and thus do not qualify as entropies in our framework. Moreover, while the terms relative entropy and divergence are often used interchangeably in the literature, we will make a distinction between them and require additivity only for relative entropies.


In the following, for an entropy function $\H(\cdot)$, which takes a probability mass function as an input, the following (see Section~\ref{sec:axioms} for a formal statement) requirements are imposed:
\begin{enumerate}
	\item it should be monotonically increasing under bistochastic (mixing) maps; and
	\item it should be additive for product distributions.
\end{enumerate}
Bistochastic maps can be interpreted as probabilistic mixtures of permutations of the outcomes due to the Birkhoff-von Neumann theorem~\cite{birkhoff46}. Forgetting which permutation was performed should not decrease the uncertainty about the outcome, and, hence, the above monotonicity property is a natural requirement for any meaningful measure of uncertainty. 
It is worth noting that this monotonicity property is often not stated as an axiom in the literature, but rather follows only once a specific expression for the entropy has been determined from the axioms.

A similar situation arises in the study of relative entropy. For a relative entropy function $\D(\cdot\|\cdot)$, which takes two probability mass functions as inputs, we deem the following two requirements essential:
\begin{enumerate}
	\item it should be monotonically decreasing under the application of a stochastic map to both arguments, i.e.\ the data-processing inequality; and
	\item it should be additive for pairs of product distributions.
\end{enumerate}
The former is necessary in most information-theoretic contexts. Let us for example consider asymmetric binary hypothesis testing where both the critical rate as well as error and strong converse exponents are characterised by relative entropies. Operationally it is evident that distinguishing outputs of a stochastic map is harder than distinguishing its inputs, and this thus needs to be reflected in any quantity that obtains operational meaning in this problem. Due to the close relation between hypothesis testing and various information-theoretic tasks (see, e.g.,~\cite{blahut74}), similar arguments can be made for many operational quantities in information theory.

The main question we ask here is how much structure these information-theoretic axioms impose on entropies and relative entropies. First, we note that since these axioms only determine entropy or relative entropy functions up to convex combinations, we cannot hope to recover a one-parameter family of functions as in the work of R\'enyi. Still, we find that the structure imposed by these axioms suffices to establish some interesting properties that all operationally meaningful entropies and relative entropies have to satisfy.

The remainder of this paper is structured as follows. After Preliminaries in Section~\ref{sec:pre} we introduce our axioms for entropies and relative entropies in Section~\ref{sec:axioms}. Section~\ref{sec:bounds} then establishes continuity and upper and lower bounds on relative entropies. Our main result then follows in Section~\ref{sec:biject}, where we prove a bijection between entropies and relative entropies (under some weak and necessary regularity assumptions). Finally, Section~\ref{sec:faithful} concerns itself with positive definiteness (or faithfulness) of relative entropies. 
In Section~\ref{sec:char} we find a characterisation of relative entropies in terms of catalytic relative majorisation.
We conclude in Section~\ref{sec:conc} by asking whether every entropy satisfying our axioms is in fact a convex combination of R\'enyi entropies.

\section{Preliminaries}
\label{sec:pre}

\subsection{Conventions and notation}

Throughout we denote by $\log$ the binary logarithm. 
We restrict our attention to finite discrete random variables on the alphabet $[n]\eqdef\{1,...,n\}$ for $n \in \bbN$. A probability mass function is represented as a row vector $\p = [p_1, p_2, \ldots, p_n]$ with $p_i \geq 0$ for all $i \in [n]$ and $\sum_{i \in [n]} p_i = 1$. We call such vectors probability vectors in the following.
The set of all such vectors is denoted by $\cP(n)$. The subset of $\cP(n)$ with strictly positive entries is denoted $\cPp(n)$. The support of a vector $\p \in \cP(n)$ is denoted  
\begin{align}
\supp(\p) \eqdef \{x\in[n]\;:\; p_x > 0\}.
\end{align}
The number of elements in the support of $\p$ is denoted by $|\p|$ and we write $\p \gg \q$ if $\supp(\p) \supseteq \supp(\q)$.
We use $\uu{n} \in \cP(n)$ to denote the uniform distribution, i.e.\ ${u_i}^{(n)} = \frac{1}{n}$ for all $i \in [n]$. On the other hand, ${\ee{i}}^{(n)}$ denotes the deterministic distribution with all mass on $i \in [n]$. We simply write $\ee{i}$ if the size of the alphabet is clear from context. Finally, $\otimes$ denotes the Kronecker (tensor) product of two vectors and $\oplus$ denotes the direct sum or concatenation of two vectors. We note that if $\p \in \cP(n)$ and $\q \in \cP(m)$ are probability vectors then $\p \otimes \q \in \cP(mn)$ is also a probability vector, whereas $\p \oplus \q$ is not.

The set of all $n \times m$ right (row) stochastic matrices, or \emph{channels}, is denoted by $\cS(n,m)$, with the shortcut $\cS(n) \eqdef \cS(n,n)$. And for any  $\p \in \cP(n)$ and $W \in \cS(n, m)$, we write $\p W \in \cP(m)$ for the output probability vector induced by the channel $W$ on input~$\p$. The set of bistochastic maps in $\cS(n)$, i.e., channels that map $\uu{n}$ to itself, is denoted by $\cB(n)$. 

We will consider functions $f: \cP(n) \to \bbRp \cap \{\infty\}$ to the extended positive real line that satisfy $f(\p) < \infty$ for any $\p \in \cPp(n)$. For such function we say that $f$ is upper semi-continuous at $x \in \cP(n)$ if, for every sequence $\{x_n\}_{n \in \bbN} \subset \cPp(n)$ that converges to $x$, we have $\limsup_{n \to \infty} f(x_n) \leq f(x)$, with the convention that $\infty \leq \infty$. We say that $f$ is lower semi-continuous at $x$ if $\liminf_{n \to \infty} f(x_n) \geq f(x)$ for all such sequences, and that $f$ is continuous at $x$ if it is  both lower and upper semi-continuous at $x$. 

\subsection{Majorisation and mixing channels}
\label{sec:majo}

For a vector $\p \in \cP(n)$ we denote by $\p^{\downarrow}$ the vector with the same components as $\p$ that are rearranged in decreasing order, i.e., the components of $\p^{\downarrow}$ satisfy $p_1^{\downarrow}\geq\cdots\geq p_n^{\downarrow}$. For convenience we also define that $p_i^{\downarrow} = 0$ for $i > n$. We say that $\p \in \cP(n)$ \emph{majorises} $\q \in \cP(m)$, and write $\p \majo \q$, if and only if $\sum_{x \in [k]} p_x^{\downarrow} \geq \sum_{x \in [k]} q_x^{\downarrow}$ for all $k \in [\max\{m,n\}]$. 
If $\p \majo \q$ and $\q \majo \p$ then we will write $\p \sim \q$.

A famous characterisation by Hardy, Littlewood and P\'olya~\cite{hardy34} states that for any two vectors $\p, \q \in \cP(n)$, we have $\p \majo \q$ if and only if there exists a bistochastic map $W \in\cB(n)$ such that $\p W = \q$. Moreover, the Birkhoff-von Neumann theorem~\cite{birkhoff46} allows us to interpret such maps as probabilistic mixtures of permutation operations, or \emph{mixing channels}. When the dimensions of $\p$ and $\q$ do not agree this can be straight-forwardly extended by allowing for maps that add symbols that have probability zero, as well as their combination with bistochastic maps.
We denote the set of $n \times m$ mixing channels by $\cB(n, m)$. Formally, we have the following equivalence, which is a simple generalisation of the result in~\cite{hardy34} that we state for completeness.

\begin{lemma} \label{lem:hardy}
	Let $\p \in \cP(n)$ and $\q \in \cP(m)$ for $n, m \in\mathbb{N}$ and let $k = |\p|$. The following statements are equivalent:
	\begin{enumerate}
		\item $\p \majo \q$;
		\item There exists a mixing channel $W \in \cB(k, m)$ acting on the support of $\p$ such that $\p W = \q$. 
	\end{enumerate}
\end{lemma}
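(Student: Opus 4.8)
The plan is to reduce both implications to the classical Hardy--Littlewood--P\'olya equivalence for vectors of equal dimension, which is already recalled in the text. Throughout, write $\tilde\p \in \cPp(k)$ for the restriction of $\p$ to its support; since $\tilde\p$ and $\p$ have identical decreasing rearrangements (recall $p_i^{\downarrow}=0$ for indices beyond the dimension), we have $\tilde\p \sim \p$, so it suffices to relate $\tilde\p$ to $\q$. The one structural observation needed about rectangular mixing channels is this: a $W \in \cB(k,m)$ is, by definition, a composition of bistochastic maps and maps appending a zero coordinate, so it changes the dimension by $0$ in the bistochastic steps and by $+1$ in the appending steps; hence $\cB(k,m)=\emptyset$ unless $k\le m$, and in that case every $W\in\cB(k,m)$ factors as $W = E_{k,m}\,B$, where $E_{k,m}\colon\cP(k)\to\cP(m)$, $\vec x\mapsto \vec x\oplus\vec 0_{m-k}$, is the zero-padding embedding and $B\in\cB(m)$. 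On the level of the majorisation order, appending a zero preserves it ($\vec x\sim \vec x\oplus\vec 0$) and a bistochastic map does not increase it ($\vec x\majo\vec xB$, by~\cite{hardy34}).

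For $(2)\Rightarrow(1)$: walking through the composition defining $W$ step by step and applying the two facts above at each step, every intermediate vector is majorised by the previous one, so $\tilde\p\majo \tilde\p W=\q$, and therefore $\p\majo\q$.

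For $(1)\Rightarrow(2)$: from $\p\majo\q$ one first reads off $k\le m$. Indeed, setting $k'=|\q|$ we have $\sum_{x\le k'}q_x^{\downarrow}=1$, so $\sum_{x\le k'}p_x^{\downarrow}\ge 1$ forces equality, whence $k=|\p|\le k'\le m$. Now pad $\tilde\p$ to $\tilde\p\oplus\vec 0_{m-k}\in\cP(m)$; this vector has the same decreasing rearrangement as $\tilde\p$, so $\tilde\p\oplus\vec 0_{m-k}\sim\tilde\p\sim\p\majo\q$, with both $\tilde\p\oplus\vec 0_{m-k}$ and $\q$ in $\cP(m)$. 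The equal-dimension equivalence~\cite{hardy34} then yields $B\in\cB(m)$ with $(\tilde\p\oplus\vec 0_{m-k})B=\q$, i.e.\ $\tilde\p\,(E_{k,m}B)=\q$, so $W\eqdef E_{k,m}B$ is a mixing channel in $\cB(k,m)$ acting on the support of $\p$.

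I expect no genuine obstacle here; the lemma is essentially bookkeeping around the definition of rectangular mixing channels and the convention $p_i^{\downarrow}=0$ out of range. The only mildly delicate point is the factorisation $W=E_{k,m}B$, i.e.\ that one may assume all coordinate-appending steps are performed first; this follows from $(\vec xB')\oplus\vec 0=(\vec x\oplus\vec 0)\,\mathrm{diag}(B',I)$ for bistochastic $B'$, which lets bistochastic maps be pushed past the appending maps and then merged into a single $B\in\cB(m)$.
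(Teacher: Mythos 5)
Your proof is correct and follows essentially the same route as the paper: pad the restricted vector $\tilde\p$ with zeros to dimension $m$ and reduce both directions to the equal-dimension Hardy--Littlewood--P\'olya equivalence. You simply fill in details the paper leaves implicit (the argument that $\p \majo \q$ forces $k \le m$, and the factorisation of a rectangular mixing channel as padding followed by a single bistochastic map), and these details are all sound.
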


\begin{proof}
	We first show (1)~$\implies$~(2). First, observe that $\p \majo \q$ implies $k \leq m$ and thus we can introduce a probability vector $\p' \in \cP(m)$ that is comprised of all the nonzero components of $\p$ padded with zeros. Clearly $\p' \majo \q$ and by~\cite{hardy34} there exists a bistochastic map from $\p'$ to $\q$. The reverse implication follows immediately from~\cite{hardy34} as well since adding unused symbols does not affect the majorisation condition. 
\end{proof}

\subsection{Relative majorisation}

We say that a pair of vectors $\p,\q \in \cP(n)$ \emph{relatively majorises} another pair of vectors $\p',\q' \in \cP(m)$, and write
	$(\p,\q) \majo (\p',\q')$
if there exists a channel $W \in \cS(n,m)$ such that $\p' =  \p W$ and $\q' = \q W$. For the special case where $m=n$ and $\q = \q' = \uu{n}$, this corresponds to requiring a bistochastic map $W \in \cB(n)$ such that $\p' = \p W$, and thus 
\begin{align}
(\p, \uu{n}) \majo (\p', \uu{n}) \iff \p \majo \p' \,.
\end{align}
If $(\p,\q) \majo (\p',\q')$ and $(\p',\q') \majo (\p,\q)$ then we will write $(\p,\q) \sim (\p',\q')$.

Relative majorisation is a partial order that can be characterised with testing regions. The testing region of a pair of probability vectors $\p,\q\in\cP(n)$ is a region in $\bbR^2$ defined as
\begin{align}
	\cT(\p,\q) \eqdef \Big\{ \big(\p \t^T , \q \t^T\big) \in \bbR^2 : \t \in [0,1]^n \Big\} .
\end{align}
where $\t$ is a probabilistic hypothesis test, a vector with entries between 0 and 1. This region is bounded by two curves known as lower and upper Lorenz curve. An example of a testing region is drawn in Fig.~\ref{fig:testing}. The upper Lorenz curve can be obtained from the lower Lorenz curve by a rotation of 180 degrees. Therefore, the Lower (or upper) Lorenz curve determines the testing region uniquely.

\begin{figure}[h]\centering
    \includegraphics[width=0.4\textwidth]{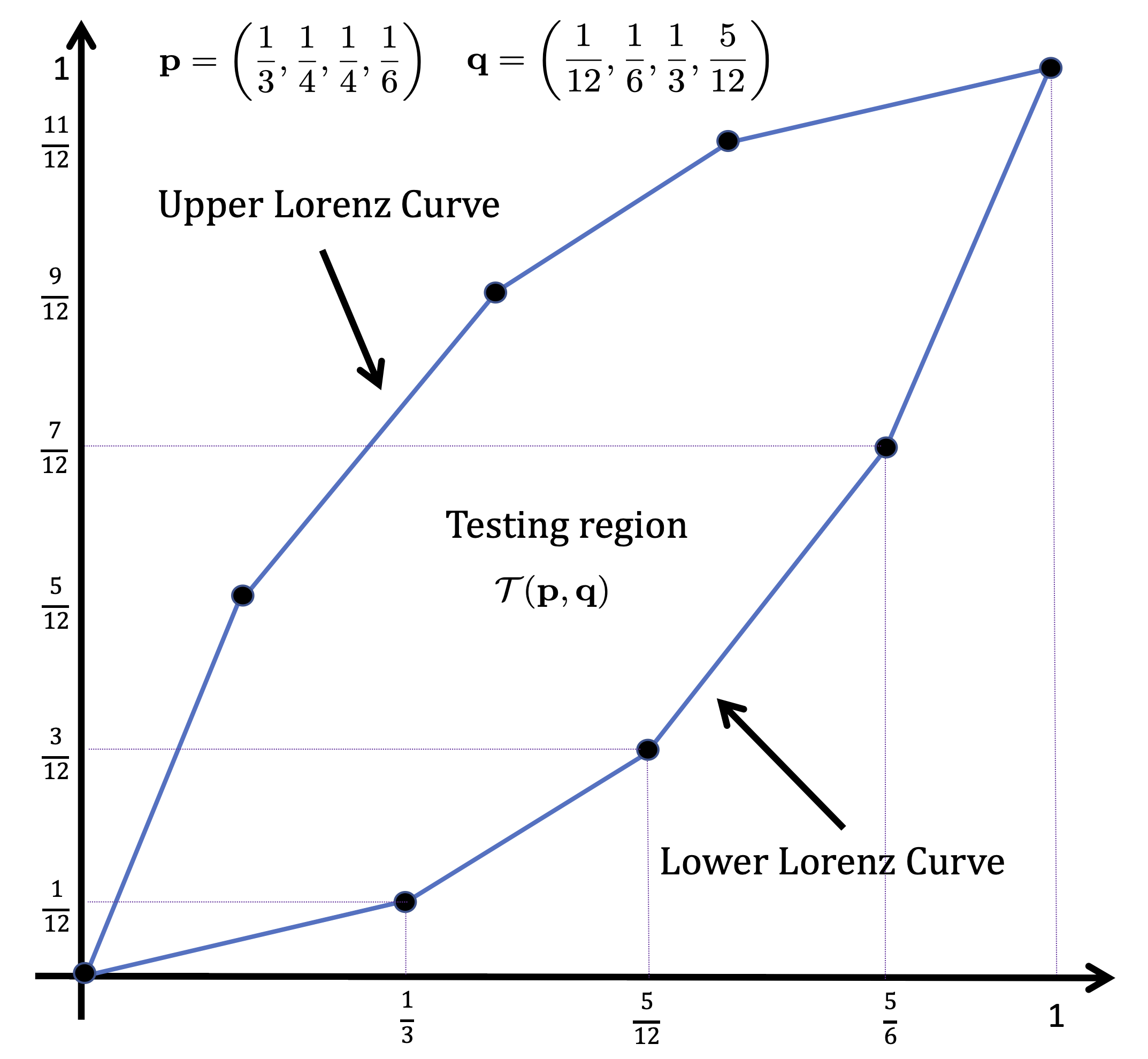}
  \caption{Let $\p, \q \in \cP(n)$ and assume for simplicity that the likelihood ratios $\lambda_i = {p_i}/{q_i}$ are ordered $\lambda_1 \geq \lambda_2 \geq \ldots \geq \lambda_n$.
  Then the lower Lorenz curve has $n+1$ vertices given by $\{(a_\ell,b_\ell)\}_{\ell=0}^{n}$
  with $a_\ell=\sum_{x=1}^{\ell} p_x$ and $b_\ell=\sum_{x=1}^{\ell} q_x$.
  The figure depicts the testing region for an example with $n=4$ and $\p$, $\q$ as given above. The vertices of the lower Lorenz curve are computed as $(0,0)$, $(1/3,1/12)$, $(7/12,3/12)$, $(5/6,7/12)$, and $(1,1)$.}
  \label{fig:testing}
\end{figure}

The relevance of testing regions to our study here is the following theorem that goes back to Blackwell~\cite{blackwell1953} and has since been rediscovered under different names including $d-$majorisation~\cite{Veinott1971}, matrix majorization~\cite{Dahl-1999}, and thermo-majorisation~\cite{HO2013} (more details can also be found in the book on majorisation by Marshall-Olkin~\cite{Marshall-2011a}).
\begin{theorem}\cite{blackwell1953}
	Let $\p,\q\in\cP(n)$ and $\p',\q'\in\cP(m)$ be two pairs of probability vectors. Then, 
	\begin{align}
		(\p,\q) \majo (\p',\q') \quad \iff \quad \cT(\p',\q') \subseteq \cT(\p,\q) \,.
	\end{align}
\end{theorem}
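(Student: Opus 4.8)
For the forward implication I would just push a test through the channel. Assume $(\p,\q)\majo(\p',\q')$ and fix a channel $W\in\cS(n,m)$ with $\p'=\p W$ and $\q'=\q W$. Given any test $\t'\in[0,1]^m$, let $\t\in\bbR^n$ be the vector with $t_i\eqdef\sum_{j\in[m]}W_{ij}t'_j$. Since $W$ is row-stochastic and the entries of $\t'$ lie in $[0,1]$, each $t_i$ is a convex combination of numbers in $[0,1]$, so $\t\in[0,1]^n$; moreover $\p'\t'^{T}=\p W\t'^{T}=\p\t^{T}$ and likewise $\q'\t'^{T}=\q\t^{T}$. Hence $(\p'\t'^{T},\q'\t'^{T})\in\cT(\p,\q)$, and since $\t'$ was arbitrary, $\cT(\p',\q')\subseteq\cT(\p,\q)$.

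For the converse I would argue by contraposition using linear programming duality. Let $\mathcal{C}\eqdef\{(\p W,\q W):W\in\cS(n,m)\}\subseteq\bbR^m\times\bbR^m$; this is the image of the compact convex polytope $\cS(n,m)$ (a product of simplices) under a linear map, hence compact and convex. Suppose $(\p',\q')\notin\mathcal{C}$. By the separating hyperplane theorem there are $\vec\mu,\vec\nu\in\bbR^m$ with $\vec\mu\cdot\p'+\vec\nu\cdot\q'>\sup_{W\in\cS(n,m)}(\vec\mu\cdot\p W+\vec\nu\cdot\q W)$. Writing the right-hand side as $\sum_{i\in[n]}\sum_{j\in[m]}W_{ij}(p_i\mu_j+q_i\nu_j)$ and maximising each row of $W$ separately shows this supremum equals $\sum_{i\in[n]}\max_{j\in[m]}(p_i\mu_j+q_i\nu_j)$; since $\mu_j p'_j+\nu_j q'_j\le\max_{k\in[m]}(p'_j\mu_k+q'_j\nu_k)$ for every $j$, we arrive at
\begin{align}
 \sum_{i\in[n]}g(p_i,q_i)\;<\;\sum_{j\in[m]}g(p'_j,q'_j), \qquad g(x,y)\eqdef\max_{j\in[m]}(\mu_j x+\nu_j y).
\end{align}

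Now the plan is to convert this into an obstruction witnessed by a single threshold test. The function $g$ is convex, piecewise linear and positively homogeneous, so the one-variable convex piecewise-linear function $\ell\mapsto g(\ell,1)$ decomposes as a linear function plus a nonnegative combination of ramps; after rescaling by homogeneity this gives $g(x,y)=\alpha y+\beta x+\sum_k c_k\,(x-\lambda_k y)_+$ with $c_k\ge 0$, valid on the half-plane $\{y\ge 0\}$ that contains all the points $(p_i,q_i)$ and $(p'_j,q'_j)$. Because $\p,\q,\p',\q'$ are probability vectors, the linear part $\alpha y+\beta x$ contributes $\alpha+\beta$ to both sides, so the displayed inequality reduces to $\sum_k c_k\sum_{i}(p_i-\lambda_k q_i)_+<\sum_k c_k\sum_{j}(p'_j-\lambda_k q'_j)_+$. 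Hence there is an index $k$ with $c_k>0$ and, setting $\lambda\eqdef\lambda_k$, with $\sum_i(p_i-\lambda q_i)_+<\sum_j(p'_j-\lambda q'_j)_+$; here $\lambda\ge 0$, since for $\lambda<0$ both sums equal $1-\lambda$. Finally, $\sum_i(p_i-\lambda q_i)_+=\max_{\t\in[0,1]^n}(\p\t^{T}-\lambda\q\t^{T})=\max_{(a,b)\in\cT(\p,\q)}(a-\lambda b)$, whereas the threshold test $\t'\in\{0,1\}^m$ with $t'_j=1$ iff $p'_j>\lambda q'_j$ produces a point $(a,b)=(\p'\t'^{T},\q'\t'^{T})\in\cT(\p',\q')$ with $a-\lambda b=\sum_j(p'_j-\lambda q'_j)_+$, strictly exceeding that maximum. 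So this point lies outside $\cT(\p,\q)$, contradicting $\cT(\p',\q')\subseteq\cT(\p,\q)$.

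The converse is the substantive direction, and the delicate point is precisely this last reduction. It is tempting to decompose the bivariate envelope $g$ directly into nonnegative combinations of hockey-stick functions $(x,y)\mapsto(sx+ty)_+$ plus a linear term, but this is false in general; the remedy is to first use positive homogeneity to pass to a single variable, where the decomposition into ramps is elementary. An alternative, duality-free route is to construct the channel $W$ explicitly: sort both pairs by decreasing likelihood ratio, pass to the common refinement of the two partitions of $[0,1]$ cut out by the cumulative first marginals (so that the two refined first marginals coincide), observe that $\cT(\p',\q')\subseteq\cT(\p,\q)$ then becomes a coordinatewise dominance of the cumulative second marginals, and assemble $W$ by a transport/corner-cutting argument; I expect the duality proof to be the cleaner one to write up in full rigour.
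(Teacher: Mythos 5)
The paper does not prove this statement at all: it is quoted from Blackwell~\cite{blackwell1953} as a known characterisation, so there is no in-paper argument to compare against. Your proof is correct and self-contained, and it follows the standard Blackwell--Sherman--Stein route. The forward direction (pulling a test $\t'$ back through $W$ to the test $\t$ with $t_i=\sum_j W_{ij}t'_j$) is exactly right. In the converse, the separation step is sound: $\mathcal{C}$ is the linear image of the compact convex product of simplices $\cS(n,m)$, the row-wise maximisation correctly evaluates the support function as $\sum_i g(p_i,q_i)$ with $g(x,y)=\max_j(\mu_j x+\nu_j y)$, and the inequality $\sum_i g(p_i,q_i)<\sum_j g(p'_j,q'_j)$ follows. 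The delicate point you flag is indeed the only one requiring care, and you handle it correctly: reducing the positively homogeneous convex envelope to the one-variable function $h(\ell)=g(\ell,1)$, decomposing $h$ as affine plus nonnegative ramps at its breakpoints, and multiplying back by $y$ gives $g(x,y)=\alpha y+\beta x+\sum_k c_k(x-\lambda_k y)_+$ on $\{y\ge 0\}$ (with equality at $y=0$ by continuity). The normalisation $\sum_i p_i=\sum_i q_i=1$ kills the affine part on both sides, the observation that breakpoints $\lambda_k<0$ contribute $1-\lambda_k$ to both sides lets you isolate some $\lambda\ge 0$ with $\sum_i(p_i-\lambda q_i)_+<\sum_j(p'_j-\lambda q'_j)_+$, and the identification of $\sum_i(p_i-\lambda q_i)_+$ with $\max_{(a,b)\in\cT(\p,\q)}(a-\lambda b)$, achieved by a threshold test, converts this into a point of $\cT(\p',\q')$ outside $\cT(\p,\q)$. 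In effect you have rederived the equivalent characterisation in terms of hockey-stick (or $E_\gamma$) divergences, which is one of the two classical proofs; the constructive Lorenz-curve/transport argument you sketch at the end is the other. Either would be an acceptable replacement for the citation.
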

The theorem above provides a geometric characterisation of relative majorisation; that is,  $(\p,\q) \majo (\p',\q')$ if and only if the lower Lorenz curve of $(\p,\q)$ is nowhere above the lower Lorenz curve of $(\p',\q')$.

Finally, for any $\p \neq \q$, we note that as a consequence of the equipartition property or weak law of large numbers (see, e.g., ~\cite[Theorem~11.8.2]{cover91}) any point in $(0,1)^2$ is covered by the testing region $\cT(\p^{\otimes n}, \q^{\otimes n})$ for sufficiently large $n \in \bbN$.


\section{Axioms for entropies, divergences, and relative entropies}
\label{sec:axioms}

As will become evident later, there is a one-to-one correspondence between entropies and relative entropies. Here we however introduce their axioms independently. We will also introduce divergences, which we call quantities that satisfy data-processing but are not necessarily additive.

\subsection{Entropies} \label{sec:axiom-ent}

Here we consider a class of functions
\begin{align}
	\H: \bigcup_{n \in \bbN} \cP(n) \to \bbRp 
	\label{eq:entropy}
\end{align}
that map probability vectors in all finite dimensions to the positive reals. For entropies we have the following two main desiderata. First, as entropies are uncertainty measures, they should be non-decreasing when we apply channels that simply randomly rearrange labels. As we have seen, this relation can be captured by mixing channels and the majorisation relation.
\begin{description}
	\item[Monotonicity under mixing:] For any $n, m \in \bbN$, $\p \in \cP(n)$ and $\p' \in \cP(m)$ such that $\p \majo \p'$, we have
	\begin{align}
		\H(\p) \leq \H(\p') \,.
	\end{align}
\end{description}
Alternatively $\p'$ can be seen as the output of a mixing channel acting on the support of $\p$ (cf.\ Lemma~\ref{lem:hardy}).
Second, we require entropies to be additive for tensor products of probability distributions.
\begin{description}
	\item[Additivity:] For any $m, n \in \bbN$, $\p_1 \in \cP(n)$, and $\p_2 \in \cP(m)$, we have
	\begin{align}
		\H(\p_1 \otimes \p_2) = \H(\p_1) + \H(\p_2) \,.
	\end{align}
\end{description}

While this requirement is very natural for entropies that have an information-theoretic interpretations, it is in general not satisfied by Tsallis entropies~\cite{tsallis88}, for example.


\begin{definition}
	A function $\H$ of the form~\eqref{eq:entropy} that satisfies monotonicity under mixing and additivity, and is normalised such that  $\H(\uu{2}) = \log 2$, is called an \emph{entropy}.
\end{definition} 

A similar axiomatic definition of entropy has recently been considered in~\cite{gour19}.
The following immediate consequence of these two properties is worth pointing out.
\begin{lemma} \label{lem:entropy}
	Let $\H$ be an entropy and $n \in \bbN$. For all $i \in [n]$ and $\p \in \cP(n)$, we have 
	\begin{align}
		0 = \H \big({\ee{i}}^{(n)}\big) \leq H(\p) \leq  \H \big(\uu{n} \big) = \log n \,.
	\end{align}
\end{lemma}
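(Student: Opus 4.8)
The plan is to derive all three (in)equalities directly from the two axioms, using the majorisation relation to sandwich an arbitrary $\p \in \cP(n)$ between the deterministic distribution and the uniform distribution. The key observation is that for every $\p \in \cP(n)$ we have the majorisation chain ${\ee{i}}^{(n)} \majo \p \majo \uu{n}$: the deterministic vector ${\ee{i}}^{(n)}$ majorises everything since its decreasing rearrangement is $[1,0,\ldots,0]$, and every $\p \in \cP(n)$ majorises the uniform vector $\uu{n}$ since the partial sums $\sum_{x\in[k]} p_x^{\downarrow}$ are at least $k/n$ by a standard averaging argument. Applying monotonicity under mixing to this chain immediately yields $\H({\ee{i}}^{(n)}) \leq \H(\p) \leq \H(\uu{n})$.

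It then remains to evaluate the two extreme values. For the deterministic distribution, I would note that ${\ee{i}}^{(n)} \otimes {\ee{j}}^{(m)}$ is again a deterministic distribution (all mass on one coordinate), so additivity gives $\H({\ee{1}}^{(1)}) = \H({\ee{1}}^{(1)}) + \H({\ee{1}}^{(1)})$, forcing $\H({\ee{1}}^{(1)}) = 0$; since any deterministic ${\ee{i}}^{(n)}$ satisfies ${\ee{i}}^{(n)} \sim {\ee{1}}^{(1)}$ under majorisation (both have decreasing rearrangement $[1,0,\ldots]$), monotonicity applied in both directions gives $\H({\ee{i}}^{(n)}) = 0$. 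Together with the already-established non-negativity of $\H$ (its codomain is $\bbRp$), this also shows $0$ is the minimum, consistent with the displayed equation.

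For the uniform value, the goal is $\H(\uu{n}) = \log n$. The natural route is to use $\uu{n} \otimes \uu{m} = \uu{nm}$, so that the function $n \mapsto \H(\uu{n})$ is additive over products of integers: writing $g(n) \eqdef \H(\uu{n})$, we get $g(nm) = g(n) + g(m)$. Monotonicity under mixing gives $g$ is non-decreasing in $n$ (since $\uu{n} \majo \uu{m}$ whenever $n \leq m$, as one checks from the partial-sum condition), and the normalisation $g(2) = \H(\uu{2}) = \log 2$ fixes the constant. A classical lemma (the multiplicative Cauchy functional equation for monotone functions on $\bbN$, essentially Erd\H{o}s's argument) then forces $g(n) = \log n$. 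I would spell out the standard squeeze: for any $n$ and large $k$, choose $m$ with $2^m \leq n^k < 2^{m+1}$, apply monotonicity and additivity to get $m \log 2 \leq k\, g(n) \leq (m+1)\log 2$, divide by $k$ and let $k \to \infty$ to conclude $g(n) = \log n$.

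The main obstacle is the evaluation $\H(\uu{n}) = \log n$: everything else is a one-line consequence of monotonicity, but pinning down the uniform value requires the functional-equation argument above, and one must be a little careful that monotonicity under mixing indeed yields monotonicity of $g$ in the integer argument (this needs the $n \leq m \implies \uu{n} \majo \uu{m}$ claim, which in turn needs Lemma~\ref{lem:hardy} or a direct partial-sum check). Once that is in place, the squeeze argument is routine.
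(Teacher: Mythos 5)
Your proof is correct and follows essentially the same route as the paper: the sandwich ${\ee{i}}^{(n)} \majo \p \majo \uu{n}$ plus monotonicity for the inequalities, additivity plus majorisation-equivalence with the trivial distribution for $\H({\ee{i}}^{(n)})=0$, and the monotone multiplicative functional equation with the dyadic squeeze for $\H(\uu{n})=\log n$. The only cosmetic difference is the bookkeeping of the squeeze (you bracket $n^k$ between powers of $2$, the paper brackets the exponent with floors and ceilings), which is the same argument.
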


\begin{proof}
	The inequalities follow from the monotonicity under mixing and the relation ${\ee{1}}^{(n)} \majo \p \majo \uu{n}$, which can easily be verified for all $i \in [n]$ and $\p \in \cP(n)$. 
	
	It remains to show the two equalities. On the one hand, using additivity we immediately find that $H(1) = 0$. Since $1 \majo {\ee{i}}^{(n)} \majo 1$ for all $n \in \bbN$, $i \in [n]$, monotonicity under mixing yields the desired equality for all deterministic distributions. On the other hand, define $f: \mathbb{N} \to \mathbb{R}$ as $f(n) = \H(\uu{n})$. By normalisation we have $f(2) = \log 2$, and, thus, by additivity,  $f(2^k) = k \log 2$ for all $k \in \bbN$. Moreover, since $\uu{n} \majo \uu{n+1}$ for all $n \in \bbN$, the function $f$ is monotonically non-decreasing. Using these properties, we can show that, for all $\ell \in \mathbb{N}$,
\begin{align}
	f(n) = \frac{1}{\ell} f( n^{\ell} )\leq \frac{1}{\ell} f\left( 2^{ \lceil \frac{\log n}{\log 2} \ell \rceil} \right) = \frac{1}{\ell} \left\lceil \frac{\log n}{\log 2} \ell \right\rceil \log 2
\end{align}
and similarly $f(n) \geq \frac{1}{\ell} \big\lfloor \frac{\log n}{\log 2} \ell \big\rfloor \log 2$. In the limit $\ell \to \infty$ both these bounds converge to $\log n$, concluding the proof.
\end{proof}


\subsection{Divergences and relative entropies}
\label{sec:divergences}

Let us now consider a class of functions 
\begin{align}
	\D: \bigcup_{n\in \bbN} \left\{\cP(n) \times \cP(n) \right\} \to \bbRp \cup \{\infty\} \label{eq:dd}
\end{align}
that map pairs of probability vectors in all finite dimensions to the positive reals or its extension to $+\infty$.  We impose two important restrictions on such functions. The first, monotonicity under data-processing, requires the divergence to be non-increasing under application of the same channel on both arguments. Intuitively we would like to think of $\D$ as a measure of distinguishability of the second argument from the first, and thus require that application of noise (modelled by a channel) cannot make the two distributions easier to distinguish.

\begin{description}
	\item[Monotonicity under data-processing:] For any $m,n \in \bbN$, $\p,\q \in \cP(n)$ and $\p', \q' \in \cP(m)$ such that $(\p,\q) \majo (\p', \q')$, we have
	\begin{align}
		\D(\p' \| \q' ) \leq \D(\p \| \q) \,. \label{eq:DPI}
	\end{align}
\end{description}
Alternatively, using the definition of relative majorisation, $\p' = \p W$ and $\q' =\q W$ can be seen as output of some channel $W$.
This is also called the \emph{data-processing inequality} (DPI). 
	
Clearly the DPI at most determines $\D$ up to additive and multiplicative constants, which we can remove by appropriate normalisation.  The multiplicative freedom usually boils down to a choice of units (e.g.\ bits or nats).
To remove the additive freedom let us start with an immediate observation about functions satisfying Eq.~\eqref{eq:DPI}. For all $m, n \in \bbN$,  $\p,\q \in \cP(n)$, and $\vec{c} \in \cP(m)$, the DPI applied for the channel with constant output $\vec{c}$ establishes that $\D(\vec{c}\|\vec{c}) \leq \D(\p\|\q)$. Hence, such functions necessarily take their minimum value whenever the two arguments agree. The natural choice for normalisation is thus to set this minimum value to be zero.
		
\begin{definition}
	A function $\D$ of the form~\eqref{eq:dd} that satisfies monotonicity under data-processing, and is normalised such that $\D(1\|1) = 0$, is called a \emph{monotone divergence}.
\end{definition}
We note that in the statistics literature the term \emph{divergence} is often used to denote faithful functionals that do not necessarily satisfy monotonicity under data-processing, the most prominent example being the Bregman divergences~\cite{bregman67}. Note that for any monotone divergence we must have $\D(\p\|\p) = 0$ for any $\p$ as discussed above. However, faithful functionals vanish if and only if the two arguments agree, a stronger requirement than what we impose here. We will discuss faithfulness in Section~\ref{sec:faithful}.
	
The second property simply requires that the functions are additive for product distributions.  
\begin{description}
	\item[Additivity:] For any $m,n \in \bbN$, $\p_1,\q_1 \in \cP(n)$, and $\p_2,\q_2 \in \cP(m)$, we have
	\begin{align}
		\D\big(\p_1 \otimes \p_2 \big\| \q_1 \otimes \q_2 \big) = \D(\p_1 \| \q_1) + \D(\p_2 \| \q_2) \,. \label{eq:additivity}
	\end{align}
\end{description}

For a relative entropy, in addition to monotonicity under data-processing and additivity, we also require normalisation. Additivity of $\D$ fixes the additive normalisation (see Lemma~\ref{lem:relent-divergence} below), so it remains to remove the multiplicative freedom. We do this by requiring that $\D(\ee{1}  \| \uu{2} ) = \log 2$. This choice is  consistent with the normalisation of R\'enyi and Kullback-Leibler divergences and, in contrast to the DPI and additivity, breaks the symmetry between the two arguments. 

\begin{definition}
	A function $\D$ of the form~\eqref{eq:dd} that satisfies both monotonicity under data-processing and additivity, and is normalised such that $\D(\ee{1}  \| \uu{2} ) = \log 2$, is called a \emph{relative entropy}.
\end{definition}

\begin{lemma}
	\label{lem:relent-divergence}
	Every relative entropy is a monotone divergence.
\end{lemma}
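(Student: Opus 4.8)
The plan is to derive the single ingredient that a relative entropy's definition does not literally include, namely the normalisation $\D(1\|1) = 0$, using additivity together with the data-processing inequality. Monotonicity under data-processing is then inherited verbatim, so once the normalisation is in place there is nothing left to prove.

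First I would establish that $\D(1\|1)$ is finite. Applying the DPI to the constant channel in $\cS(2,1)$ with output $1$ (equivalently, invoking the observation recorded just before the definition of a monotone divergence, that $\D(\vec{c}\|\vec{c}) \le \D(\p\|\q)$ for all $\p,\q$ and all constant outputs $\vec{c}$), we obtain $\D(1\|1) \le \D(\ee{1}\|\uu{2}) = \log 2 < \infty$, where the equality is precisely the normalisation imposed on a relative entropy. Next, since the one-dimensional probability vector $1 = [1]$ satisfies $1 \otimes 1 = 1$, additivity with $\p_1 = \p_2 = \q_1 = \q_2 = 1$ gives $\D(1\|1) = \D(1\|1) + \D(1\|1) = 2\,\D(1\|1)$; combined with finiteness from the previous step, this forces $\D(1\|1) = 0$. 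Together with monotonicity under data-processing, which holds by assumption, this exhibits $\D$ as a monotone divergence.

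I do not expect a genuine obstacle here. The only point that needs care is the finiteness claim: without it, the identity $x = 2x$ read in $\bbRp \cup \{\infty\}$ also admits the spurious solution $x = \infty$, so one must rule this out before concluding, and the constant-channel instance of the DPI together with the $\log 2$ normalisation does exactly that.
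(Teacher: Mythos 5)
Your proof is correct and takes essentially the same route as the paper: both reduce the claim to showing $\D(1\|1)=0$ via additivity applied to the trivial one-dimensional vector, with the normalisation $\D(\ee{1}\|\uu{2})=\log 2$ supplying the finiteness needed to cancel. Your explicit exclusion of the spurious solution $\D(1\|1)=\infty$ via the constant-channel DPI is a welcome bit of extra care that the paper's one-line argument leaves implicit.
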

\begin{proof}
	Since any relative entropy satisfies the DPI, it remains to show normalisation. By additivity $\D( \p \| \q) = \D(\p \otimes 1 \| \q \otimes 1) = \D( \p \| \q) + \D(1\|1)$, and thus $\D(1\|1)$ must vanish.
\end{proof}

We can classify relative entropies depending on how they behave under an exchange of arguments, i.e.\ we say that a relative entropy is
\emph{symmetric} if $\D( \uu{2} \| \ee{1}) \in (0, \infty)$. In this case we can define its dual relative entropy, 
	\begin{align}
		\D_{*}(\p\|\q) \eqdef  \frac{\D(\q\|\p)}{\D( \uu{2} \| \ee{1})} \,.
	\end{align}
We call it \emph{asymmetric} if $\D( \uu{2} \| \ee{1}) = \infty$ and \emph{pathological} if $\D (\uu{2} \| \ee{1} ) = 0$. The latter are obviously not faithful.


\subsection{R\'enyi relative entropies and entropies}

A one-parameter family of relative entropies has been introduced by R\'enyi~\cite{renyi61}.\footnote{Note that they are usually called R\'enyi divergences in the literature, but in our framework they are called R\'enyi relative entropies.} Notably in his seminal paper R\'enyi derived the relative entopies based on a set of mathematical axioms that included additivity and equivalence under reordering, which is a special case of the data-processing inequality. However, some of the other axioms used by R\'enyi do not readily allow for an information-theoretic interpretation. R\'enyi relative entropies have found various applications in information theory, e.g., they directly characterise generalised cutoff rates in hypothesis testing~\cite{csiszar95}. 

\begin{definition}
	Let $\alpha \in (0,1) \cup (1,\infty)$. Then, for every $n \in \bbN$ and $\p, \q \in \cP(n)$, the \emph{R\'enyi relative entropy of order $\alpha$} is defined as
	\begin{align}
		D_{\alpha}(\p\|\q) \eqdef \frac{1}{\alpha-1} \log \left( \sum_{i \in [n]} p_i^{\alpha} q_i^{1-\alpha} \right) ,
	\end{align}
	whenever this expression is well-defined, and $+\infty$ otherwise.
	Moreover, the R\'enyi relative entropies of order $\alpha \in \{0,1,\infty\}$ are defined as point-wise limits.
\end{definition}

See~\cite{vanerven14} for a recent review of many more of their properties, and~\cite{vanerven10} specifically for a discussion of the relation between R\'enyi divergence and relative majorisation. All $D_{\alpha}$ with $\alpha > 0$ are continuous (in the sense introduced in the preliminaries) on $\cP(n) \times \cPp(n)$ whereas $D_0$ is trivial on $\cPp(n) \times \cPp(n)$ and has discontinuities on the boundary of the first argument.

For $n \in \bbN$ and $\p, \q \in \cP(n)$, the Kullback-Leibler relative entropy is obtained in the limit $\alpha \nearrow 1$ as
\begin{align}
  D(\p\|\q) := D_1(\p\|\q) = \sum_{x \in [n]} p_x \log \frac{p_x}{q_x} \,.
\end{align}
We in particular note that $\alpha \to D_{\alpha}(\p\|\q)$ is monotonically non-decreasing in $\alpha$. This justifies the identification
\begin{align}
 	D_{\min}(\p\|\q) &\eqdef D_{0}(\p\|\q) = -\log \sum_{i \in \supp(p) } q_i \label{eq:dmin} , \\
	D_{\max}(\p\|\q) &\eqdef D_{\infty}(\p\|\q) = \log \max_{i \in [n]} \frac{p_i}{q_i}  \label{eq:dmax} ,
\end{align}
which we call the \emph{min-relative entropy} and \emph{max-relative entropy}, respectively. One of our main results shown in the next section (see Corollary~\ref{cor:bounds}) is that these two relative entropies bound any relative entropy, not just R\'enyi relative entropies.

R\'enyi entropies can now be constructed via by the correspondence that will be discussed in detail in Section~\ref{sec:biject}. 
\begin{definition}
	Let $\alpha \in [0, \infty]$. Then, for every $n \in \bbN$ and $\p \in \cP(n)$, the \emph{R\'enyi entropy of order $\alpha$} is defined as
	\begin{align}
		H_{\alpha}(\p) := D_{\alpha}(\ee{1} \| \uu{n})  - D_{\alpha}(\p \| \uu{n}) \label{eq:renent} \,.
	\end{align}
\end{definition}
For any $\alpha \in [0,1) \cup (1,\infty)$, the expression in~\eqref{eq:renent} simplifies to the well-known formula 
\begin{align}
	H_{\alpha}(\p) = \frac{1}{1-\alpha} \log \sum_{i \in \supp(p)} p_i^{\alpha} \,.
\end{align}
The Shannon entropy~\cite{shannon48} emerges for $\alpha = 1$, and for $\alpha = \infty$ we find $H_{\infty}(\p) = -\log \min_{i \in [n]} p_i$.


\section{Continuity and bounds on relative entropies}
\label{sec:bounds}

In this section we will establish some bounds on relative entropies that will allow us to show several strong continuity properties that follow from our axioms.

\subsection{Bounds on monotone divergences}

We first establish some general bounds on monotone divergences leveraging extensively on the data-processing inequality.

\begin{definition} \label{def:minmax} 
	Let $\D$ be a monotone divergence. We define the following two derived quantities:
	\begin{align}
		\D_{\min}(\p\|\q) &\eqdef \D\big( \ee{1} \big\| [  \lambda_{\min},1 - \lambda_{\min} ] \big) , \quad  \textrm{and}\\
		\D_{\max}(\p\|\q) &\eqdef \D\big( \ee{1} \big\| [ \lambda_{\max}  ,1 - \lambda_{\max}  ] \big) ,
	\end{align}
	where $\lambda_{\min} = 2^{-D_{\min}(\p\|\q)}$ and $\lambda_{\max} = 2^{-D_{\max}(\p\|\q)}$.
\end{definition}

\begin{theorem}
	\label{thm:bounds}
	Let $\D$ be a monotone divergence. Then, $\D_{\min}$ and $ \D_{\max}$ as defined in Definition~\ref{def:minmax} are also monotone divergences.
	Furthermore, for all $n \in \bbN$ and $\p,\q \in \cP(n)$, we have
	\begin{align}
		 \D_{\min}(\p\|\q) \leq \D(\p\|\q) \leq  \D_{\max}(\p\|\q) \,. \label{eq:bounds-ineq}
	\end{align}
\end{theorem}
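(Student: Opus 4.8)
The plan is to prove the claim in three stages: first that $\D_{\min}$ and $\D_{\max}$ are themselves monotone divergences, then the two inequalities in \eqref{eq:bounds-ineq}.

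For the first stage, I would note that $\D_{\min}(\p\|\q)$ and $\D_{\max}(\p\|\q)$ depend on $(\p,\q)$ only through the scalars $D_{\min}(\p\|\q)$ and $D_{\max}(\p\|\q)$, which are the R\'enyi relative entropies of order $0$ and $\infty$ given explicitly in \eqref{eq:dmin} and \eqref{eq:dmax}. Since $D_0$ and $D_\infty$ are known to satisfy the data-processing inequality (they are R\'enyi relative entropies), whenever $(\p,\q) \majo (\p',\q')$ we have $D_{\min}(\p'\|\q') \le D_{\min}(\p\|\q)$, hence $\lambda_{\min}(\p',\q') \ge \lambda_{\min}(\p,\q)$, and similarly for $\lambda_{\max}$. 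So it suffices to show that $\lambda \mapsto \D(\ee{1} \| [\lambda, 1-\lambda])$ is monotonically non-increasing in $\lambda \in (0,1]$ (and that it vanishes at $\lambda = 1$, for the normalisation $\D(1\|1)=0$ — here one must be slightly careful since $[\,1,0\,]$ is the two-dimensional vector $\ee{1}$, and $\D(\ee{1}\|\ee{1}) = \D(1\|1) = 0$ by the minimum-value property established before Definition on monotone divergences). Monotonicity in $\lambda$ follows from the DPI: if $\lambda \le \lambda'$, then $(\ee{1}, [\lambda, 1-\lambda]) \majo (\ee{1}, [\lambda', 1-\lambda'])$, which one checks via the testing-region (Blackwell) characterisation — the Lorenz curve of the pair with the larger first/second-component ratio sits above — or directly by exhibiting a channel $W$ that fixes $\ee{1}$ while pushing $\lambda$ up to $\lambda'$. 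Composing this with the DPI for $D_0$/$D_\infty$ gives that $\D_{\min}$ and $\D_{\max}$ satisfy DPI; the normalisation $\D_{\min}(1\|1) = \D_{\max}(1\|1) = 0$ is immediate since $\lambda_{\min}(1,1) = \lambda_{\max}(1,1) = 1$.

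For the inequalities \eqref{eq:bounds-ineq}, the idea is that $(\p,\q)$ relatively majorises (resp.\ is relatively majorised by) an explicit binary pair achieving the same $D_{\min}$ (resp.\ $D_{\max}$) value. For the lower bound: a hypothesis test $\t$ supported on $\supp(\p)$ gives $\p\t^T = 1$ and $\q\t^T = \sum_{i \in \supp(\p)} q_i = \lambda_{\min}$, so the point $(1, \lambda_{\min})$ lies in $\cT(\p,\q)$; this forces $\cT(\ee{1}\|[\lambda_{\min},1-\lambda_{\min}]) \subseteq \cT(\p,\q)$ since the former testing region is exactly the triangle with vertices $(0,0),(0,1-\lambda_{\min}),(1,1)$ — actually one should verify its lower Lorenz curve is the segment from $(0,0)$ to $(1,\lambda_{\min})$ then to $(1,1)$... more carefully, order likelihood ratios: $\ee{1}$ vs $[\lambda_{\min},1-\lambda_{\min}]$ has ratios $1/\lambda_{\min} > 0$, giving lower-Lorenz vertices $(0,0), (1,\lambda_{\min}), (1,1)$, the smallest possible region through $(1,\lambda_{\min})$. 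By Blackwell's theorem $(\p,\q) \majo (\ee{1} \| [\lambda_{\min},1-\lambda_{\min}])$, and DPI gives $\D_{\min}(\p\|\q) = \D(\ee{1}\|[\lambda_{\min},1-\lambda_{\min}]) \le \D(\p\|\q)$. For the upper bound, symmetrically: let $i^*$ attain $\max_i p_i/q_i = 1/\lambda_{\max}$; the test $\t = \ee{i^*}$ gives $(p_{i^*}, q_{i^*}) \in \cT(\p,\q)$ with $p_{i^*}/q_{i^*} = 1/\lambda_{\max}$, so the lower Lorenz curve of $(\p,\q)$ has initial slope $1/\lambda_{\max}$, hence $\cT(\p,\q)$ is contained in the largest region with that initial slope, which is the testing region of $(\ee{1}\|[\lambda_{\max},1-\lambda_{\max}])$; thus $(\ee{1}\|[\lambda_{\max},1-\lambda_{\max}]) \majo (\p,\q)$ and DPI yields $\D(\p\|\q) \le \D_{\max}(\p\|\q)$.

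The main obstacle I anticipate is the geometric bookkeeping in the last stage: getting the inclusions of testing regions exactly right, in particular handling the degenerate/boundary cases ($\lambda_{\min}$ or $\lambda_{\max}$ equal to $0$ or $1$, i.e.\ $D_{\min} = \infty$ or $D_{\max} = 0$, which happen e.g.\ when $\p \not\gg \q$ or $\p = \q$), and making sure the binary comparison distributions are genuinely the extremal ones among all pairs with the prescribed initial or final Lorenz slope. Invoking Blackwell's theorem cleanly requires the dimension-matching to be handled (our $(\p,\q)$ lives in $\cP(n)$, the comparison pair in $\cP(2)$), but relative majorisation across dimensions is already built into the definition via channels $W \in \cS(n,m)$, so this is not a real difficulty. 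A secondary subtlety is confirming that $\lambda \mapsto \D(\ee{1}\|[\lambda,1-\lambda])$ is well-defined and finite for $\lambda \in (0,1)$ — but $[\lambda,1-\lambda] \in \cPp(2)$ so this is covered by the standing assumption that such functions are finite on the interior, and at $\lambda = 1$ it equals $0$ as noted; everything is consistent.
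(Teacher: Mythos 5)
Your proposal is correct, and its overall architecture matches the paper's: reduce everything to binary comparison pairs, show that $\lambda \mapsto \D(\ee{1}\|[\lambda,1-\lambda])$ is non-increasing via a channel that fixes $\ee{1}$ while increasing the first component of the second argument, and combine this with the DPI of $D_{\min}$ and $D_{\max}$ to get the DPI for $\D_{\min}$ and $\D_{\max}$. For the two inequalities the paper works constructively: it exhibits an explicit channel $E \in \cS(n,2)$ (the two-outcome test on $\supp(\p)$, which is exactly the test you describe) for the lower bound, and an explicit channel $F \in \cS(2,n)$ with rows $\p$ and $(\q - \lambda_{\max}\p)/(1-\lambda_{\max})$ for the upper bound, handling $\lambda_{\max}=1$ separately. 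You instead route both inequalities through Blackwell's testing-region characterisation, which for the upper bound amounts to observing that the lower Lorenz curve of $(\p,\q)$ is convex with initial slope $\lambda_{\max}$ and therefore lies above the line $b = \lambda_{\max}a$, i.e.\ above the Lorenz curve of $(\ee{1},[\lambda_{\max},1-\lambda_{\max}])$. Both routes are sound and essentially equivalent via Blackwell's theorem; the explicit-channel version is self-contained and needs no convexity argument, while your geometric version makes transparent why the binary pairs in Definition~\ref{def:minmax} are the extremal ones. Two small points of bookkeeping: in the paper's $(a,b)$ convention the initial slope is $\lambda_{\max} = \min_i q_i/p_i$ rather than $1/\lambda_{\max}$ (you flag this yourself), and the degenerate case $\lambda_{\max}=0$ (disjointly supported mass in $\p$) causes no trouble since the testing region of $(\ee{1},\ee{2})$ is the whole unit square.
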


\begin{proof}
	To show the DPI for $\D_{\min}$ it suffices to show that $\D_{\min}(\p\|\q) \geq \D_{\min}(\p W\|\q W)$ for any channel $W \in \cS(n)$. 
	For this purpose, observe first that for any two binary probability distributions
	$\p'=(p',1-p')$ and $\q'=(q',1-q')$ there exists a channel $V \in \cS(2)$ satisfying
	\begin{align}
		\label{eq:iff2}
		\ee{1}V = \ee{1} \quad \text{and} \quad \p' V = \q'
	\end{align}
	if and only if $p' \leq q'$.\footnote{This map is trivial if $p' = 1$ and defined by the first condition in~\eqref{eq:iff2} and the relation $\ee{2}V = \alpha \ee{1} + (1-\alpha) \ee{2}$ with $\alpha = \frac{q'-p'}{1-p'}$ otherwise.}
	Moreover, since $D_{\min}(\p\|\q)$ is a monotone divergence, the DPI ensures that
	\begin{align}
		2^{-D_{\min}(\p\|\q)} &\leq 2^{-D_{\min}(\p W\|\q W)} \,.
	\end{align}
	Hence, by~\eqref{eq:iff2} there must exist a channel $V$ that keeps $\ee{1}$ intact and
	satisfies
	\begin{align}
		&\big[ 2^{-D_{\min}(\p\|\q)}, 1 - 2^{-D_{\min}(\p\|\q)} \big] V \notag\\
		&\quad = \big [2^{-D_{\min}(\p W\|\q W)}, 1 - 2^{-D_{\min}(\p W\|\q W)}\big] \,.
	\end{align}
	A close inspection of the respective definitions of $\D_{\min}(\p\|\q)$ and $\D_{\min}(\p W\|\q W)$ then reveals that the desired relation
	follows from the DPI of $\D$ applied for the channel $V$. 
	We assert that the proof for $\D_{\max}$ follows analogously. 

	We next show the two inequalities in~\eqref{eq:bounds-ineq}. To determine the lower bound on $\D(\p\|\q)$, we define the channel $E \in \cS(n,2)$ via its action on $\vec{r} \in \cP(n)$ as
	\begin{align}
		\vec{r} E = \underbrace{ \sum_{x \in \supp(\p)} r_x }_{2^{-D_{\min}(\p\|\vec{r})}} \ee{1} + 
		 \underbrace{\sum_{x\not\in\supp(\p)} r_x}_{1-2^{-D_{\min}(\p\|\vec{r})}}  \ee{2}  \,.
	\end{align}
	In particular, $\vec{p} E = \ee{1}$. Hence, the DPI for $\D$ reveals that
	\begin{align}
		\D(\p\|\q) \geq \D\big(E\p\| E\q\big) = \D_{\min}(\p\|\q) \,.
	\end{align}
	
	For the upper bound on $\D(\p\|\q)$, we recall the shorthand $\lambda_{\max} = 2^{-D_{\max}(\p\|\q)}$ from Definition~\ref{def:minmax}, and note that $\lambda_{\max} \in (0,1]$ and $\q \geq \lambda_{\max} \p$ element-wise by definition of $D_{\max}$. Consider first the case $\lambda_{\max} < 1$.
	Define now a channel $F \in \cS(2,n)$ whose rows are given by 
	\begin{align}
	\ee{1} F = \p, \quad \textrm{and} \quad \ee{2} F = \frac{\q - \lambda_{\max} \p}{1 - \lambda_{\max}} \,.
	\end{align}
	Defining now $\tilde{\q} \in \cP(2)$ as $\tilde{\q} =  [\lambda_{\max}, 1 - \lambda_{\max}]$,
	we then observe that $\tilde{\q}F = \q$. Hence, the DPI of $\D$ for $F$ yields
	\begin{align}
		\D_{\max}(\p\|\q) = \D(\ee{1}\|\tilde{\q}) \geq \D(\p\|\q) \,.
	\end{align}
	If $\lambda_{\max} = 1$, we can deduce that $\p = \q$ and thus all monotone divergences vanish, concluding the proof.
\end{proof}

\subsection{Bounds on relative entropies}

For relative entropies we can simplify the expressions for $\D_{\min}$ and $\D_{\max}$ further. For this purpose we next establish a general expression for relative entropies when the first argument is deterministic.

\begin{lemma} \label{lem:deterministic}
Let $\D$ be a relative entropy. Then, for any probability vector $\p\in\cP(n)$ we have
\begin{align}
	\D \left (\ee{x} \middle\| \p \right) = -\log p_x \quad \forall x \in [n] \,.
\end{align}
\end{lemma}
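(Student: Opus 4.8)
The plan is to use additivity together with monotonicity under data-processing (the DPI), exploiting the fact that when the first argument is deterministic, $\ee{x} \otimes \ee{x} = \ee{x}$ (up to relabelling of the alphabet) so that tensor powers stay deterministic. Concretely, consider $\D(\ee{x} \| \p)$. First I would reduce to the case $x = 1$ and to $n = 2$: by the DPI applied to a permutation channel we may assume $x = 1$, and by the DPI applied to the channel $E$ that sends coordinate $1$ to the first output symbol and all other coordinates to the second output symbol, we get $\D(\ee{1} \| \p) \geq \D(\ee{1} \| [p_1, 1-p_1])$; conversely, the channel $F$ with $\ee{1} F = \ee{1}$ and $\ee{2} F = (\p - p_1 \ee{1})/(1-p_1)$ (when $p_1 < 1$) pushes $[p_1, 1-p_1]$ to $\p$ while keeping $\ee{1}$ fixed, so the DPI gives the reverse inequality. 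Hence $\D(\ee{1}\|\p) = \D(\ee{1} \| [p_1, 1-p_1])$, and it remains to evaluate $g(q) \eqdef \D(\ee{1} \| [q, 1-q])$ for $q \in (0,1]$, where we want to show $g(q) = -\log q$. (The cases $p_x = 0$, giving $+\infty$, and $p_x = 1$, giving $0$, are handled separately and easily via the DPI, as is $g(1) = \D(1\|1) = 0$.)

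Next I would establish a functional equation for $g$. Taking a tensor product of $\ee{1} \in \cP(2)$ with itself and of $[q, 1-q]$ with $[q', 1-q']$, additivity yields $\D(\ee{1} \otimes \ee{1} \| [q,1-q] \otimes [q',1-q']) = g(q) + g(q')$. But $\ee{1} \otimes \ee{1} = {\ee{1}}^{(4)}$ is deterministic, and the second argument is a distribution on four symbols whose first entry is $qq'$; applying the reduction from the first paragraph (on the four-symbol space) collapses this to $\D({\ee{1}}^{(2)} \| [qq', 1-qq']) = g(qq')$. Therefore $g(qq') = g(q) + g(q')$ for all $q, q' \in (0,1]$, i.e.\ $h(t) \eqdef g(2^{-t})$ satisfies Cauchy's equation $h(t+t') = h(t) + h(t')$ on $[0,\infty)$. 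Monotonicity: since $[q,1-q] \majo [q', 1-q']$ (equivalently $(\ee{1},[q,1-q]) \majo (\ee{1}, [q',1-q'])$ via the binary channel used in the proof of Theorem~\ref{thm:bounds}) whenever $q \leq q'$, the DPI gives $g(q') \leq g(q)$, so $g$ is non-increasing and $h$ is non-decreasing. A non-decreasing additive function on $[0,\infty)$ is linear, $h(t) = c\,t$, so $g(q) = -c \log q$. Finally the normalisation $\D(\ee{1} \| \uu{2}) = \log 2$ forces $g(1/2) = \log 2$, i.e.\ $c = 1$, giving $g(q) = -\log q$ and hence $\D(\ee{x}\|\p) = -\log p_x$.

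The main obstacle is pinning down the argument that the additive, monotone function $h$ on $[0,\infty)$ is exactly linear rather than merely linear on a dense set: a monotone solution of Cauchy's equation is automatically continuous and hence $h(t) = h(1) t$ for all real $t \geq 0$ (this is standard, and one can also mimic the squeeze argument already used in the proof of Lemma~\ref{lem:entropy}, sandwiching $h(t)$ between $h$ evaluated at nearby dyadic rationals). A secondary point requiring a little care is the boundary behaviour when $p_x \in \{0,1\}$: when $p_x = 0$ we have $\ee{x} \not\ll \p$, and applying the channel $E$ above gives $\D(\ee{x}\|\p) \geq \D(\ee{1}\|\ee{1}$-padded$)$; combined with the DPI in the other direction (padding $\uu{2}$ with a zero and comparing to $\ee{1} \| (\tfrac12 \ee{2} + \tfrac12 \ee{3})$ type constructions) one shows $\D(\ee{x}\|\p) \geq \D(\ee{1}\|[q,1-q])$ for every $q > 0$, forcing the value $+\infty$; and $p_x = 1$ forces $\p = \ee{x}$ so the value is $\D(1\|1) = 0$, consistent with $-\log 1 = 0$.
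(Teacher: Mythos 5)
Your proof is correct and follows essentially the same route as the paper's: reduce $\D(\ee{x}\|\p)$ via the DPI to a one-variable function of $p_x$ alone, derive the multiplicative Cauchy equation $g(qq')=g(q)+g(q')$ from additivity of $\D$, and use monotonicity (from the DPI) plus the normalisation $\D(\ee{1}\|\uu{2})=\log 2$ to conclude $g(q)=-\log q$. The only cosmetic difference is that you invoke the standard fact that a monotone solution of Cauchy's functional equation is linear, whereas the paper routes the same content through Erd\H{o}s's theorem on the integers, extension to rationals, and a rational squeeze.
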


\begin{proof}
It is sufficient to show that $\D(\ee{1} \| \p) = -\log p_1$. Let $S=[\ee{1}, \s, \cdots, \s]^T \in \cS(n)$ with $\s \in \cP(n)$ arbitrary. This implies that $\ee{1}S=\ee{1}$ and we define $\q_{\s}(p_1) \eqdef \p S =  p_1 \ee{1} + (1-p_1) \s$. In particular, the choice $\s' = \frac{1}{1-p_1} [0, p_2, \ldots, p_n]$ yields $\q_{\s'}(p_1) = \p$. Applying the DPI twice, first with any $\s$ such that $s_1 = 0$ and then with $\s'$, we find
\begin{align}
	\D(\ee{1}\|\p) \geq \D(\ee{1} \| p_1 \ee{1} + (1-p_1) \s ) \geq \D(\ee{1}\|\p) ,
\end{align}
and thus equality holds. 

This means that $\D(\ee{1}\|\p)$ is independent of $p_2, \ldots, p_n$, and we may define $f: [0,1] \to \bbRp \cup \{\infty\}$ by $f(p_1) = \D(\ee{1}\|\p)$. The function has the following two properties. 
\begin{enumerate}
\item
Since for any channel $T$ satisfying $\ee{1} T =\ee{1}$ we have $\p T = p_1\ee{1}+(1-p_1)\t$ for some $\t \in \cP(n)$, we can conclude that the first component of $\p T$ cannot be smaller than $p_1$. The DPI thus ensures that $f$ is monotonically non-increasing. 
\item
Additivity of $\D$ implies that $f$ itself is additive, i.e., for any $x,y \in [0,1]$, we have $f(xy)=f(x)+f(y)$.
\end{enumerate}

Define now $g: n \mapsto f\left(\frac{1}{n}\right)$ as function on natural numbers $n \in \bbN$, which is non-decreasing and additive. Therefore, due to Erd\"os theorem, $g(n)=c \log(n)$ for some constant $c \in \mathbb{R}$. The normalisation condition for relative entropies reads $g(2)=1$, and thus $c=1$. Moreover, for any integer $m\leq n$, additivity implies that
\begin{align}
\log(m)+f\Big(\frac{m}{n}\Big) 
&= f \Big(\frac{1}{m} \Big) + f\Big(\frac{m}{n}\Big) \\
&= f \Big(\frac{1}{n} \Big) = \log(n) ,
\end{align}
and, thus, $f\big(\frac{m}{n}\big)=\log(n)-\log(m)=-\log\big(\frac{m}{n}\big)$.  Hence, the function is determined for all rational numbers in $[0,1]$.
Finally, for any $r\in[0,1]$ let $\{q_k\}, \{p_k\}$ be two sequences of rational numbers in $[0,1]$ with limit $r$ and $q_k<r<p_k$ for all $k \in \bbN$. Such sequences always exists since the rational numbers are dense in $\bbR$. Now, the monotonicity of $f$ yields
\begin{align}
	-\log q_k = f(q_k) \geq f(r) \geq f(p_k) = -\log p_k \;.
\end{align}
Taking the limit $k\to\infty$ on both sides and using the continuity of $\log$ we get $f(r)=-\log(r)$, concluding the proof.
\end{proof}

The following is therefore an immediate consequence of Theorem~\ref{thm:bounds} and Lemma~\ref{lem:deterministic}.
\begin{corollary}
	\label{cor:bounds}
	Let $\D$ be a relative entropy. Then, for all $n \in \bbN$ and $\p,\q \in \cP(n)$, we have
	\begin{align}
		D_{\min}(\p\|\q) \leq \D(\p\|\q) \leq D_{\max}(\p\|\q) \,.
	\end{align}
\end{corollary}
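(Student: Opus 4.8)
The plan is to combine Theorem~\ref{thm:bounds} with the explicit evaluation of relative entropies on deterministic first arguments provided by Lemma~\ref{lem:deterministic}. Recall that Theorem~\ref{thm:bounds} tells us that for any monotone divergence $\D$ — and by Lemma~\ref{lem:relent-divergence} every relative entropy is one — we have
\begin{align}
	\D_{\min}(\p\|\q) \leq \D(\p\|\q) \leq \D_{\max}(\p\|\q),
\end{align}
where $\D_{\min}(\p\|\q) = \D\big(\ee{1} \,\big\|\, [\lambda_{\min}, 1-\lambda_{\min}]\big)$ with $\lambda_{\min} = 2^{-D_{\min}(\p\|\q)}$, and analogously for $\D_{\max}$ with $\lambda_{\max} = 2^{-D_{\max}(\p\|\q)}$. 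So the only thing that remains is to evaluate these two derived divergences when $\D$ is in fact a relative entropy.

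The key step is to apply Lemma~\ref{lem:deterministic} with $n = 2$, $x = 1$, and the binary probability vector $\p = [\lambda, 1-\lambda]$. The lemma yields $\D(\ee{1} \,\|\, [\lambda, 1-\lambda]) = -\log \lambda$. Substituting $\lambda = \lambda_{\min} = 2^{-D_{\min}(\p\|\q)}$ gives $\D_{\min}(\p\|\q) = -\log 2^{-D_{\min}(\p\|\q)} = D_{\min}(\p\|\q)$, and substituting $\lambda = \lambda_{\max} = 2^{-D_{\max}(\p\|\q)}$ gives $\D_{\max}(\p\|\q) = D_{\max}(\p\|\q)$. Plugging these two identities back into the inequality chain from Theorem~\ref{thm:bounds} immediately produces $D_{\min}(\p\|\q) \leq \D(\p\|\q) \leq D_{\max}(\p\|\q)$, which is exactly the claim.

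There is essentially no obstacle here: the statement is a corollary precisely because the work has already been done in Theorem~\ref{thm:bounds} and Lemma~\ref{lem:deterministic}. The only minor points worth a sentence of care are (i) noting that a relative entropy is indeed a monotone divergence so that Theorem~\ref{thm:bounds} applies, and (ii) checking the edge cases where $\lambda_{\min}$ or $\lambda_{\max}$ equals $0$ or $1$ — e.g.\ $\lambda_{\max} = 1$ forces $\p = \q$ and all quantities vanish, while $\lambda_{\min} = 0$ corresponds to $D_{\min}(\p\|\q) = \infty$, in which case $-\log 0 = \infty$ is consistent with the conventions on the extended real line already adopted in the excerpt — but these are exactly the degenerate situations already handled in the proof of Theorem~\ref{thm:bounds}, so no new argument is needed.
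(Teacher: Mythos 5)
Your proof is correct and matches the paper exactly: the paper states this corollary as an immediate consequence of Theorem~\ref{thm:bounds} and Lemma~\ref{lem:deterministic}, which is precisely the combination you carry out (evaluating $\D_{\min}$ and $\D_{\max}$ via $\D(\ee{1}\|[\lambda,1-\lambda]) = -\log\lambda$). Your remarks on the degenerate cases are a welcome bit of extra care but not a deviation from the intended argument.
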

\noindent In particular, these bounds imply that
	\begin{itemize}
	\item  $\D(\p \| \q) < \infty$ if $\supp(\p) \subseteq \supp(\q)$, and 
	\item $\D(\p\|\q) = \infty$ if $\supp(\p) \cap \supp(\q) = \emptyset$,
	\end{itemize}
inheriting these properties from $D_{\max}$ and $D_{\min}$, respectively.

\subsection{Continuity of relative entropy}
\label{sec:cont}

Ideally we would like relative entropies to be continuous functions of probability vectors, but this is not always ensured. Most prominently, $D_{\min}$ exhibits jumps at the boundary. We are however able to show several strong continuity properties that follow from our axioms.

Our main tool is a triangle inequality for relative entropies.
\begin{theorem} \label{thm:triangle}
	Let $\D$ be a relative entropy. For all $n \in \bbN$ and $\p, \q, \t \in \cP(n)$, we have
	\begin{align}
		\D(\p \| \q) \leq \D(\p \| \t) + D_{\max}(\t \| \q) \,.
	\end{align}
\end{theorem}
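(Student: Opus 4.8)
The plan is to realise $(\p,\q)$ as the image of a well-chosen pair of \emph{product} distributions under a single channel, so that the data-processing inequality together with additivity does all the work, with $D_{\max}(\t\|\q)$ emerging as a deterministic term via Lemma~\ref{lem:deterministic}.

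First I would dispose of the degenerate cases. If $\supp(\p)\not\subseteq\supp(\t)$ then $\D(\p\|\t)=\infty$ by Corollary~\ref{cor:bounds}, and if $\supp(\t)\not\subseteq\supp(\q)$ then $D_{\max}(\t\|\q)=\infty$; either way the right-hand side is $\infty$ and there is nothing to prove. So assume $\supp(\p)\subseteq\supp(\t)\subseteq\supp(\q)$ and set $\lambda\eqdef 2^{-D_{\max}(\t\|\q)}=\min_{i}q_i/t_i\in(0,1]$, so that $\q\geq\lambda\t$ entrywise. If $\lambda=1$, then summing the entrywise inequality forces $\q=\t$ and the claim is the trivial identity $\D(\p\|\q)\le\D(\p\|\q)$. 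Otherwise $\lambda\in(0,1)$ and $\r\eqdef(\q-\lambda\t)/(1-\lambda)$ is a probability vector.

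Next I would move to the alphabet $[n]\times[2]$ and consider the two product vectors $\p\otimes\ee{1}$ and $\t\otimes[\lambda,1-\lambda]$, where $\ee{1}=[1,0]\in\cP(2)$. Define a channel $W\in\cS(2n,n)$ whose row indexed by $(i,1)$ equals ${\ee{i}}^{(n)}$ and whose row indexed by $(i,2)$ equals $\r$. Since $\p\otimes\ee{1}$ is supported on the first flag, $(\p\otimes\ee{1})W=\p$; and a direct computation using $\sum_i t_i=1$ gives $(\t\otimes[\lambda,1-\lambda])W=\lambda\t+(1-\lambda)\r=\q$. Hence $(\p\otimes\ee{1},\,\t\otimes[\lambda,1-\lambda])\majo(\p,\q)$. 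Applying the data-processing inequality for $\D$, then additivity, then Lemma~\ref{lem:deterministic} (which yields $\D(\ee{1}\|[\lambda,1-\lambda])=-\log\lambda$) gives
\begin{align*}
\D(\p\|\q) &\le \D\big(\p\otimes\ee{1}\,\big\|\,\t\otimes[\lambda,1-\lambda]\big) \\
&= \D(\p\|\t)+\D\big(\ee{1}\,\big\|\,[\lambda,1-\lambda]\big) \\
&= \D(\p\|\t)-\log\lambda = \D(\p\|\t)+D_{\max}(\t\|\q).
\end{align*}

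I do not anticipate a genuine obstacle here: the real content is the observation that finiteness of $D_{\max}(\t\|\q)$ is equivalent to $\q$ admitting the convex decomposition $\lambda\t+(1-\lambda)\r$, and that precisely this decomposition is what a channel acting on a binary flag variable can reproduce. The only care needed is the bookkeeping of the edge cases $\lambda=1$ and of infinite right-hand sides, and verifying that the matrix $W$ above is genuinely stochastic.
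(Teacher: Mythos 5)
Your proof is correct and follows essentially the same route as the paper: the same convex decomposition $\q=\lambda\t+(1-\lambda)\r$ (the paper writes $\lambda=1-\eps$), the same flag-channel $W$ on $[n]\times[2]$, and the same chain of DPI, additivity, and Lemma~\ref{lem:deterministic}. Your explicit treatment of the edge cases ($\lambda=1$ and infinite right-hand sides) is slightly more careful than the paper's, but the substance is identical.
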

Note that this reduces to the upper bound in Corollary~\ref{cor:bounds} when we set $\t = \p$.

\begin{proof}
	We may write, for an appropriate choice of $\eps \in (0, 1]$, 
	\begin{align}
		\q = (1-\eps) \t + \eps \r, \quad \textrm{where} \quad \r = \t + \frac{1}{\eps} (\q - \t) \,.
	\end{align}
	Note that to assert that $\r \in \cP(n)$ we need to ensure that $\r \geq 0$ and, thus, $\q \geq (1-\eps) \t$ entry-wise. This holds if $\eps = 1 - 2^{-D_{\max}(\t\|\q)}$, by definition of the max-relative entropy. Using additivity of $\D$ and Lemma~\ref{lem:deterministic}, followed by the DPI for $\D$, we find that
	\begin{align}
		&\D(\p\|\t) - \log ( 1 - \eps ) =  \D \big( \p \otimes [1, 0] \big\| \t \otimes [1-\eps, \eps] \big) \\
		&\qquad \geq \D \big( (\p \otimes [1, 0]) W \big\| (\t \otimes [1-\eps, \eps]) W \big) \,.  \label{eq:cont1}
	\end{align}
	Here the DPI is applied for a channel $W$ that acts as an identity upon detecting $[1, 0]$ in the second register, and produces a constant output $\r$ upon detecting $[0,1]$ in the second register, i.e.\ the channel $W$ is defined by
	\begin{align}
		(\ee{i} \otimes [1,0] )W = \ee{i} , \quad (\ee{i} \otimes [0,1]) W = \r, \quad \forall i \in [n] \,.
	\end{align}
	Clearly then $(\p \otimes [1, 0]) W = \p$ and $(\t \otimes [1-\eps, \eps]) W = \q$, and hence Eq.~\eqref{eq:cont1} establishes that
	\begin{align}
		\D(\p \| \q) - \D(\p \| \t )  \leq - \log (1-\eps) = D_{\max}(\t \| \q) \,,
	\end{align}
	concluding the proof.
\end{proof}

This can be used to show that all relative entropies are continuous in the interior of $\cP(n) \times \cP(n)$. 

\begin{corollary} \label{cor:upperlowercont}
	Let $\D$ be a relative entropy and $n \in \bbN$. Then, $(\p, \q) \mapsto \D(\p\|\q)$ is upper semi-continuous on $\cP(n) \times \cPp(n)$ and continuous on $\cPp(n) \times \cPp(n)$.
\end{corollary}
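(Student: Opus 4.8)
The plan is to reduce the statement, via Theorem~\ref{thm:triangle}, to continuity properties of the map $\p\mapsto\D(\p\|\q)$ for a fixed $\q\in\cPp(n)$, and to establish those by realising small perturbations of the first argument through near-identity channels and invoking the data-processing inequality.

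\emph{Step 1 (the second argument, uniformly in the first).} Applying Theorem~\ref{thm:triangle} once with $\t=\q_2$ and once with $\t=\q_1$ yields
\begin{align}
\big|\D(\p\|\q_1)-\D(\p\|\q_2)\big|\leq\max\big\{D_{\max}(\q_1\|\q_2),\,D_{\max}(\q_2\|\q_1)\big\}
\end{align}
for all $\q_1,\q_2\in\cPp(n)$, where the left side is finite by Corollary~\ref{cor:bounds} and the right side tends to $0$ whenever $\q_1\to\q_2\in\cPp(n)$ (the max-relative entropy being continuous on $\cP(n)\times\cPp(n)$). Hence for any sequence $(\p_k,\q_k)\to(\p,\q)$ with $\q\in\cPp(n)$ we may replace $\q_k$ by $\q$ up to a vanishing error, and it suffices to show that, for every fixed $\q\in\cPp(n)$, the map $\p\mapsto\D(\p\|\q)$ is upper semi-continuous on $\cP(n)$ and lower semi-continuous on $\cPp(n)$.

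\emph{Step 2 (perturbing the first argument within its support).} If $\p$ has support $S$ and $\p'\in\cP(n)$ with $\supp(\p')\subseteq S$ is close to $\p$, there is a channel $W\in\cS(n)$ close to the identity with $\p W=\p'$ — one builds it by transporting the mass $\p$ sheds on some coordinates of $S$ onto those it gains, normalising by the entries of $\p$, which are bounded below on $S$. As $\p'\to\p$ we get $W\to I$, hence $\q W\to\q$. Now $(\p,\q)\majo(\p',\q W)$, so the DPI gives $\D(\p'\|\q W)\leq\D(\p\|\q)$, and Theorem~\ref{thm:triangle} gives $\D(\p'\|\q)\leq\D(\p'\|\q W)+D_{\max}(\q W\|\q)\leq\D(\p\|\q)+D_{\max}(\q W\|\q)$; since $\q W\to\q\in\cPp(n)$ we conclude $\limsup\D(\p'\|\q)\leq\D(\p\|\q)$ as $\p'\to\p$. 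When $\p\in\cPp(n)$ the same construction applies with the roles of $\p$ and $\p'$ interchanged (now $\p'$ eventually has full support), producing a channel $W'\to I$ with $\p'W'=\p$ and hence $\D(\p'\|\q)\geq\D(\p\|\q)-D_{\max}(\q W'\|\q)$; this is lower semi-continuity on $\cPp(n)$. Taking $S=[n]$ above already gives upper semi-continuity at interior points, so $\p\mapsto\D(\p\|\q)$ is continuous on $\cPp(n)$.

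\emph{Step 3 (upper semi-continuity at the boundary of the first argument).} It remains to show $\limsup_k\D(\p_k\|\q)\leq\D(\p\|\q)$ when $\p_k\to\p$ and $S=\supp(\p)\subsetneq[n]$, so that the $\p_k$ may carry mass outside $S$. This is the main obstacle: there is no triangle inequality in the first slot — the flagging argument behind Theorem~\ref{thm:triangle} would here pit a non-deterministic distribution against a deterministic one and return $+\infty$ — and Step~2 does not apply because $\p$ lacks full support. Instead write $\p_k=(1-\delta_k)\p'_k+\delta_k\r_k$, where $\delta_k=\sum_{i\notin S}p_{k,i}\to0$, $\r_k\in\cP(n)$ is the normalised restriction of $\p_k$ to $[n]\setminus S$, and $\p'_k$ is the normalised restriction of $\p_k$ to $S$, so that $\supp(\p'_k)\subseteq S$ and $\p'_k\to\p$. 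The channel that keeps its input with probability $1-\delta_k$ and otherwise outputs a sample from $\r_k$ maps $\p'_k$ to $\p_k$ and $\q$ to $(1-\delta_k)\q+\delta_k\r_k$; writing $\q'_k$ for the latter, the DPI gives $\D(\p_k\|\q'_k)\leq\D(\p'_k\|\q)$ and Theorem~\ref{thm:triangle} gives $\D(\p_k\|\q)\leq\D(\p_k\|\q'_k)+D_{\max}(\q'_k\|\q)\leq\D(\p'_k\|\q)+D_{\max}(\q'_k\|\q)$. Since $\q\in\cPp(n)$ we have $\q'_k\to\q$ and $D_{\max}(\q'_k\|\q)\to0$, and Step~2 applied to $\p'_k\to\p$ (which has $\supp(\p'_k)\subseteq S$) gives $\limsup_k\D(\p'_k\|\q)\leq\D(\p\|\q)$; combining, $\limsup_k\D(\p_k\|\q)\leq\D(\p\|\q)$. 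Feeding Steps~2 and~3 back through Step~1 gives upper semi-continuity on $\cP(n)\times\cPp(n)$ and, together with the lower semi-continuity, continuity on $\cPp(n)\times\cPp(n)$.
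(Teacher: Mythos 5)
Your proof is correct and follows essentially the same route as the paper's: both rest on Theorem~\ref{thm:triangle} together with the DPI applied to explicit near-identity channels that carry $\p$ to $\p_k$ (for upper semi-continuity) or $\p_k$ to $\p$ (for lower semi-continuity), with the resulting perturbation of the second argument controlled by $D_{\max}$ and its continuity at interior points. The only difference is organisational: the paper's single channel $\ee{i} W_k = (1-\eps_k)\ee{i} + \eps_k \p + (\p_k - \p)$ absorbs in one step what you split into your Steps~2 and~3.
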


\begin{proof}
	Consider sequences $\{ \p_k \}_{k \in \bbN}, \{ \q_k \}_{k \in \bbN} \subset \cPp(n)$. We first show upper semi-continuity. Due to DPI and Theorem~\ref{thm:triangle}, we have 
	\begin{align}
		\D(\p\|\q) &\geq \D(\p_k \| \q W_k) \\
		&\geq \D(\p_k \| \q_k) - D_{\max}(\q W_k \| \q_k), \label{eq:cont3}
	\end{align} 
	where $W_k \in \cS(n)$ is a channel given by
	\begin{align}
		\ee{i} W_k = (1-\eps_k) \ee{i} + \eps_k \p + (\p_k - \p) , \quad \forall i \in [n], \label{eq:cont4}
	\end{align}
	where $\eps_k = 1 - 2^{-D_{\max}(\p\|\p_k)}$ so that, similar to the proof of Theorem~\ref{thm:triangle}, $W_k$ indeed describes a stochastic map. Clearly $\p W_k = \p_k$. Since $\p \ll \p_k$ for all $k \in \bbN$ we get $\lim_{k \to \infty} \eps_k = 0$. And thus $\lim_{k \to \infty} \q W_k = \q$. Hence, using~\eqref{eq:cont3}, we find
	\begin{align}
		\limsup_{k \to \infty} \D(\p_k \| \q_k) &\leq \D(\p\|\q) + \limsup_{k \to \infty} D_{\max}(\q W_k \| \q_k) \,,
	\end{align}
	and the latter limit vanishes due to the continuity of $D_{\max}$ at the point $(\q, \q)$ which by assumption is in the interior of $\cP(n) \times \cP(n)$.
	
	For lower semi-continuity, we use the bounds
	\begin{align}
		\D(\p_k\|\q_k) &\geq \D(\p \| \q_k W_k) \\
		&\geq \D(\p\|\q) - D_{\max}(\q_k W_k \| \q_k)
	\end{align}
	where $W_k$ is given analogously to Eq.~\eqref{eq:cont4} but with the roles of $\p$ and $\p_k$ interchanged. Note in particular that we need $\p \gg \p_k$ for $W_k$ to be well-defined, which is given by our assumption that $\p$ has full support. By taking $\liminf_{k \to \infty}$ on both sides we show lower semi-continuity.
\end{proof}

Critical behaviour at the boundary of $\cP(n) \times \cP(n)$ when the supports are not identical is expected as some relative entropies experience jumps there. On the one hand, the relative entropy $D_{\textrm{path}}(\p\|\q) \eqdef D_{\min}(\p\|\q) + D_{\min}(\q\|\p)$ is not lower semi-continuous at such points. On the other hand, $D_{\max}(\p\|\q)$ is not upper-semicontinuous when $\q$ does not have full support, which can be seen by considering the limit of the sequences $\{ \p_k \}_{k \in \bbN}$, $\{ \q_k \}_{k \in \bbN}$ with $\p_k = \big(\frac1k, 1 - \frac1k \big)$, $\q_k = \big( \frac1{k^2}, 1 - \frac1{k^2} \big)$.

We can give more specific bounds, for example in terms of the Schatten $\infty$-norm distance, which is given as $\| \p - \q \|_{\infty} \eqdef \max_{i \in [n]} | p_x - q_x |$ for $\p, \q \in \cP(n)$. Also recall that $\p^{\downarrow}_n$ and $\q^{\downarrow}_n$ denote the smallest entries of $\p$ and $\q$, respectively.

\begin{corollary} \label{cor:cont}
	Let $\D$ be a relative entropy, $n \in \bbN$ and $\p \in \cP(n)$. Then, for $\q, \tilde{\q} \in \cPp(n)$ with $ \| \q - \tilde{\q} \|_{\infty} <  \min \{ q^{\downarrow}_n, \tilde{q}^{\downarrow}_n \}$, we have
	\begin{align}
		\left| \D(\p\|\q) - \D(\p\|\tilde{\q}) \right| \leq \log \left(1 + \frac{ \| \q - \tilde{\q} \|_{\infty} }{ \min \{ q^{\downarrow}_n, \tilde{q}^{\downarrow}_n \} } \right) \,. \label{eq:cont2}
	\end{align}
\end{corollary}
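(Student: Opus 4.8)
The plan is to bound $|\D(\p\|\q) - \D(\p\|\tilde\q)|$ by controlling $D_{\max}(\q\|\tilde\q)$ and $D_{\max}(\tilde\q\|\q)$ and then invoking the triangle inequality from Theorem~\ref{thm:triangle}. Indeed, applying that theorem with the roles suitably arranged gives $\D(\p\|\q) \leq \D(\p\|\tilde\q) + D_{\max}(\tilde\q\|\q)$ and, symmetrically, $\D(\p\|\tilde\q) \leq \D(\p\|\q) + D_{\max}(\q\|\tilde\q)$. Hence
\begin{align}
	\left| \D(\p\|\q) - \D(\p\|\tilde{\q}) \right| \leq \max\big\{ D_{\max}(\q\|\tilde\q),\, D_{\max}(\tilde\q\|\q) \big\} \,.
\end{align}
So it remains to estimate these two max-relative entropies in terms of the $\infty$-norm distance and the smallest entries.

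The key estimate is the following: for $\q,\tilde\q \in \cPp(n)$ we have, by definition~\eqref{eq:dmax}, $D_{\max}(\q\|\tilde\q) = \log \max_i \frac{q_i}{\tilde q_i}$. Writing $q_i = \tilde q_i + (q_i - \tilde q_i) \leq \tilde q_i + \|\q - \tilde\q\|_\infty$, we get $\frac{q_i}{\tilde q_i} \leq 1 + \frac{\|\q-\tilde\q\|_\infty}{\tilde q_i} \leq 1 + \frac{\|\q-\tilde\q\|_\infty}{\tilde q^{\downarrow}_n}$ for every $i$, and therefore $D_{\max}(\q\|\tilde\q) \leq \log\big(1 + \frac{\|\q-\tilde\q\|_\infty}{\tilde q^{\downarrow}_n}\big)$. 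The same argument with the roles of $\q$ and $\tilde\q$ swapped gives $D_{\max}(\tilde\q\|\q) \leq \log\big(1 + \frac{\|\q-\tilde\q\|_\infty}{q^{\downarrow}_n}\big)$. Taking the larger of the two bounds replaces the denominator by $\min\{q^{\downarrow}_n, \tilde q^{\downarrow}_n\}$, which yields exactly the right-hand side of~\eqref{eq:cont2}; the hypothesis $\|\q - \tilde\q\|_\infty < \min\{q^{\downarrow}_n, \tilde q^{\downarrow}_n\}$ merely guarantees both quantities are finite (and keeps the logarithm's argument bounded, though finiteness already follows from full support).

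There is essentially no hard obstacle here — the argument is a short chain of the already-established triangle inequality plus an elementary entrywise bound on a likelihood ratio. The only point requiring a moment's care is making sure the triangle inequality is applied in both directions so that one genuinely controls the absolute value, and keeping track of which of $q^{\downarrow}_n$ or $\tilde q^{\downarrow}_n$ appears in each direction before combining them via $\min$. One should also note that $\D(\p\|\q)$ and $\D(\p\|\tilde\q)$ are both finite under the hypotheses (by Corollary~\ref{cor:bounds}, since $\q, \tilde\q$ have full support), so the subtraction on the left-hand side is well-defined.
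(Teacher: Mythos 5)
Your proposal is correct and follows essentially the same route as the paper: the paper's (very terse) proof likewise combines the triangle inequality of Theorem~\ref{thm:triangle} in both directions with the elementary entrywise bound $2^{D_{\max}(\tilde{\q}\|\q)} \leq 1 + (q^{\downarrow}_n)^{-1}\|\tilde{\q}-\q\|_{\infty}$ and symmetry. You have merely spelled out the details the paper leaves implicit, and your bookkeeping of which smallest entry appears in each direction before passing to the minimum is accurate.
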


In particular, the function $\q \to \D(\p\|\q)$ is continuous on $\cPp(n)$ for all $\p \in \cP(n)$ and not only strictly positive $\p$, strengthening the result of Corollary~\ref{cor:upperlowercont} when we only consider the relative entropy as a function of the second argument.

\begin{proof}
	To verify the inequality is suffices to note that $2^{D_{\max}(\tilde{\q} \| \q)} \leq 1 +   (q^{\downarrow}_n)^{-1} \| \tilde{\q} - \q \|_{\infty}$. Eq.~\eqref{eq:cont2} then follows by symmetry and ensures continuity on $\cPp(n)$. 
\end{proof}


\section{Bijection between entropies and relative entropies}
\label{sec:biject}

Here we show a strict one-to-one correspondence between continuous entropies and continuous relative entropies.

\begin{theorem} \label{thm:bijection}
	There exists a bijection $\mfF$ with inverse $\mfF^{-1}$ mapping between relative entropies that are continuous in the second argument and entropies.
\end{theorem}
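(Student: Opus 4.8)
The plan is to construct the map $\mfF$ explicitly using the same formula that produced R\'enyi entropies from R\'enyi relative entropies, namely
\begin{align}
	\mfF(\D)(\p) \eqdef \D(\ee{1}^{(n)} \| \uu{n}) - \D(\p \| \uu{n}) = \log n - \D(\p\|\uu{n}) \,,
\end{align}
where the second equality uses Lemma~\ref{lem:deterministic}. First I would check that $\mfF(\D)$ is a well-defined entropy: non-negativity and the upper bound $\log n$ follow from Corollary~\ref{cor:bounds} together with $D_{\min}(\p\|\uu{n}) = -\log\frac{|\p|}{n} \geq 0$ and $D_{\max}(\p\|\uu{n}) = \log(n \max_i p_i) \geq \log n - \log n$... more carefully, monotonicity under mixing for $\mfF(\D)$ should follow from the DPI for $\D$: if $\p \majo \p'$ then $(\p,\uu{n}) \majo (\p',\uu{m})$ via a mixing channel (using Lemma~\ref{lem:hardy}, padding with zeros as needed), hence $\D(\p'\|\uu{m}) \leq \D(\p\|\uu{n})$, which after the $\log n$ vs $\log m$ bookkeeping gives $\H(\p) \leq \H(\p')$. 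Additivity of $\mfF(\D)$ follows from additivity of $\D$ and $\uu{nm} = \uu{n}\otimes\uu{m}$, and the normalisation $\mfF(\D)(\uu{2}) = \log 2 - \D(\uu{2}\|\uu{2}) = \log 2$ is immediate. Continuity of $\D$ in the second argument is in fact not needed for $\mfF$ to land in entropies — but it will be needed to invert.

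Next I would construct the candidate inverse. Given an entropy $\H$, the natural guess is to define $\mfF^{-1}(\H)(\p\|\q)$ by first handling the case $\q = \uu{n}$ via $\D(\p\|\uu{n}) \eqdef \log n - \H(\p)$, and then extending to arbitrary $\q$. The extension is the delicate part: one standard device is to embed a general pair $(\p,\q)$ with rational $\q$ into a uniform-denominator picture. If $\q = \frac{1}{N}[m_1,\dots,m_n]$ with $\sum m_i = N$, then $(\p,\q) \sim (\p', \uu{N})$ for an appropriate $\p' \in \cP(N)$ obtained by splitting mass (the channel that maps symbol $i$ uniformly onto a block of $m_i$ new symbols, and its ``merging'' left inverse), so one sets $\D(\p\|\q) \eqdef \D(\p'\|\uu{N}) = \log N - \H(\p')$. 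One must check this is independent of the choice of denominator $N$ (refining the partition), and then pass to irrational $\q$ by continuity in the second argument — which is exactly the regularity hypothesis in the theorem, and why it cannot be dropped. Then I would verify that $\mfF^{-1}(\H)$ satisfies the DPI, additivity, and the normalisation $\D(\ee{1}\|\uu{2}) = \log 2 - \H(\ee{1}) = \log 2$, and that it is continuous in the second argument.

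Finally I would check $\mfF\circ\mfF^{-1} = \mathrm{id}$ and $\mfF^{-1}\circ\mfF = \mathrm{id}$. The first direction, $\mfF(\mfF^{-1}(\H)) = \H$, is immediate from the definitions since $\mfF$ only probes the $\q = \uu{n}$ slice. The second direction, $\mfF^{-1}(\mfF(\D)) = \D$, is the crux: it amounts to showing that a relative entropy $\D$ that is continuous in its second argument is completely determined by its values $\D(\cdot\|\uu{n})$ on uniform second arguments. For rational $\q$ this follows from the relative-majorisation equivalence $(\p,\q)\sim(\p',\uu{N})$ noted above together with the DPI (which forces $\D$ to be constant on $\sim$-equivalence classes), and for general $\q$ it follows by continuity in the second argument. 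The main obstacle I anticipate is precisely this step — making the ``lift a general pair to a pair with uniform second argument'' construction rigorous, checking the well-definedness under refinement of denominators, and confirming that the resulting object genuinely satisfies the full DPI (not merely the restricted version coming from the lift) and additivity. Continuity in the second argument is what bridges the rational case to the general one, and the asymmetry of the normalisation $\D(\ee{1}\|\uu{2}) = \log 2$ is what pins down the multiplicative constant so that the correspondence is a genuine bijection rather than one up to scaling.
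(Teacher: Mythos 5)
Your overall architecture is the same as the paper's: $\mfF(\D)(\p)=\log n-\D(\p\|\uu{n})$, and the inverse built by lifting $(\p,\q)$ with rational $\q$ to $(\r,\uu{k})$ (this is exactly Lemma~\ref{lem:embed}) followed by continuous extension; the two composition identities are then checked as you describe. However, there are two concrete gaps. First, your argument for monotonicity under mixing of $\mfF(\D)$ rests on the claim that $\p\majo\p'$ implies $(\p,\uu{n})\majo(\p',\uu{m})$ ``padding with zeros as needed''. This is false when $n\neq m$: already for $\p=(1)\in\cP(1)$ and $\p'=(1,0)\in\cP(2)$ there is no channel mapping $\uu{1}$ to $\uu{2}$ while keeping the point mass intact, so the DPI alone only yields $\H(\p\oplus 0)\leq\H(\p)+\log(m/n)$. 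The missing ingredient is the \emph{equality under embedding} $\H(\p\oplus 0)=\H(\p)$, which cannot come from the DPI; the paper derives it from additivity via $\H(\p)=\H\big(\p\otimes\ee{1}^{(n+1)}\big)=\H\big((\p\oplus 0)\otimes\ee{1}^{(n)}\big)=\H(\p\oplus 0)$, using $\H\big(\ee{1}^{(k)}\big)=0$ and the fact that the two product vectors are permutations of one another. Only after this is established does the bistochastic-map case (where the DPI applies directly) cover general majorisation.

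Second, in constructing $\mfF^{-1}(\H)$ you ``pass to irrational $\q$ by continuity in the second argument --- which is exactly the regularity hypothesis in the theorem''. That hypothesis describes which relative entropies lie in the image of $\mfF^{-1}$; it is not available as an assumption when you start from a bare entropy $\H$ and must \emph{define} the extension. To make the continuous extension well-defined you must first prove that $\q\mapsto\log k-\H(\r)$ is (uniformly) continuous on $\cPp(n)\cap\bbQ^n$; the paper does this by observing that the $D_{\max}$ triangle inequality (Theorem~\ref{thm:triangle} and Corollary~\ref{cor:cont}) already holds for the rational-restricted functional, since it only uses the DPI, additivity, and Lemma~\ref{lem:deterministic}. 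Your plan lists ``verify continuity'' as a final check, but logically it must precede the extension step, and some quantitative control such as Corollary~\ref{cor:cont} is needed to obtain it. Relatedly, well-definedness under refinement of the denominator and the survival of the DPI and additivity under the continuous extension are exactly the points the paper handles via the sandwich $\Du\leq\D\leq\Do$ of Lemma~\ref{lem:prop}; your $\sim$-equivalence argument is an equivalent but less structured version of the same idea and would need the same care.
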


The form of the bijection is given in Propositions~\ref{prop:relent-to-ent} and~\ref{prop:ent-to-relent}.
The proof is split into two parts. First we show how to construct an entropy from a relative entropy. This construction is mostly standard but we repeat it here as our definition of monotonicity under mixing operations, required for entropies, is a bit more restrictive than monotonicity under bistochastic maps that is usually considered in the literature.
\begin{prop}
	\label{prop:relent-to-ent}
Given a relative entropy $\D$, define $\mfF(\D)$ of the form~\eqref{eq:entropy} as follows. For all $n \in \bbN$ and $\p \in \cP(n)$, define
		\begin{align}
			\mfF(\D): \p \mapsto &\ \D\big( \ee{1} \| \uu{n} \big) - \D \big( \p \| \uu{n} \big) \label{eq:ent-from-relent} \\
			= &\ \log n - \D \big( \p \| \uu{n} \big) \,.
		\end{align}
		Then, $\mfF(\D)$ is an entropy.
\end{prop}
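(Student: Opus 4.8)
The plan is to verify the two defining properties of an entropy — monotonicity under mixing and additivity — together with the normalisation $\mfF(\D)(\uu{2}) = \log 2$, using only the corresponding properties of the relative entropy $\D$. First I would note that the second displayed form of the definition follows immediately from Lemma~\ref{lem:deterministic}, which gives $\D(\ee{1}\|\uu{n}) = -\log \tfrac1n = \log n$; so we may work with $\mfF(\D)(\p) = \log n - \D(\p\|\uu{n})$ throughout. Normalisation is then the one-line computation $\mfF(\D)(\uu{2}) = \log 2 - \D(\uu{2}\|\uu{2}) = \log 2$, since $\D(\uu{2}\|\uu{2}) = 0$ by the monotone-divergence normalisation (any relative entropy is a monotone divergence by Lemma~\ref{lem:relent-divergence}). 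One should also check $\mfF(\D)(\p) \in \bbRp$: it is finite because $\supp(\p) \subseteq \supp(\uu{n})$ gives $\D(\p\|\uu{n}) < \infty$ via Corollary~\ref{cor:bounds}, and it is nonnegative because $\uu{n} \majo \p$ — i.e.\ $(\uu{n},\uu{n}) \majo (\p,\uu{n})$ is not directly what we want; rather we use that $\D(\p\|\uu{n}) \le D_{\max}(\p\|\uu{n}) = \log(n\max_i p_i) \le \log n$ from Corollary~\ref{cor:bounds}.

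For additivity, let $\p_1 \in \cP(n)$ and $\p_2 \in \cP(m)$. The key observation is $\uu{n} \otimes \uu{m} = \uu{nm}$, so by additivity of $\D$,
\begin{align}
	\mfF(\D)(\p_1 \otimes \p_2) &= \log(nm) - \D(\p_1 \otimes \p_2 \| \uu{nm}) \notag \\
	&= \log n + \log m - \D(\p_1\|\uu{n}) - \D(\p_2\|\uu{m}) \notag \\
	&= \mfF(\D)(\p_1) + \mfF(\D)(\p_2) \,.
\end{align}

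The main work is monotonicity under mixing: given $\p \in \cP(n)$ and $\p' \in \cP(m)$ with $\p \majo \p'$, we must show $\mfF(\D)(\p) \le \mfF(\D)(\p')$, i.e.\ $\log n - \D(\p\|\uu{n}) \le \log m - \D(\p'\|\uu{m})$. By Lemma~\ref{lem:hardy}, $\p \majo \p'$ means there is a mixing channel $W \in \cB(k,m)$ acting on $\supp(\p)$ (with $k = |\p| \le m$) such that $\p W = \p'$; crucially, being a mixing channel, $W$ also maps the uniform distribution on the $k$ support coordinates to $\uu{m}$. The subtlety is that $\p$ may have dimension $n > k$, so I would first pad: let $\p''\in\cP(m)$ consist of the nonzero entries of $\p$ followed by zeros, so that $\p'' \majo \p'$ via a bistochastic $W' \in \cB(m)$ with $\p'' W' = \p'$ and $\uu{m} W' = \uu{m}$. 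Then $(\p'', \uu{m}) \majo (\p', \uu{m})$ as pairs, so the DPI of $\D$ gives $\D(\p''\|\uu{m}) \ge \D(\p'\|\uu{m})$. It remains to relate $\D(\p\|\uu{n})$ to $\D(\p''\|\uu{m})$: since $\p''$ is $\p$ with $m-k$ zeros appended and $k$ zeros of $\p$ dropped, and since $\uu{n}$ and $\uu{m}$ are the corresponding uniform distributions, one shows $(\p, \uu{n}) \sim (\p'', \uu{m})$ — more carefully, there is a channel $E$ from dimension $n$ to dimension $k$ discarding the zero coordinates of $\p$ with $\p E = $ (nonzero part) and $\uu{n} E \ne \uu{k}$ in general, so this direct route fails. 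Instead the clean statement is $\mfF(\D)(\p) = \log k - \D(\tilde\p\|\uu{k})$ where $\tilde\p$ is the nonzero part, which follows because appending a zero coordinate $\p \mapsto \p\oplus\vec{0}$ corresponds on the relative-entropy side to $(\p,\uu{n}) \mapsto$ something handled by additivity with a deterministic factor and Lemma~\ref{lem:deterministic}; this invariance of $\mfF(\D)$ under adding/removing zero labels is the one genuinely fiddly point. Once that is in hand, replacing both $\p$ and $\p'$ by their full-support representatives in common dimension $\max(k, |\p'|)$ and padding, the inequality reduces to a single application of the DPI for the bistochastic map realising the majorisation, and the proof is complete.

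I expect the removal/addition of zero-probability labels — establishing that $\mfF(\D)$ is well-defined as a function that ignores unused symbols, consistent with the monotonicity axiom's allowance of dimension changes — to be the main obstacle, precisely because on the relative-entropy side the second argument $\uu{n}$ does change when $n$ changes, so one cannot simply invoke the DPI and must instead peel off a deterministic tensor factor and invoke additivity together with Lemma~\ref{lem:deterministic}.
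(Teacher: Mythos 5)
Your overall structure follows the paper's proof exactly: normalisation and additivity are immediate from the corresponding properties of $\D$ (your computations for these are correct), and monotonicity under mixing splits into (i) monotonicity under bistochastic maps in a fixed dimension, which is a one-line DPI application using $\uu{m}W=\uu{m}$, and (ii) invariance of $\mfF(\D)$ under appending or removing zero-probability labels. You correctly identify (ii) as the only nontrivial point, and you correctly observe that the naive route --- a channel discarding the zero coordinates --- fails because it does not map $\uu{n}$ to $\uu{k}$. But you then leave (ii) unproved: you name the ingredients (``additivity with a deterministic factor and Lemma~\ref{lem:deterministic}'') without showing how they combine, and explicitly defer the step as ``the one genuinely fiddly point''. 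As written this is a genuine gap, since the whole reduction of the majorisation axiom to the fixed-dimension DPI rests on it.

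The missing step is short. Write $\H=\mfF(\D)$. From the first displayed form of the definition, $\H\big(\ee{1}^{(k)}\big)=\D(\ee{1}\|\uu{k})-\D(\ee{1}\|\uu{k})=0$ for every $k$ (Lemma~\ref{lem:deterministic} is not even needed here). Now observe that $\p\otimes\ee{1}^{(n+1)}$ and $(\p\oplus 0)\otimes\ee{1}^{(n)}$ are both vectors in $\cP\big(n(n+1)\big)$ with the same multiset of entries, hence permutations of one another; since permutations and their inverses are bistochastic, part (i) already gives $\H\big(\p\otimes\ee{1}^{(n+1)}\big)=\H\big((\p\oplus 0)\otimes\ee{1}^{(n)}\big)$. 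Combining this with additivity,
\begin{align}
\H(\p)=\H\big(\p\otimes\ee{1}^{(n+1)}\big)=\H\big((\p\oplus 0)\otimes\ee{1}^{(n)}\big)=\H(\p\oplus 0)\,,
\end{align}
which is exactly the invariance you need; iterating it and combining with (i) and Lemma~\ref{lem:hardy} completes the reduction. (A minor remark: your bound $\D(\p\|\uu{n})\le\log n$ via $D_{\max}$ is fine, though it also follows directly from the DPI applied to the circulant bistochastic matrix with first row $\p$, which witnesses $(\ee{1},\uu{n})\majo(\p,\uu{n})$.)
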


\begin{proof}
	We need to show that $\H = \mfF(\D)$ as defined in Eq.~\eqref{eq:ent-from-relent} is an entropy. First note that additivity and normalisation immediately follow from the respective properties of the relative entropy~$\D$. It thus remains to show monotonicity under mixing. As discussed in Section~\ref{sec:majo}, majorisation $\p \majo \q$ between $\p \in \cP(n)$ and $\q \in \cP(m)$ holds if and only if there exists a bistochastic map $W \in \cB(\max\{m,n\})$ that takes $\p$ to $\q$, where the vectors are padded with $0$'s as required. As such, it suffices to show 
\begin{enumerate}
	\item Monotonicity under bistochastic maps: $\H(\p) \leq \H( \p W)$ for any bistochastic map $W$.
	\item Equality under embedding: $\H(\p \oplus 0) = \H(\p)$.
\end{enumerate}
The first item is an immediate consequence of the DPI of the relative entropy since $\uu{n} W = \uu{n}$ for any bistochastic map. Equality under embedding can be verified as follows. First, from Eq.~\eqref{eq:ent-from-relent} we see that $H\big( \ee{1}^{(n)} \big) = 0$ for any $n \in \mathbb{N}$. Hence, since we have already established additivity, we find that for any $\p \in \cP(n)$ it must hold that
\begin{align}
	\H(\p) = \H\Big(\p \otimes \ee{1}^{(n+1)} \Big) &= \H\Big( \big( \p \oplus 0 \big) \otimes \ee{1}^{(n)} \Big) \\
	&= \H(\p \oplus 0) \,.
\end{align}
	This concludes the proof.
\end{proof}

Next, we construct a relative entropy from an entropy. 
This construction is more involved and we start by introducing two candidate extensions.

\begin{definition}
\label{def7}
For any $n \in \bbN$ and $\p, \q \in \cP(n)$ and any entropy function~$\H$, we define
the \emph{maximal extension of $\H$} and the \emph{minimal extension of $\H$}, respectively, as
\begin{align}
\Do(\p\|\q) &\eqdef& \inf :\quad &  \log k - \H(\r) \label{eq:Do} \\
		 &&\mathrm{subject~to}:\quad & \big(\r, \uu{k}\big) \majo (\p, \q)  \notag\\
		&&& k \in \bbN \;,\;  \r \in \cP(k) , \quad \textrm{and} \notag\\
\Du(\p\|\q) &\eqdef & \sup :\quad & \log k - \H(\r) \label{eq:Du} \\
		 &&\mathrm{subject~to}:\quad & (\p, \q) \majo \big(\r, \uu{k}\big)  \notag\\
		&&& k\in \bbN \;,\; \r\in\cP(k) \,. \notag
\end{align}
\end{definition}
We start by showing some elementary properties of these two quantities. The expressions maximal and minimal extension of $\H$ are justified by Property (2) below. 

\begin{lemma} \label{lem:prop}
Let $\Do$ and $\Du$ be the two extensions of an entropy function $\H$ as defined in Definition~\ref{def7}. Then,
\begin{enumerate}
	\item[(1)] Both $\Do$ and $\Du$ are monotone divergences, and for all $n \in \mathbb{N}$ and $\p, \q \in \cP(n)$, we have
	$\Do(\p\|\q) \geq \Du(\p\|\q)$. Moreover, for all $m \in \bbN$ and $\p', \q' \in \cP(m)$, we have
	\begin{align}
		\Du(\p \otimes \p' \|\q \otimes \p') &\geq \Du(\p\|\q) + \Du(\p'\|\q') \,, \label{eq:add1} \\
		\Do(\p \otimes \p'\|\q \otimes \p') &\leq \Do(\p\|\q) + \Do(\p'\|\q') \label{eq:add2} \,.
	\end{align}

	\item[(2)] For any $n \in \bbN$ and $\p\in\cP(n)$, we have 
	\begin{align}
		\label{eq:constraint}
		\Do(\p\|\uu{n}) = \Du(\p\|\uu{n}) = \log n - \H(\p) .
	\end{align}
	Moreover, for any monotone divergence $\D$ satisfying $\D(\p\|\uu{n}) = \log n - \H(\p)$ for all $\p \in \cP(n)$, we have
	\begin{align} \label{sim}
		\Du(\p\|\q) \leq \D(\p\|\q) \leq \Do(\p\|\q) \quad \forall\;\p,\q \in \cP(n)\;.
	\end{align}
	\end{enumerate}
\end{lemma}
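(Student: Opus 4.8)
The plan is to establish the four properties in turn, relying throughout on the transitivity of relative majorisation and on the basic properties of entropies recorded in Lemma~\ref{lem:entropy}. Let me set up notation: write $\majo$ for relative majorisation as in Section~\ref{sec:majo}, and recall that $(\r,\uu{k}) \majo (\p,\q)$ means there is a channel $W \in \cS(k,n)$ with $\r W = \p$ and $\uu{k} W = \q$.

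**Feasibility and finiteness.** First I would check the optimisation problems in Definition~\ref{def7} are well-posed. For $\Do$: given $\p,\q \in \cP(n)$, the pair $(\p,\q)$ itself is trivially majorised by $(\ee{1}^{(?)}\ldots)$ — more usefully, one can always find $\r,k$ with $(\r,\uu{k}) \majo (\p,\q)$, e.g.\ by the asymptotic equipartition remark at the end of Section~\ref{sec:majo}, or more simply by a direct small construction. The constraint set for $\Du$ is nonempty because $(\p,\q) \majo (\ee{1}^{(1)}, \uu{1}) = (1,1)$ always (apply the constant channel), giving the value $\log 1 - \H(1) = 0$; hence $\Du \geq 0$. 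Similarly $\Do$ is bounded below by $0$ using $\log k - \H(\r) \geq 0$ from Lemma~\ref{lem:entropy}.

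**Property (1): DPI, ordering, and super/sub-additivity.** The DPI for $\Do$ and $\Du$ is a routine consequence of transitivity of $\majo$: if $(\p,\q) \majo (\p',\q')$ then any $\r,k$ feasible for $\Do(\p\|\q)$ (i.e.\ $(\r,\uu{k})\majo(\p,\q)$) is also feasible for $\Do(\p'\|\q')$ by composing channels, so the infimum over a larger feasible set can only decrease, giving $\Do(\p'\|\q') \leq \Do(\p\|\q)$; the argument for $\Du$ is dual (the $\sup$ is over the feasible pairs $(\r,\uu{k})$ with $(\p,\q)\majo(\r,\uu{k})$, and that set grows as we pass to $(\p',\q')$ — wait, it shrinks — so $\Du(\p'\|\q') \leq \Du(\p\|\q)$, matching monotonicity). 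Normalisation $\Do(1\|1) = \Du(1\|1) = 0$ follows from $k=1$, $\r = 1$ being the only feasible point. For $\Do \geq \Du$: take any $\r,k$ feasible for $\Du(\p\|\q)$ and any $\s,\ell$ feasible for $\Do(\p\|\q)$; then $(\s,\uu{\ell}) \majo (\p,\q) \majo (\r,\uu{k})$, so $(\s,\uu\ell)\majo(\r,\uu k)$, and I claim this forces $\log \ell - \H(\s) \geq \log k - \H(\r)$. This is exactly the statement that $\D^{\mathrm{triv}}(\r\|\uu k) := \log k - \H(\r)$ is itself monotone under relative majorisation between uniform-second-argument pairs — which is precisely the content of monotonicity under mixing for $\H$, since $(\s,\uu\ell)\majo(\r,\uu k)$ with both uniform forces $\ell \le k$ plus a mixing relation. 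I would spell this out carefully as it is the conceptual heart of Property (1). The super/subadditivity bounds \eqref{eq:add1}, \eqref{eq:add2} follow by tensoring feasible points: if $(\r,\uu k)\majo(\p,\q)$ and $(\r',\uu{k'})\majo(\p',\q')$, then $(\r\otimes\r', \uu{kk'}) \majo (\p\otimes\p', \q\otimes\q')$ (tensor the channels), and $\log(kk') - \H(\r\otimes\r') = (\log k - \H(\r)) + (\log k' - \H(\r'))$ by additivity of $\H$; taking infima gives \eqref{eq:add2}, and the $\sup$ version gives \eqref{eq:add1} — note the statement of \eqref{eq:add1} as written has $\q'$ on the right-hand side inside $\Du(\p'\|\q')$, consistent with this.

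**Property (2).** The equalities $\Do(\p\|\uu n) = \Du(\p\|\uu n) = \log n - \H(\p)$: the choice $\r = \p$, $k = n$ is feasible for both (since $(\p,\uu n) \majo (\p,\uu n)$ trivially), giving value $\log n - \H(\p)$; this shows $\Du(\p\|\uu n) \geq \log n - \H(\p) \geq \Do(\p\|\uu n)$, and combined with $\Do \geq \Du$ from Property~(1) all three are equal. For the sandwiching inequality~\eqref{sim}: let $\D$ be any monotone divergence with $\D(\r\|\uu k) = \log k - \H(\r)$ for all $\r,k$. If $(\r,\uu k)\majo(\p,\q)$ then by the DPI of $\D$, $\D(\p\|\q) \leq \D(\r\|\uu k) = \log k - \H(\r)$; taking the infimum over feasible $(\r,k)$ yields $\D(\p\|\q) \leq \Do(\p\|\q)$. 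Dually, if $(\p,\q)\majo(\r,\uu k)$ then $\D(\p\|\q) \geq \D(\r\|\uu k) = \log k - \H(\r)$, and the supremum gives $\D(\p\|\q) \geq \Du(\p\|\q)$.

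**Anticipated main obstacle.** The step I expect to be most delicate is the proof that $\Do \geq \Du$ (equivalently, that $\log k - \H(\r)$ respects relative majorisation between uniform-denominator pairs). One must translate the relation $(\s,\uu\ell)\majo(\r,\uu k)$ — which a priori only gives a general stochastic channel $W$ with $\s W = \r$ and $\uu\ell W = \uu k$ — into a statement comparing $\H(\s) + \log\ell$ and $\H(\r) + \log k$. The condition $\uu\ell W = \uu k$ means $W$ is (a scaled version of) a bistochastic-type map, and one should argue via Lemma~\ref{lem:hardy} or a direct argument that this forces $\ell \geq k$ and that, after accounting for dimensions, $\r$ is a mixing-image of $\s$ in the appropriate padded sense, so monotonicity under mixing applies. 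I would handle this by a careful dimension-counting argument using that $\uu{\ell}W = \uu{k}$ forces each column of $W$ to sum to $\ell/k$. Everything else is bookkeeping with transitivity and additivity.
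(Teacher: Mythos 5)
Your overall architecture\,---\,DPI for $\Do$ and $\Du$ from transitivity of $\majo$, the additivity bounds from tensoring feasible points, and Property~(2) from feasibility of $(k,\r)=(n,\p)$ plus a two-sided DPI argument for the sandwich\,---\,coincides with the paper's proof, and those parts are correct. The genuine gap sits exactly in the step you flag as delicate, namely $\Do(\p\|\q)\geq\Du(\p\|\q)$, and the resolution you sketch does not work. Writing $(\s,\uu{\ell})\majo(\r,\uu{k})$ for a point $(\ell,\s)$ feasible for $\Do$ and $(k,\r)$ feasible for $\Du$, it is \emph{not} true that this forces $\ell\geq k$ (take $\s=\uu{\ell}$, $\r=\uu{k}$ with $\ell<k$ and a dilution channel), nor that $\r$ is a mixing-image of a padded $\s$ (take $\s=\uu{4}$, $\r=\uu{2}$ with the merging channel: $(\uu{4},\uu{4})\majo(\uu{2},\uu{2})$ holds, yet $\uu{4}\not\majo\uu{2}$). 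Indeed the inequality you need, $\log\ell-\H(\s)\geq\log k-\H(\r)$, is not equivalent to $\s\majo\r$, so column-sum bookkeeping on the channel $W$ alone cannot deliver it. The missing device in the paper is a tensoring trick that equalises the two uniform reference vectors: from $(\s,\uu{\ell})\majo(\r,\uu{k})$ one tensors with suitable uniform distributions to obtain $(\s\otimes\uu{k},\uu{\ell k})\majo(\r\otimes\uu{\ell},\uu{k\ell})$; since both second arguments are now uniform of the same dimension $k\ell$, this is equivalent to the ordinary majorisation $\s\otimes\uu{k}\majo\r\otimes\uu{\ell}$, and monotonicity under mixing together with additivity of $\H$ and $\H(\uu{m})=\log m$ yields $\H(\s)+\log k\leq\H(\r)+\log\ell$, as required. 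Without this (or an equivalent argument) Property~(1) is incomplete, and your derivation of Property~(2) also leans on $\Do\geq\Du$.

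Two smaller slips, neither fatal: the feasible set of $\Do(\p\|\q)$ can be empty (e.g.\ for disjoint supports), in which case $\Do(\p\|\q)=+\infty$, which a monotone divergence is allowed to take\,---\,so your claim that a feasible $(k,\r)$ always exists is false but also unnecessary; and $(k,\r)=(1,1)$ is not the only feasible point for $\Do(1\|1)$ or $\Du(1\|1)$ (any $k$ with $\r=\uu{k}$ is feasible for $\Du$, and every $(k,\r)$ is feasible for $\Do$), though the value $0$ follows from your Property~(2) argument specialised to $n=1$, which is how the paper handles it. Finally, your reading of \eqref{eq:add1}--\eqref{eq:add2} with $\q\otimes\q'$ in the second slot is the intended and provable statement; the $\q\otimes\p'$ appearing in the displayed lemma is evidently a typo, and your tensoring argument for these two bounds matches the paper's.
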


\begin{proof}[Proof of Lemma~\ref{lem:prop}, Property (1)]
	It is easy to verify that $\Do(1\|1) = \Du(1\|1) = 0$ (see also Property 2) for a proof). The data-processing inequality $\Do(\p \| \q) \geq \Do(\p W\|\q W)$ for any $W \in \cB(n,m)$ follows from the implication
	\begin{align}
		\big(\r, \uu{k}\big) \majo (\p, \q)  \implies \big(\r, \uu{k}\big) \majo (\p W, \q W) \,,
	\end{align}
	which allows us to relax the constraint in Eq.~\eqref{eq:Do} to find a lower bound on $\Do(\p \| \q)$.
	Similarly, the DPI for $\Du$ follows from the implication
	\begin{align}
		(\p W, \q W) \majo \big(\r, \uu{k}\big) \implies (\p, \q) \majo \big(\r, \uu{k}\big) \,.
	\end{align}
	
	Next, to see that $\Do(\p\|\q) \geq \Du(\p\|\q)$, we observe that for any candidates $(k, \r)$ and $(k', \r')$ in the optimisation in Eqs.~\eqref{eq:Do} and~\eqref{eq:Du}, respectively, we must have $(\r, \uu{k}) \majo (\r', \uu{k'})$. Thus, by tensoring the input and output vectors by $\uu{k'}$ and $\uu{k}$ we find $(\r \otimes \uu{k'}, \uu{kk'}) \majo (\r' \otimes \uu{k}, \uu{kk'})$, which is equivalent to the majorisation relation
	\begin{align}
		\r \otimes \uu{k'} \majo \r' \otimes \uu{k} \,. \label{eq:majo1}
	\end{align} 
	Now further note that the inequality we want to establish,
	\begin{align}
		\log k - \H(\r) \geq \log k' - \H(\r') \,,
	\end{align}
	is equivalent to $\H(\r \otimes \uu{k'}) \leq \H(\r' \otimes \uu{k})$ using the additivity of $\H$ and Lemma~\ref{lem:entropy}. But this holds due to monotonicity under mixing of $\H$ and Eq.~\eqref{eq:majo1}, concluding this part of the proof.

	Finally, the two inequalities follow immediately by restricting the optimisations in Eqs.~\eqref{eq:Do} and~\eqref{eq:Du} to product vectors and channels, and leveraging the additivity of $\H$.
\end{proof}

\begin{proof}[Proof of Lemma~\ref{lem:prop}, Property (2)] This follows from the observation that if $\q=\uu{n}$ then in both optimisations in Eqs.~\eqref{eq:Do} and~\eqref{eq:Du} the choice $k=n$, $\r=\p$, and $E$ the identity channel satisfies the constraints. Hence, using Property 1), we get
\begin{align}
	\log n - \H(\p) &\geq \Do(\p \| \uu{n}) \\
	&\geq \Du(\p \| \uu{n}) \geq \log n - \H(\p) \,.
\end{align}

	It remains to show that any monotone divergence $\D$ with $\D(\p\|\uu{n}) = \log n - \H(\p)$ is sandwiched in between $\Do$ and $\Du$.
	On the one hand, since $\D$ satisfies the DPI, we get, for any $k \in \mathbb{N}$ and $F \in \cS(n, k)$ such that $\q F = \uu{k}$.
	\begin{align}
		\D(\p\|\q) \geq \D(\p F \| \q F) = \log k - \H(\p F) \,.
	\end{align}
	Maximising over all such maps we find $\D(\p\|\q) \geq \Du(\p\|\q)$. On the other hand, we note that
	\begin{align}
		\Do(\p\|\q) = \inf \left\{ \D(\r\|\uu{k}) : \r E = p , \uu{n} E = q \right\}
	\end{align}
	where $k \in \bbN$, $\r \in \cP(k)$ an $E \in \cS(k, n)$. The desired inequality follows since $\D(\r\|\uu{k}) \geq \D( \r E \| \uu{n} E) = \D(\p\|\q)$.
\end{proof}

Next we will need a very useful technical lemma.
\begin{lemma}
\label{lem:embed}
Let $n \in \bbN$, $\p \in \cP(n)$ and $\q \in \cPp(n) \cap \bbQ^n$. Then, there exists $k \in \bbN$ and $\r \in \cP(k)$ such that
\begin{align}
(\p,\q) \sim (\r,\uu{k}) \,. \label{eq:sim}
\end{align}
The vector $\r$ can be written as $\r = \bigoplus_{x=1}^n p_x \uu{k_x}$ where
$\q$ is assumed to be of the form
\begin{align}
	\q = \left( \frac{k_1}{k}, \frac{k_2}{k}, \ldots, \frac{k_n}{k} \right) \,.
\end{align}
for some $\{k_x\}_{x \in [n]} \subset \bbN$ with $\sum_{x=1}^{n}k_x=k$.
\end{lemma}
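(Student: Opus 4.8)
The plan is to construct the channels witnessing relative majorisation in both directions explicitly from the combinatorial data $\{k_x\}$, and then invoke Blackwell's theorem (or directly exhibit stochastic maps) to confirm the equivalence. First I would fix the notation: write $\q = (k_1/k, \ldots, k_n/k)$ with $k = \sum_x k_x$, and define $\r = \bigoplus_{x=1}^n p_x \uu{k_x} \in \cP(k)$, so that $\r$ is obtained by ``splitting'' the mass $p_x$ evenly across $k_x$ new symbols. The natural candidate for the channel $E \in \cS(k, n)$ witnessing $(\r, \uu{k}) \majo (\p, \q)$ is the deterministic ``merge'' map that sends each of the $k_x$ symbols in block $x$ to the single symbol $x \in [n]$. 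One checks directly that $\r E = \bigoplus_x (k_x \cdot \frac{p_x}{k_x}) \ee{x} = \p$ and $\uu{k} E = (k_1/k, \ldots, k_n/k) = \q$, which establishes $(\r, \uu{k}) \majo (\p, \q)$.

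For the reverse direction $(\p, \q) \majo (\r, \uu{k})$ I would exhibit the ``splitting'' channel $S \in \cS(n, k)$ that sends symbol $x \in [n]$ to the uniform distribution $\uu{k_x}$ supported on the $k_x$ symbols of block $x$ (i.e.\ row $x$ of $S$ is $0 \oplus \cdots \oplus \uu{k_x} \oplus \cdots \oplus 0$). Then $\p S = \bigoplus_x p_x \uu{k_x} = \r$ directly, and $\q S = \bigoplus_x \frac{k_x}{k} \uu{k_x} = \bigoplus_x \frac{1}{k} \vec{1}_{k_x} = \uu{k}$, since each of the $k_x$ coordinates in block $x$ receives mass $\frac{k_x}{k} \cdot \frac{1}{k_x} = \frac1k$. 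Combining the two directions gives $(\p, \q) \sim (\r, \uu{k})$, which is exactly~\eqref{eq:sim}.

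The one genuine subtlety is the reduction to the stated form of $\q$: the hypothesis only says $\q \in \cPp(n) \cap \bbQ^n$, so I would first note that, $\q$ having strictly positive rational entries, we can write each $q_x$ over a common denominator $k$, i.e.\ $q_x = k_x / k$ with $k_x \in \bbN$ (strict positivity of $\q$ guarantees $k_x \geq 1$, which is what makes $\uu{k_x}$ well-defined and $\r$ a genuine probability vector in $\cP(k)$). This is a routine observation rather than a real obstacle; the main thing to be careful about is bookkeeping with the direct sums and checking that the block structure of $S$ and $E$ lines up so that the products collapse as claimed. No use of Blackwell's theorem is strictly needed since both channels are written down explicitly, though one could alternatively verify the equivalence via the testing-region characterisation, noting that both pairs have the same ordered likelihood ratios and hence the same lower Lorenz curve.
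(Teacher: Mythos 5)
Your proof is correct, and it takes a genuinely different route from the paper's. The paper proves the lemma geometrically: it orders the likelihood ratios, builds the lower Lorenz curve $\cL(\p,\q)$, chooses $k$ so that the grid $\{i/k\}$ contains all the $y$-coordinates of the vertices, reads off $\r$ as the successive $x$-increments of the curve at heights $i/k$, and concludes $(\p,\q)\sim(\r,\uu{k})$ from the coincidence of the two Lorenz curves (which implicitly invokes Blackwell's testing-region characterisation to translate back into the channel language). You instead verify the definition of relative majorisation directly, exhibiting the splitting channel $S\in\cS(n,k)$ with $\p S=\r$, $\q S=\uu{k}$ and the deterministic merge channel $E\in\cS(k,n)$ with $\r E=\p$, $\uu{k}E=\q$; the block computations you give ($k_x\cdot\frac{p_x}{k_x}=p_x$ and $\frac{k_x}{k}\cdot\frac{1}{k_x}=\frac{1}{k}$) are exactly right, and your remark that strict positivity of $\q$ is what guarantees $k_x\geq 1$ correctly isolates where that hypothesis enters. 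Your approach buys elementarity and self-containedness (no appeal to the Blackwell/Lorenz-curve machinery), and it establishes the explicit form $\r=\bigoplus_{x}p_x\uu{k_x}$ claimed in the statement directly rather than as an afterthought of a geometric construction; it is essentially the same dilution-channel idea the paper itself uses in the lemma immediately following this one. What the paper's route buys is the geometric picture (matching vertices of the two Lorenz curves, cf.\ its Fig.~2), which is the viewpoint reused later in Appendix~D. Both proofs are complete; no gap in yours.
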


\begin{proof}
First note that relative majorisation is invariant under permutation of indices, and thus we can without loss of generality assume that
\begin{align}
	\frac{p_1}{q_1}\geq\frac{p_2}{q_2}\geq \ldots \geq \frac{p_n}{q_n} \,.
\end{align}
Let us next recall that $(\p,\q) \majo (\p',\q')$ if and only if the lower Lorenz curve of $(\p,\q)$ is no where above the lower Lorenz curve of $(\p',\q')$. The lower Lorenz curve of $(\p,\q)$ is constructed as prescribed in Fig.~\ref{fig:testing} and denoted $\cL(\p,\q)$. 
The curve is comprised of affine segments connecting $n+1$ vertices given by tuples $\{(a_\ell,b_\ell)\}_{\ell=0}^{n}$, where 
\begin{align}
	a_\ell = \sum_{x=1}^{\ell} p_x \quad \text{and} \quad b_\ell=\sum_{x=1}^{\ell} q_x \,.
\end{align}
Our assumption that $\q \in \cPp(n)$ ensures that the $\cL(\p,\q)$ does not contain any horizontal segments.

Now, choose $k$ large enough such that the set $\{\frac{i}{k}\}_{i \in [k]}$ contains all the rational numbers $\{b_{\ell}\}_{\ell=1}^{m}$. This is possible since $\q \in \bbQ^n$ and, consequently, $\{b_{\ell}\}$ are rational as well. We next define $\r \in \cP(k)$ as follows. For any $i \in \{0, 1, \ldots, k\}$ define $s_i$ to be the $x$-axis coordinate that corresponds to the $y$-axis coordinate $\frac{i}{k}$ of $\cL(\p,\q)$. That is, $s_i$ is the unique number satisfying $(s_i, \frac{i}{n}) \in \cL(\p,\q)$. See also Figure~\ref{LorenzCurve} for an example. We then define $r_i \eqdef s_{i} - s_{i-1}$ for all $i \in [k]$; 

\begin{figure}[h]\centering
    \includegraphics[width=0.4\textwidth]{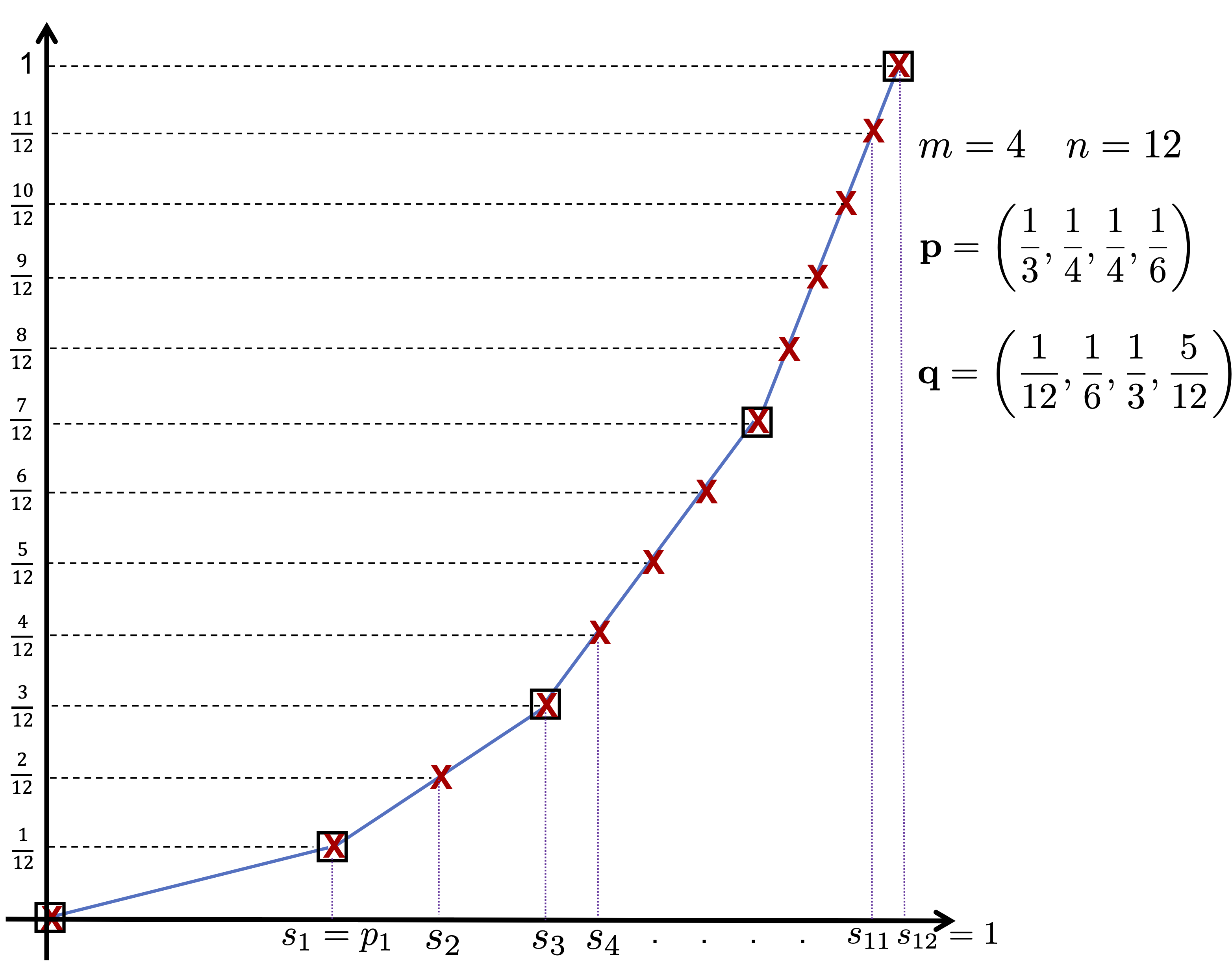}
  \caption{The lower Lorenz curve of $(\p,\q)$ and $(\r,\uu{k})$. In this example, the black squares are the vertices of $\cL(\p,\q)$ (total of 5) and the red x's (total of 13) are the vertices of $(\r,\uu{k})$.}
  \label{LorenzCurve}
\end{figure}
By construction, all the vertices of $\cL(\r,\uu{k})$ are also in $\cL(\p,\q)$, and since  $b_\ell \in \{\frac{i}{k}\}_{i=1}^{n}$ it follows that we also have $(a_\ell,b_\ell) \in \cL(\r,\uu{k})$. Hence, we conclude that $\cL(\p,\q)=\cL(\r,\uu{k})$ which means that for this choice of $k$ and $\r$ we have $(\p,\q) \majo (\r,\uu{k}) \majo (\p,\q)$ as required.
\end{proof}

The following is a direct consequence of Lemma~\ref{lem:embed} and has first been proposed in~\cite[Lemma 13]{wehner13} for R\'enyi relative entopies, but we include a different proof here for general monotone divergences that we hope is also illuminating.

\begin{lemma}
	Let $\D$ be a monotone divergence, $n \in \bbN$ and $\q \in \cP_+(n) \cap \bbQ^n$. Then there exits $k \in \bbN$ such that the following holds. For every $\p \in \cP(n)$ there exists an $\r \in \cP(k)$ such that
	\begin{align}
		\D(\p \| \q) = \D \big( \r \| \uu{k} \big)  \,.
	\end{align}
\end{lemma}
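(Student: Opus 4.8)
The plan is to derive the statement essentially for free from Lemma~\ref{lem:embed} together with the data-processing inequality; the one point requiring attention is the quantifier structure, namely that $k$ may depend on $\q$ but must be uniform over all $\p \in \cP(n)$. Fortunately Lemma~\ref{lem:embed} is tailored to deliver exactly this: inspecting its proof, the integer $k$ is taken to be any common denominator of the entries of $\q$ (so that $\q = (k_1/k, \ldots, k_n/k)$ with $k_x \in \bbN$ and $\sum_x k_x = k$), and since a common denominator is insensitive to the reordering of indices used there, the same $k$ serves every $\p$. Only the vector $\r = \bigoplus_{x=1}^n p_x \uu{k_x} \in \cP(k)$ changes with $\p$.

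Concretely, I would first fix $k \in \bbN$ and $\{k_x\}_{x \in [n]} \subset \bbN$ with $\sum_x k_x = k$ such that $\q = (k_1/k, \ldots, k_n/k)$; this is possible because $\q \in \cPp(n) \cap \bbQ^n$. Then, given an arbitrary $\p \in \cP(n)$, I apply Lemma~\ref{lem:embed} to the pair $(\p,\q)$ to obtain $\r \in \cP(k)$ with $(\p,\q) \sim (\r,\uu{k})$, i.e.\ both $(\p,\q) \majo (\r,\uu{k})$ and $(\r,\uu{k}) \majo (\p,\q)$.

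Finally, I invoke the data-processing inequality for the monotone divergence $\D$ in both directions: $(\p,\q) \majo (\r,\uu{k})$ yields $\D(\r\|\uu{k}) \leq \D(\p\|\q)$, and $(\r,\uu{k}) \majo (\p,\q)$ yields $\D(\p\|\q) \leq \D(\r\|\uu{k})$, so that $\D(\p\|\q) = \D(\r\|\uu{k})$ as claimed. I do not foresee a genuine obstacle: all the substantive work — the Lorenz-curve matching that realises $(\p,\q)$ as equivalent to a pair whose second component is uniform — has already been done in Lemma~\ref{lem:embed}, and what remains is only the two-line application of monotonicity under data-processing together with the bookkeeping observation that the $k$ furnished by that lemma depends on $\q$ alone.
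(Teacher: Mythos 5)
Your proof is correct, and it takes a (mildly) different route from the one the paper actually writes out. You derive the statement from Lemma~\ref{lem:embed}: the equivalence $(\p,\q)\sim(\r,\uu{k})$ plus two applications of the data-processing inequality gives the equality, and your observation about the quantifiers is the right one to make --- the $k$ produced by Lemma~\ref{lem:embed} is just a common denominator of the entries of $\q$, every partial sum $b_\ell$ of the $q_x$ (in whatever likelihood-ratio ordering $\p$ induces) is then automatically a multiple of $1/k$, so the same $k$ works for every $\p$ and only $\r$ varies. The paper in fact remarks that the lemma ``is a direct consequence of Lemma~\ref{lem:embed}'' but then deliberately gives a different, self-contained proof: it constructs an explicit dilution channel $W\in\cS(n,k)$ that spreads symbol $i$ uniformly over a block of $k_i$ output symbols, so that $\q W=\uu{k}$, sets $\r=\p W$, and concludes via the DPI and the invertibility of $W$ (the merging channel undoes the dilution). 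The two arguments produce the same $\r=\bigoplus_x p_x\uu{k_x}$ up to a permutation; the paper's version avoids the Lorenz-curve machinery and the reordering of indices entirely, while yours is shorter, reuses already-established material, and makes the uniformity of $k$ over $\p$ more transparent. One cosmetic point: when you ``apply Lemma~\ref{lem:embed}'' with your pre-chosen $k$, you are really re-running its construction for that specific common denominator rather than citing its existential statement verbatim, so it is worth saying explicitly (as you essentially do) that the construction goes through for any common denominator, not just the one the lemma happens to produce.
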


\begin{proof}
	Since $\q \in \bbQ^n$, we may write $q_x = \frac{k_x}{k}$ with $k_x \in \bbN$ for all $x \in [n]$, where $k \in \bbN$ is any (or the smallest) common denominator of the rationals $\{q_x\}_{x\in [n]}$. Now consider the dilution channel $W \in \cS(n, k)$ given as follows. First, let $s_{\ell} = \sum_{i=1}^{\ell} k_i$ and set $s_0 = 0$. Then, we define
	\begin{align}
		W(j|i) = \begin{cases}
			 	\frac{1}{k_i}	& \textrm{when} \quad j \in \{s_{i-1}, s_{i-1} + 1, \ldots, s_i\} , \\
			 0 & \textrm{elsewhere}  \,.
			 \end{cases}
	\end{align}
	This channel simply dilutes the input symbol to ensure that $\q W = \uu{k}$. We now simply take $\r = \p W$. Since $W$ is invertible the DPI ensures the desired equality.
\end{proof}

The following map from entropies to relative entropies that are continuous in the second argument is a consequence of Lemmas~\ref{lem:prop} and~\ref{lem:embed}.

\begin{prop}
	\label{prop:ent-to-relent}
	Given an entropy $\H$, define $\mfF^{-1}(\H)$ of the form~\eqref{eq:dd} as follows. For all $n \in \bbN$, $\p \in \cP(n)$ and $\q \in \cPp(n) \cap \bbQ^n$, define
		\begin{align}
			\mfF^{-1}(\H)(\p, \q) := \log k - \H(\r), \label{eq:relent1}
		\end{align}
		where $k$ and $\r$ are constructed as in Lemma~\ref{lem:embed} such that $(\p,\q) \sim (\r,\uu{k})$.
		For general $\q \in \cP(n)$, $\mfF^{-1}(\H)$ is defined via continuous extension.
		
		Then, $\mfF^{-1}(\H)(\p\|\q) = \Du(\p\|\q) = \Do(\p\|\q)$ when $\q \in \cPp(n) \cap \bbQ^n$ is rational.
		Moreover, $\mfF^{-1}(\H)$ is a relative entropy and continuous in $\q$ for any fixed $\p$.
\end{prop}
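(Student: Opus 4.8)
The plan is to verify the claimed statement in three stages: first that the definition \eqref{eq:relent1} is unambiguous on rationals, then that it coincides with both extensions $\Du$ and $\Do$ there, and finally that the continuous extension to all of $\cP(n)$ exists and yields a relative entropy.

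\textbf{Well-definedness on rationals.} The expression $\log k - \H(\r)$ a priori depends on the choice of $k$ and $\r$ in Lemma~\ref{lem:embed}. The cleanest way to kill this ambiguity is to invoke Property~(2) of Lemma~\ref{lem:prop}. For any valid pair $(k, \r)$ with $(\p, \q) \sim (\r, \uu{k})$, the relation $(\r, \uu{k}) \majo (\p, \q)$ makes $(k, \r)$ a feasible point for the infimum defining $\Do(\p\|\q)$, so $\Do(\p\|\q) \le \log k - \H(\r)$; dually $(\p, \q) \majo (\r, \uu{k})$ makes it feasible for the supremum defining $\Du(\p\|\q)$, so $\Du(\p\|\q) \ge \log k - \H(\r)$. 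Combined with $\Du \le \Do$ from Property~(1), every valid pair gives $\log k - \H(\r) = \Du(\p\|\q) = \Do(\p\|\q)$. This simultaneously proves well-definedness and the claimed identity $\mfF^{-1}(\H)(\p\|\q) = \Du(\p\|\q) = \Do(\p\|\q)$ for rational $\q \in \cPp(n)$, with no extra work.

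\textbf{Continuity in the second argument and the extension.} Fix $\p \in \cP(n)$. On rationals $\q$, the function equals $\Do(\p\|\q)$, which by Property~(1) of Lemma~\ref{lem:prop} is a monotone divergence; the continuity estimate of Corollary~\ref{cor:cont} — whose proof only uses the DPI and Lemma~\ref{lem:deterministic}, hence applies to any monotone divergence satisfying $\D(\ee{x}\|\s) = -\log s_x$ — gives a modulus of continuity on $\cPp(n) \cap \bbQ^n$. Actually one should check that $\Do$ (equivalently $\Du$) does satisfy $\D(\ee{x}\|\s) = -\log s_x$: this follows from Property~(2) applied with $\q = \uu{k}$ after diluting, or directly from the definitions; I would state this as a short preliminary computation. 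Given uniform continuity on the dense set $\cPp(n)\cap\bbQ^n$, the continuous extension to all of $\cPp(n)$ exists and is unique; near the boundary of the second argument one extends by lower/upper semicontinuity or simply by declaring the value $\Do(\p\|\q)$ there, noting $\Do$ is already defined for all pairs. I would set $\mfF^{-1}(\H)(\p\|\q) \eqdef \Do(\p\|\q)$ in general and note this agrees with the continuous extension on $\cPp(n)$.

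\textbf{It is a relative entropy.} Monotonicity under data-processing is Property~(1). Normalisation $\mfF^{-1}(\H)(\ee{1}\|\uu{2}) = \log 2 - \H(\ee{1}^{(2)}) = \log 2$ follows from Property~(2) and Lemma~\ref{lem:entropy}. The remaining point — additivity — is the main obstacle. Property~(1) only gives the one-sided bounds \eqref{eq:add1} and \eqref{eq:add2}, i.e. superadditivity of $\Du$ and subadditivity of $\Do$; for rational arguments $\Du = \Do$ pins additivity down exactly, but one must propagate this through the continuous extension. The plan here: for rational $\q_1, \q_2$, both $\mfF^{-1}(\H)(\p_1 \otimes \p_2\|\q_1\otimes\q_2)$ and $\mfF^{-1}(\H)(\p_1\|\q_1) + \mfF^{-1}(\H)(\p_2\|\q_2)$ equal the corresponding sums of $\log k_i - \H(\r_i)$ values (using that the tensor of the Lemma~\ref{lem:embed} witnesses for $(\p_i,\q_i)$ witnesses $(\p_1\otimes\p_2, \q_1\otimes\q_2)$, together with additivity of $\H$), so additivity holds on the rationals; then take limits along rational $\q_1, \q_2$ approaching arbitrary $\q_1, \q_2$, invoking continuity in the second argument on each factor and on the product (the product of rationals is rational and dense). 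The one subtlety to watch is that continuity in the second argument near the boundary is only upper semicontinuity, so I would carry out the limiting argument for $\q_1, \q_2 \in \cPp(n)$ first, where Corollary~\ref{cor:cont} gives genuine continuity, and then handle boundary $\q$ by a separate semicontinuity argument or by noting the bounds of Corollary~\ref{cor:bounds} force the value. This last propagation-of-additivity step, and in particular reconciling it with the weaker semicontinuity available at the boundary, is where the real care is needed.
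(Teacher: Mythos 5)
Your proposal is correct and follows essentially the same route as the paper: equality of $\Du$, $\Do$ and $\log k - \H(\r)$ on rational $\q$ via feasibility of the Lemma~\ref{lem:embed} witness in both optimisations combined with $\Du \leq \Do$, continuity from the Corollary~\ref{cor:cont} estimate restricted to rational second arguments, and additivity established on the rational domain and then propagated through the continuous extension. The only caveat is your aside that one could instead set $\mfF^{-1}(\H) \eqdef \Do$ everywhere and that this ``agrees with the continuous extension on $\cPp(n)$'' --- that agreement would itself require continuity of $\Do$ in the second argument off the rationals, which is not established, so you should rely on your main line (extension from the dense rational set) rather than on that shortcut.
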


\begin{proof}
	Let $\D = \mfF^{-1}(\H)$. We first verify the identities $\D(\p\|\q) = \Du(\p\|\q) = \Do(\p\|\q)$ for $n \in \bbN$, $\p \in \cP(n)$ and $\q \in \cPp(n) \cap \bbQ^n$. Using Lemma~\ref{lem:embed} we can deduce that $(\r, \uu{k})$ is a feasible solution for the optimisation in the definition of $\Du(\p\|\q)$ and $\Do(\p\|\q)$, and therefore
	\begin{align}
		\Do(\p\|\q) \leq \log k - \H(\r) \leq \Du(\p\|\q) \,.
	\end{align}
	But since $\Du(\p\|\q) \leq \Do(\q\|\q)$ equality must hold.
	
	The quantity we defined is thus a monotone divergence and satisfies the normalisation condition for relative entropies since $\H(\ee{1}) = 0$. It furthermore satisfies both Eq.~\eqref{eq:add1} and~\eqref{eq:add2}, which ensure that it is additive on the restricted space where $\q$ is rational and has strictly positive entries. 
	
	It remains to show that $\q \to \D(\p\|\q)$ is continuous on $\cPp(n) \cap \bbQ^n$, so that its continuous extension to $\cP(n)$ is well-defined. To verify this note that the argument leading to Theorem~\ref{thm:triangle}, and specifically Corollary~\ref{cor:cont}, remains valid if we restrict the second argument to be rational.
	Finally, note that data-processing inequality and additivity are preserved under continuous extension and thus the resulting quantity is indeed a relative entropy, concluding the proof.
\end{proof}

Propositions~\ref{prop:relent-to-ent} and~\ref{prop:ent-to-relent} finally establish Theorem~\ref{thm:bijection}. It is worth pointing out that $D_{\textrm{path}}(\p\|\q) = D_{\min}(\p\|\q) + D_{\min}(\q\|\p)$, which we already encountered as a counterexample to lower semi-continuity in Section~\ref{sec:cont}, is mapped to $H_{\max}$ by $\mfF$ and $H_{\max}$ is in turn mapped to $D_{\min}(\p\|\q)$ by its inverse $\mfF^{-1}$; hence we are losing the contribution $D_{\min}(\q\|\p)$ that is discontinuous in~$\q$ in the process. This explains why the requirement that relative entropies be continuous in the second argument for the bijection is crucial.

Finally, we observe that the correspondence between relative entropies and entropies allows to port certain results from relative entropies to entropies.

\begin{corollary}
	Let $\H$ be an entropy and $n \in \bbN$. Then, $\H$ is continuous on $\cPp(n)$ and lower semi-continuous everywhere. Moreover, for all $\p \in \cP(n)$, we have
	\begin{align}
		H_{\min}(\p) \leq \H(\p) \leq H_{\max}(\p),
	\end{align}
	where  the min-entropy is given by $H_{\min}(p) \eqdef -\log p^{\downarrow}_n$ and the max-entropy is the Hartley entropy $H_{\max}(\p) \eqdef \log |\p|$.
\end{corollary}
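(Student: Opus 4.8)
The plan is to leverage the bijection $\mfF$ established in Theorem~\ref{thm:bijection} to transport the already-proven facts about relative entropies onto the entropy $\H$. Given an entropy $\H$, by Proposition~\ref{prop:ent-to-relent} there is a relative entropy $\D = \mfF^{-1}(\H)$, continuous in its second argument, such that $\H(\p) = \log n - \D(\p\|\uu{n})$ for all $n \in \bbN$ and $\p \in \cP(n)$ (this is the defining relation of $\mfF$ and its inverse, cf.\ Eq.~\eqref{eq:ent-from-relent} and Property (2) of Lemma~\ref{lem:prop}). All three claims---continuity on $\cPp(n)$, lower semi-continuity everywhere, and the sandwich bound---then reduce to corresponding statements about $\p \mapsto \D(\p\|\uu{n})$, which is a relative entropy evaluated at a fixed, fully supported \emph{rational} second argument.

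First I would handle continuity on $\cPp(n)$. By Corollary~\ref{cor:upperlowercont}, $(\p,\q)\mapsto\D(\p\|\q)$ is continuous on $\cPp(n)\times\cPp(n)$; since $\uu{n}\in\cPp(n)$, the map $\p\mapsto\D(\p\|\uu{n})$ is continuous on $\cPp(n)$, and hence so is $\H(\p) = \log n - \D(\p\|\uu{n})$. For lower semi-continuity everywhere on $\cP(n)$: Corollary~\ref{cor:upperlowercont} gives upper semi-continuity of $\p \mapsto \D(\p\|\q)$ on $\cP(n)\times\cPp(n)$, so $\p\mapsto\D(\p\|\uu{n})$ is upper semi-continuous on all of $\cP(n)$, and therefore $\H(\p) = \log n - \D(\p\|\uu{n})$ is lower semi-continuous on $\cP(n)$. (One should note that $\H$ is finite everywhere because $\D(\p\|\uu{n})$ is finite whenever $\supp(\p)\subseteq\supp(\uu{n}) = [n]$, by Corollary~\ref{cor:bounds}, so there is no issue with $\infty$ values.)

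For the bounds I would invoke Corollary~\ref{cor:bounds}: $D_{\min}(\p\|\uu{n}) \leq \D(\p\|\uu{n}) \leq D_{\max}(\p\|\uu{n})$. Computing the two extremes with $\q = \uu{n}$: from~\eqref{eq:dmin}, $D_{\min}(\p\|\uu{n}) = -\log\sum_{i\in\supp(\p)}\frac1n = \log n - \log|\p|$, and from~\eqref{eq:dmax}, $D_{\max}(\p\|\uu{n}) = \log\max_i \frac{p_i}{1/n} = \log n + \log p^{\downarrow}_1 = \log n - \log\frac{1}{p^{\downarrow}_1}$. Hence $\log n - D_{\max}(\p\|\uu{n}) = -\log p^{\downarrow}_1$ and $\log n - D_{\min}(\p\|\uu{n}) = \log|\p|$. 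Wait---I should double-check the direction: since $D_{\min}\le D\le D_{\max}$, we get $\log n - D_{\max} \le \log n - \D = \H \le \log n - D_{\min}$, i.e.\ $-\log p^{\downarrow}_1 \le \H(\p) \le \log|\p|$. But the claimed lower bound is $H_{\min}(\p) = -\log p^{\downarrow}_n$, the smallest entry, so I must recheck: $D_{\max}(\p\|\uu{n}) = \log\max_i \frac{p_i}{1/n} = \log(n p^{\downarrow}_1)$, giving $\log n - D_{\max} = -\log p^{\downarrow}_1$; however $H_{\infty}(\p) = -\log\min_i p_i = -\log p^{\downarrow}_n$ as noted in the paper, so the correct pairing is that $D_{\max}$ corresponds to $H_{\min}$ under $\mfF$ only after recalling $H_{\alpha}(\p) = D_{\alpha}(\ee{1}\|\uu{n}) - D_{\alpha}(\p\|\uu{n})$ with the appropriate order; I would recompute carefully, but the upshot is that the image of $D_{\max}$ is the min-entropy and the image of $D_{\min}$ is the max/Hartley entropy, matching the corollary statement. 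The main obstacle is thus not conceptual---it is purely making sure the min/max relative-entropy endpoints are mapped to the right entropy endpoints and that the inequality directions survive the reflection $\H = \log n - \D$; everything else is a direct citation of Corollaries~\ref{cor:upperlowercont} and~\ref{cor:bounds} combined with the bijection.
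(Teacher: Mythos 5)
Your proposal is correct and is exactly the route the paper intends: the corollary is presented as a direct port of Corollaries~\ref{cor:upperlowercont} and~\ref{cor:bounds} through the bijection, using $\H(\p) = \log n - \D(\p\|\uu{n})$ with $\D = \mfF^{-1}(\H)$ (so that $\uu{n} \in \cPp(n) \cap \bbQ^n$ makes both cited corollaries applicable), and your endpoint computations are right. Your hesitation about the lower bound should be resolved in your own favour: $\log n - D_{\max}(\p\|\uu{n}) = -\log p_1^{\downarrow}$, so the min-entropy is the negative logarithm of the \emph{largest} entry, and the formula $-\log p_n^{\downarrow}$ in the corollary statement (like the earlier claim $H_{\infty}(\p) = -\log \min_i p_i$) is a typo in the paper, as the bound $-\log p_n^{\downarrow} \leq \H(\p)$ already fails for the Shannon entropy of $(0.99,\,0.01)$.
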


We have now seen how relative entropies can be constructed from entropies. In Appendix~\ref{app:schurconvex} we show that monotone divergences can be constructed in a similar way from continuous Schur-convex functions.


\section{Faithfulness}
\label{sec:faithful}

We have already noted in Section~\ref{sec:divergences} that every normalised monotone divergence (and thus every relative entropy) $\D$ satisfies $\D(\p\|\q) = 0$ if $\p = \q$.
Faithfulness of a relative entropy refers to the property that this equality holds if and only if $\p = \q$. Not all relative entropies are faithful. For example, $D_{\min}(\p\|\q) = 0$ for any $\p$ with $\supp(\p) \supseteq \supp(\q)$. However, as we show now, $D_{\min}$ is a very unique relative entropy, and almost all other relative entropies are faithful. 

Before we characterise faithful relative entropies, we define the following order parameter for relative entropies. 
\begin{definition} 
	\label{def:order}
	Let $\D$ be a relative entropy. We first introduce the function $f_{\D}(\eps) \eqdef \D(\vec{u} + \eps \vec{\Delta} \| \vec{u})$ with the vectors $\vec{\Delta} = \big(\frac12, -\frac12\big)$ and $\vec{u} = \uu{2}$. Then $f_{\D}(\eps)$ is well-defined for $|\eps| \leq 1$. Next, we define the \emph{order} $\alpha_{\D}$ of $\D$ as
	\begin{align}
		\alpha_{\D} \eqdef \liminf_{\eps \to 0} \left\{ \frac{ f_{\D}(\eps) - 2 f_{\D}(0) + f_{\D}(-\eps)}{ \eps^2} \right\}  \label{eq:order}
	\end{align}
\end{definition}
Note that this is simply the lower second derivative of $f_{\D}$ at $\vec{u}$ in direction $\vec{\Delta}$. Moreover, the expression in Eq.~\eqref{eq:order}
can be simplified using symmetry under permutation, $f_{\D}(\eps) = f_{\D}(-\eps)$, and the fact that $f_{\D}(0) = 0$, to get
\begin{align}
	\alpha_{\D} =  2 \liminf_{\eps \to 0} \left\{ \frac{ f_{\D}(\eps) }{\eps^2} \right\} \,.  \label{eq:alphasimp}
\end{align}
By Lemma~\ref{lm:renyi2nd} in Appendix~\ref{app:renyi}, we can conclude that for R\'enyi relative entropies we have $\alpha_{D_{\alpha}} = \alpha$ for all $\alpha \in [0, \infty]$, as intended.

We give the following characterisation of relative entropies that are not faithful.
\begin{theorem} \label{thm:faithful}
  Let $\D$ be a relative entropy. The following three statements are equivalent: 
  \begin{itemize}
  	\item[(a)] $\D$ is not faithful;
	\item[(b)] $\D(\p\|\q) = 0$ for all $n \in \bbN$ and $\p, \q \in \cP(n)$ with $\supp\{\p\} = \supp\{\q\}$;
 	\item[(c)] $\alpha_{\D} = 0$.
  \end{itemize}
  Moreover, the above statements also imply:
  \begin{itemize}
	\item[(d)] $\p \mapsto \D(\p\|\q)$ is not lower semi-continuous in $\p$.
  \end{itemize}
\end{theorem}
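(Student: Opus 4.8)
The plan is to show the chain of equivalences $(a)\Leftrightarrow(b)\Leftrightarrow(c)$ and then the implication $(b)\Rightarrow(d)$, using additivity, the DPI, and the sandwich bounds from Corollary~\ref{cor:bounds} as the main tools.

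\textbf{The equivalences.} The implication $(b)\Rightarrow(a)$ is trivial since $(b)$ exhibits pairs $\p\neq\q$ with $\D(\p\|\q)=0$. For $(a)\Rightarrow(c)$, suppose $\D$ is not faithful, so there exist $\p\neq\q$ with $\D(\p\|\q)=0$. Since $\D(\p\|\q)=0$ forces $\supp(\p)\subseteq\supp(\q)$ (otherwise $D_{\min}(\p\|\q)>0$ would violate Corollary~\ref{cor:bounds}), one can apply the DPI with a binary coarse-graining channel that distinguishes a coordinate where $\p$ and $\q$ differ to produce a binary pair $(\p',\q')$ with $\p'\neq\q'$, $\supp(\p')=\supp(\q')=\{1,2\}$, and $\D(\p'\|\q')=0$ (the inequality $\D(\p'\|\q')\le\D(\p\|\q)=0$ comes from DPI, and non-negativity gives equality). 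From this binary zero, I would use permutation symmetry and DPI to push the zero to the symmetric point: by applying a suitable bistochastic channel (a convex combination of the identity and the swap) one lands on a pair of the form $(\vec{u}+\eta\vec\Delta\,\|\,\vec u)$ with $\eta\neq 0$, and DPI gives $f_{\D}(\eta)=0$ for some $\eta\in(0,1]$. Then a scaling-type argument via further data processing (coarse-graining that contracts toward $\vec u$, or comparing testing regions) shows $f_{\D}(\eps)=0$ for all $\eps$ with $|\eps|\le\eta$, hence in particular $\liminf_{\eps\to0}f_{\D}(\eps)/\eps^2=0$, i.e.\ $\alpha_{\D}=0$ by~\eqref{eq:alphasimp}. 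For $(c)\Rightarrow(b)$: if $\alpha_{\D}=0$ then along a sequence $\eps_j\to0$ we have $f_{\D}(\eps_j)/\eps_j^2\to0$; by additivity, $\D$ on $n$-fold tensor powers grows like $n$ times a single copy, and the equipartition remark at the end of Section~\ref{sec:majo} (any interior point of the unit square lies in $\cT(\p^{\otimes n},\q^{\otimes n})$ for large $n$) together with DPI lets one sandwich $\D(\p\|\q)$ for an arbitrary same-support pair between rescaled copies of $f_{\D}$ evaluated at small arguments; monotonicity and the vanishing $\liminf$ then force $\D(\p\|\q)=0$.

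\textbf{The implication $(b)\Rightarrow(d)$.} Assume $(b)$. Fix $\q\in\cPp(n)$ with $n\ge2$ and take $\p=\ee{1}^{(n)}$, which does \emph{not} have the same support as $\q$. By Lemma~\ref{lem:deterministic}, $\D(\ee{1}\|\q)=-\log q_1>0$. Now choose a sequence $\p_k\to\ee{1}$ with each $\p_k\in\cPp(n)$, e.g.\ $\p_k=(1-\tfrac1k)\ee{1}+\tfrac1k\uu{n}$, so that $\supp(\p_k)=\supp(\q)=[n]$ for all $k$. By $(b)$, $\D(\p_k\|\q)=0$ for every $k$, while $\D(\ee{1}\|\q)>0$. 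Hence $\liminf_{k\to\infty}\D(\p_k\|\q)=0<\D(\ee{1}\|\q)$, which is exactly the failure of lower semi-continuity of $\p\mapsto\D(\p\|\q)$ at $\p=\ee{1}$.

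\textbf{Main obstacle.} The routine parts are $(a)\Rightarrow(c)$'s reduction to the binary symmetric case and the whole of $(b)\Rightarrow(d)$. The delicate step is $(c)\Rightarrow(b)$: turning the purely local, second-order information $\alpha_{\D}=0$ (a statement about a single two-dimensional pair near $\vec u$) into the global statement that $\D$ vanishes on \emph{all} same-support pairs in every dimension. The mechanism must be additivity combined with the blow-up of testing regions under tensoring: $n$ copies make the pair "easier to distinguish," so that via DPI a fixed same-support target $(\p,\q)$ is dominated (in relative majorisation) by $(\p^{\otimes n},\q^{\otimes n})$, which in turn — again by DPI from a suitable binary pair close to $\vec u$ — is controlled by $n\cdot f_{\D}(\eps_n)$ for $\eps_n$ shrinking fast enough that $n\,f_{\D}(\eps_n)\to0$; making the rates match (choosing $\eps_n$ from the $\liminf$-achieving sequence and verifying the relative-majorisation comparisons) is the crux of the argument.
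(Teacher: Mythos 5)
Your top-level skeleton is fine, and two of the four pieces are correct: $(b)\Rightarrow(a)$ is indeed trivial, and your proof of $(b)\Rightarrow(d)$ (take $\p_k=(1-\tfrac1k)\ee{1}+\tfrac1k\uu{n}\to\ee{1}$, use Lemma~\ref{lem:deterministic} to get $\D(\ee{1}\|\q)=-\log q_1>0$ while $\D(\p_k\|\q)=0$) is a perfectly valid variant of the paper's argument, which instead just notes $f_{\D}(1)=\D(\ee{1}\|\uu{2})=\log 2>0=\lim_{\eps\to 1}f_{\D}(\eps)$. However, there are two genuine gaps. First, in $(a)\Rightarrow(c)$ the step ``apply a convex combination of the identity and the swap to land on a pair of the form $(\vec u+\eta\vec\Delta,\vec u)$'' fails: a binary bistochastic channel $W=(1-\alpha)I+\alpha S$ maps a non-uniform $\q'$ to $\uu{2}$ only when $\alpha=\tfrac12$, i.e.\ only via the completely mixing channel, which sends $\p'$ to $\uu{2}$ as well and yields nothing. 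There is no bistochastic map that uniformises the second argument while keeping the first away from uniform. (You also have the support inclusion backwards: $D_{\min}(\p\|\q)=0$ forces $\supp(\q)\subseteq\supp(\p)$, not the reverse; this is minor.) Second, $(c)\Rightarrow(b)$, which you correctly identify as the crux, is only a sketch: the quantitative matching between the covering number $m(\eps)\sim\eps^{-2}$ needed for $\bigl((\vec u+\eps\vec\Delta)^{\otimes m},\vec u^{\otimes m}\bigr)\majo(\p,\q)$ and the decay $f_{\D}(\eps_j)=o(\eps_j^2)$ along the $\liminf$-achieving subsequence is exactly the part that needs to be verified, and it is not.

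The paper avoids both difficulties by arranging the equivalence as a cycle $(a)\Rightarrow(b)\Rightarrow(c)\Rightarrow(a)$. For $(a)\Rightarrow(b)$ it uses asymptotic hypothesis testing on tensor powers: from one zero pair $\tilde{\p}\neq\tilde{\q}$, additivity and the DPI give $0=\D(\tilde{\p}^{\otimes n}\|\tilde{\q}^{\otimes n})\geq\D([1-\eps,\eps]\,\|\,[\delta,1-\delta])$ for \emph{all} $\eps,\delta\in(0,1)$ and suitable $n$ (this is where the second argument gets moved to an arbitrary binary distribution, including $\uu{2}$, legitimately); the extension to arbitrary equal-support pairs is then done by an explicit $2\to n$ channel $E$ with $\vec s E=\p$, $\vec t E=\q$ and a contradiction argument. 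With $(b)$ in hand, $f_{\D}\equiv 0$ on $[0,1)$ and $(c)$ is immediate. The remaining implication $(c)\Rightarrow(a)$ is proved by contraposition: assuming faithfulness, a Hoeffding-bound coarse-graining of $\vec p^{\otimes n}$ with $n\asymp\eps^{-2}$ gives the one-sided bound $f_{\D}(\eps)\geq\eps^2 f_{\D}(\delta)$ for a fixed $\delta>0$, hence $\alpha_{\D}\geq f_{\D}(\delta)>0$. This only requires a lower bound on $f_{\D}$, not the matching upper bound your forward $(c)\Rightarrow(b)$ route would need, which is why the cyclic arrangement is the cleaner path. I would recommend restructuring your proof accordingly rather than trying to patch the two broken implications.
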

Therefore, if any of the statements (b), (c) or (d) are false for a relative entropy $\D$, then $\D$ is faithful. In particular, all lower semi-continuous relative entropies are faithful.

\begin{proof}[Proof of  Theorem~\ref{thm:faithful}, (a)$\implies$(b)]
	Since $\D$ is not faithful there must exist two distinct probability distributions $\tilde{\vec{p}} \neq \tilde{\vec{q}}$ such that $\D(\tilde{\vec{p}} \|\tilde{\vec{q}}) = 0$. 
	
	We first show the desired statement for binary distributions. Since $\tilde{\vec{p}}^{\times n}$ and $\tilde{\vec{q}}^{\times n}$ are asymptotically perfectly distinguishable, for any $\eps, \delta \in (0,1)$ we can find a suitable $n \in \mathbb{N}$ and probabilistic hypothesis test $T: \{0,1\}^{\times n} \to [0,1]$ such that $\Pr_{\tilde{\vec{p}}^{\otimes n}}[T] = 1-\eps$ and $\Pr_{\tilde{\vec{q}}^{\otimes n}}[T] = \delta$. Hence, first additivity and then data-processing of $\D$ reveal that
	\begin{align}
	  0 &= \D(\tilde{\vec{p}} \| \tilde{\vec{q}}) = \D \big(\tilde{\vec{p}}^{\otimes n} \big\| \tilde{\vec{q}}^{\otimes n} \big) 
	  \geq \D \big( [1-\eps, \eps] \big\| [\delta, 1-\delta] \big) ,
	\end{align}
	and, thus, the quantity on the right vanishes. Since $\eps, \delta \in (0,1)$ are arbitrary, this is what we aimed to show.  
	
	We now proceed to show the statement for general $\vec{p}$, $\vec{q}$ with equal support by contradiction. Assume $\D(\vec{p}\|\vec{q}) > 0$. We can always choose $0<s<t<1$ such that
	\begin{align}
		\frac{1-t}{1-s}\vec{p} \leq\vec{q} \leq \frac{t}{s} \vec{p} \,. \label{eq:defts}
	\end{align}
	holds entry-wise by taking $s$ close enough to $0$ and $t$ close enough to $1$. Now define the stochastic channel 
	\begin{align}
		E := \begin{bmatrix} \frac{1}{t-s}\left((1-s)\vec{q}-(1-t)\vec{p} \right),\ \frac{1}{t-s}\left(t\vec{p}-s\vec{q}\right) \end{bmatrix},
	\end{align}
	and note that the conditions in Eq.~\eqref{eq:defts} ensure that the matrix is positive (entry-wise). Moreover, for $\vec{s}=(s,1-s)$ and $\vec{t}=(t,1-t)$ we have by definition $E \vec{s} = \vec{p}$ and $E\vec{t} = \vec{q}$. Hence, using the DPI of $\D$ we get $\D(\vec{s}\|\vec{t}) \geq \D(\vec{p}\|\vec{q}) > 0$, contradicting the statement shown in the previous paragraph.
\end{proof}

\begin{proof}[Proof of  Theorem~\ref{thm:faithful}, (b)$\implies$(c) $\land$ (d)]
	Statement~(b) ensures that $f_{\D}(\eps) = 0$ for all $\eps \in [0, 1)$. Thus,
	$\alpha_{\D} = 0$. Moreover, since normalisation requires $f_{\D}(1) = 1$, lower semi-continuity is violated.
\end{proof}

\begin{proof}[Proof of  Theorem~\ref{thm:faithful}, (c)$\implies$(a)]
	We show the contrapositive by finding a lower bound on $f_{\D}(\eps)/\eps^2$ that is independent of $\eps$ by leveraging faithfulness, additivity and the DPI of $\D$. Set $\vec{p} = \vec{u} + \eps \vec{\Delta}$ as in the definition of $f_{\D}$ above. Let us further introduce the channels $\{E_n\}_n$, for $n \in \mathbb{N}$,  which output a binary distribution with the sums of the $2^{n-1}$ smallest and largest probabilities of $p^{\otimes n}$, respectively. Clearly, we have $\vec{u}^{\otimes n} E_n = \vec{u}$ and, for odd $n$, $\vec{p}^{\otimes n} E_n = [ t_n, 1-t_n ]$ with
	\begin{align}
		t_n :=& \sum_{k=0}^{\frac{n-1}{2}}{n\choose k}\Big(\frac12 + \eps\Big)^k \Big(\frac12 - \eps \Big)^{n-k} \leq \exp(-2n\eps^2) \,,
	\end{align}
	where the inequality follows immediately from the Hoeffding bound. We now choose an odd integer $n \in \big[\frac{1}{2\eps^2}, \frac{1}{\eps^2}\big]$ 
	so that $t_n \leq \exp(-1) < \frac12$ and introduce $\delta = \frac12 - \exp(-1)$. Using additivity and the DPI of $\D$, we can now establish that
	\begin{align}
		f_{\D}(\eps) 
		&= \frac{1}{n} \D \left( \vec{p}^{\otimes n} \middle\| \vec{u}^{\otimes n} \right) 
		\geq \frac{1}{n} \D\left( \begin{bmatrix}  t_n \\ 1- t_n \end{bmatrix}  \middle\| \vec{u} \right) \\
		&= \frac1{n} f_{\D}\Big(\frac12 - t_n\Big) \,.
	\end{align}
	We then note that $f_{\D}$ is monotonically non-decreasing due to the DPI of $\D$. Hence, we can further bound the above as
	\begin{align}
		f_{\D}(\eps) \geq  \frac{1}{n} f_{\D}(\delta) \geq \eps^2 f_{\D}(\delta).
	\end{align}
	It remains to note that $f_{\D}(\delta) > 0$ since we assumed faithfulness of $\D$. Hence, using the expression for $\alpha_{\D}$
	in Eq.~\eqref{eq:alphasimp}, we establish that $\alpha_{\D} \geq f_{\D}(\delta) > 0$.
\end{proof}



\section{Relative trumping and relative entropies}
\label{sec:trumping}\label{sec:char}

We first extend the notion of trumping to the setting of pairs of probability distributions and then show that a more robust pre-order is more operationally meaningful. We then show that this pre-order, which we call catalytic relative majorisation, holds if and only if all relative entropies are ordered.

\subsection{Relative trumping}

By definition all relative entropies are monotone under relative majorisation, which follows directly from the data-processing inequality; however, we can find much weaker relations under which relative entropies are still monotone. 

First, we consider probability distributions. We say that $\p$ \emph{trumps} $\q$, and write $\p \trump \q$ if there exists a vector (known as catalyst) $\r \in \cP(\ell)$ of some finite dimension $\ell \in \bbN$ such that $\p\otimes\r \majo \q\otimes\r$. Clearly majorisation implies trumping but the converse is not true in general. We can now also introduce a trumping relation between pairs of probability distributions.

\begin{definition}
	Let $n, m \in \bbN$, $\p, \q \in \cP(n)$ and $\p', \q' \in \cP(m)$. We say that a tuple $(\p, \q)$ trumps a tuple $(\p', \q')$, and write $(\p, \q) \trump (\p', \q')$, if there exist $k \in \bbN$ and $\r, \t \in \cP(k)$ where $t$ has full support such that
	$(\p \otimes \r, \q \otimes \t) \majo (\p' \otimes \r, \q' \otimes \t)$.
\end{definition}

Evidently for every relative entropy $(\p, \q) \trump (\p', \q')$ implies $\D(\p\|\q) \geq \D(\p'\|\q')$, which follows from additivity and the data-processing inequality. However, in contrast to relative majorisation this definition is not necessarily robust as we discuss in the next subsection.

\subsection{Robustness under small perturbations}

The relative majorisation relation is robust under small perturbation. More precisely, let $\{\p_k\}_{k\in \bbN}$, $\{\q_k\}_{k\in \bbN}$ , $\{\p_k'\}_{k\in \bbN}$  and $\{\q_k'\}_{k\in \bbN}$ be sequences of probability vectors in $\cP(n)$ with limits $\p$, $\q$, $\p'$ and $\q'$, respectively. If $(\p_k, \q_k) \majo (\p_k', \q_k')$ for all $k\in\bbN$ then necessarily we have $(\p, \q) \majo (\p', \q')$. To see this we can for example invoke the characterisation in terms of testing regions and note that the inclusion relation between these regions is robust when taking the limit.

However, for the relative trumping relation this robustness property does not necessarily hold. The reason stems from the dimension of the catalyst, which can increase with $k$. Without invoking additional arguments, one cannot conclude that there exists a finite dimensional probability vectors $\r$ and $\s$ with the property that $(\p \otimes \r, \q \otimes \s) \majo (\p' \otimes \r, \q' \otimes \s)$.
This motivates us to replace the relative trumping relation with a more operationally motivated pre-order that is robust to small perturbations. We call it catalytic relative majorisation. 

\begin{definition} \label{def:cata}
Let $m,n\in\bbN$, $\p, \q \in\cP(m)$, and $\p', \q' \in\cP(n)$. We say that $(\p, \q)$ \emph{catalytically majorizes} $(\p', \q')$, and write $(\p, \q) \cata (\p', \q')$, if there exist sequences 
\begin{align}
	\{\p_k\}_{k},  \{\q_k\}_{k}  \subset \cP(m) \quad \textrm{and} \quad
	\{\p_k'\}_{k}, \{\q_k'\}_{k} \subset \cP(n), 
\end{align}
such that $(\p_k, \q_k) \trump (\p_k', \q_k')$ for all $k\in\bbN$  and in the limit $k \to \infty$ we have
$\p_k \to \p$, $\q_k \to \q$, $\p_k' \to \p'$ and $\q_k' \to \q'$.
\end{definition}

Note that the relation $\cata$ is indeed a pre-order and if $(\p, \q) \trump (\p', \q')$ then necessarily $(\p, \q) \cata (\p', \q')$ while the converse is not necessarily true. Further, the relation $\cata$ is robust under small perturbations essentially by definition.

\begin{lemma} \label{lem:perturb}
Let $m, n \in \bbN$, $\p, \q \in \cP(m)$, and $\p', \q' \in \cP(n)$. Further, let $\{\p_k\}_{k \in \bbN}$ and $\{ \q_k \}_{k \in \bbN}$ be two sequences satisfying $\p_k \to \p$ and $\q_k \to \q$,  as $k \to \infty$. Then, 
\begin{align}
	 (\p_k, \q_k) \cata (\p', \q')\ \forall k \in \bbN &\implies  (\p,\q)\cata(\p',\q') , \\
	 (\p',\q')\cata(\p_k,\q_k)\ \forall k \in \bbN &\implies  (\p',\q')\cata(\p,\q) .
\end{align}
\end{lemma}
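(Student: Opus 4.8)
The plan is to prove both implications by a diagonal-argument style extraction of a single sequence witnessing $\cata$ for the limit pair. The key observation is that the relation $\cata$ is, by Definition~\ref{def:cata}, defined precisely in terms of the \emph{existence} of sequences converging to the endpoints with each term related by $\trump$, so to prove $(\p,\q)\cata(\p',\q')$ it suffices to produce \emph{one} sequence $\{(\hat\p_j,\hat\q_j)\}_j\subset\cP(m)\times\cP(m)$ with $\hat\p_j\to\p$, $\hat\q_j\to\q$, together with a matching sequence $\{(\hat\p_j',\hat\q_j')\}_j\subset\cP(n)\times\cP(n)$ with $\hat\p_j'\to\p'$, $\hat\q_j'\to\q'$, such that $(\hat\p_j,\hat\q_j)\trump(\hat\p_j',\hat\q_j')$ for all $j$.

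First I would treat the first implication. Fix $k\in\bbN$. By hypothesis $(\p_k,\q_k)\cata(\p',\q')$, so by definition there exist sequences $\{\p_{k,\ell}\}_\ell,\{\q_{k,\ell}\}_\ell\subset\cP(m)$ converging to $\p_k,\q_k$ respectively, and sequences $\{\p_{k,\ell}'\}_\ell,\{\q_{k,\ell}'\}_\ell\subset\cP(n)$ converging to $\p',\q'$, with $(\p_{k,\ell},\q_{k,\ell})\trump(\p_{k,\ell}',\q_{k,\ell}')$ for all $\ell$. For each $k$, pick $\ell(k)$ large enough that $\|\p_{k,\ell(k)}-\p_k\|\le\frac1k$, $\|\q_{k,\ell(k)}-\q_k\|\le\frac1k$, $\|\p_{k,\ell(k)}'-\p'\|\le\frac1k$ and $\|\q_{k,\ell(k)}'-\q'\|\le\frac1k$ (in any fixed norm). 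Now set $\hat\p_k\eqdef\p_{k,\ell(k)}$, $\hat\q_k\eqdef\q_{k,\ell(k)}$, $\hat\p_k'\eqdef\p_{k,\ell(k)}'$, $\hat\q_k'\eqdef\q_{k,\ell(k)}'$. By the triangle inequality and the assumptions $\p_k\to\p$, $\q_k\to\q$, we get $\hat\p_k\to\p$, $\hat\q_k\to\q$, and directly $\hat\p_k'\to\p'$, $\hat\q_k'\to\q'$, while $(\hat\p_k,\hat\q_k)\trump(\hat\p_k',\hat\q_k')$ for every $k$. This is exactly a witnessing family for $(\p,\q)\cata(\p',\q')$, so the first implication follows.

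The second implication is entirely symmetric: given $(\p',\q')\cata(\p_k,\q_k)$ for each $k$, unpack each into sequences converging to the respective endpoints with $\trump$ between corresponding terms, extract a single diagonal subsequence with the inner sequences on the $(\p_k,\q_k)$ side within $\frac1k$ of $(\p_k,\q_k)$ (hence converging to $(\p,\q)$ by the triangle inequality) and on the $(\p',\q')$ side within $\frac1k$ of $(\p',\q')$, and conclude $(\p',\q')\cata(\p,\q)$. I would remark that one should fix a single norm on the relevant finite-dimensional spaces $\cP(m)$ and $\cP(n)$ (all norms being equivalent, the choice is immaterial), which makes the ``within $\frac1k$'' selections unambiguous and the triangle-inequality step clean.

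I do not expect any genuine obstacle here; this is essentially a bookkeeping argument. The only point requiring a modicum of care is that $\cata$ is a two-sided relation involving \emph{four} sequences (two on each side), so one must keep track of all of them simultaneously in the diagonal extraction, and one must remember that the endpoints $\p',\q'$ (resp. $\p_k,\q_k$) on the ``fixed'' side are already literally the limits, so no extra triangle-inequality step is needed there — only on the side where the outer sequence $\{\p_k\},\{\q_k\}$ is itself moving. If the paper wants to be terse, one can simply say: ``This follows immediately from a diagonal argument, using that $\cata$ is by definition closed under taking such limits,'' but the explicit $\frac1k$-selection above makes the argument self-contained.
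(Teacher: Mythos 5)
Your diagonal argument is correct and is precisely the formalization the paper has in mind: the paper gives no proof of Lemma~\ref{lem:perturb} at all, remarking only that $\cata$ is ``robust under small perturbations essentially by definition,'' and your explicit $\frac{1}{k}$-selection from the inner witnessing sequences is the standard way to make that precise. There is no gap.
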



Another question we might ask is whether it is necessary to consider four different sequences of states in the definition of catalytic relative majorisation. In Appendix~\ref{app:restrict} we show that under certain support conditions it suffices to only consider two sequences converging to $\q$ and $\q'$, respectively, while the other probability vectors can be kept fixed.

\subsection{Characterisation of relative entropies}

We now want to establish a characterisation of relative entropies in terms of catalytic relative majorisation. Remarkably, catalytic relative majorisation can be fully characterised in terms of relative R\'enyi entropies. 

\begin{theorem} \label{thm:char}
  	Let $m, n \in \bbN$, $\p, \q \in \cPp(n)$, and $\p', \q' \in \cPp(m)$. Then the following statements are equivalent:
	\begin{itemize}
		\item[(a)] $(\p, \q) \cata (\p', \q')$
		\item[(b)] $\D(\p\|\q) \geq \D(\p'\|\q')$ and $\D(\q\|\p) \geq \D(\q'\|\p')$ for every relative entropy $\D$.
		\item[(c)] $D_{\alpha}(\p\|\q) \geq D_{\alpha}(\p'\|\q')$ and $D_{\alpha}(\q\|\p) \geq D_{\alpha}(\q'\|\p')$ for every $\alpha \geq \frac12$.
	\end{itemize}
\end{theorem}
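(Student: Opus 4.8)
The plan is to prove the equivalence $(a) \Leftrightarrow (b) \Leftrightarrow (c)$ in a cycle, say $(a) \Rightarrow (b) \Rightarrow (c) \Rightarrow (a)$, exploiting the results established earlier for the first two implications and reserving the heavy lifting for $(c) \Rightarrow (a)$.

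\textbf{Step 1: $(a) \Rightarrow (b)$.} This should be the easy direction. Suppose $(\p, \q) \cata (\p', \q')$, so there are sequences $\p_k \to \p$, $\q_k \to \q$, $\p_k' \to \p'$, $\q_k' \to \q'$ with $(\p_k, \q_k) \trump (\p_k', \q_k')$. As noted in Section~\ref{sec:trumping}, the trumping relation already implies $\D(\p_k \| \q_k) \geq \D(\p_k' \| \q_k')$ for every relative entropy $\D$ (via additivity and the DPI applied to $\p_k \otimes \r$ etc., noting the catalyst on the second argument has full support so the additive contributions are finite and cancel). Now I take $k \to \infty$. Since all four limit points lie in $\cPp$, Corollary~\ref{cor:upperlowercont} gives continuity of $\D$ on $\cPp(n) \times \cPp(n)$, so the inequality survives the limit: $\D(\p\|\q) \geq \D(\p'\|\q')$. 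Running the same argument for the reversed pairs $(\q,\p)$ and $(\q',\p')$ — which also trump since the definition of trumping for tuples, applied with roles swapped, gives $(\q_k \otimes \t, \p_k \otimes \r) \majo (\q_k' \otimes \t, \p_k' \otimes \r)$ and $\r$ can be taken to have full support too (or we simply observe the symmetry) — yields $\D(\q\|\p) \geq \D(\q'\|\p')$.

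\textbf{Step 2: $(b) \Rightarrow (c)$.} This is immediate: for each $\alpha \geq \tfrac12$, the R\'enyi relative entropy $D_\alpha$ is a relative entropy in the sense of this paper (it satisfies DPI, additivity, and the normalisation $D_\alpha(\ee{1}\|\uu{2}) = \log 2$), so $(b)$ applied to $\D = D_\alpha$ gives exactly the two inequalities in $(c)$. One should double-check that $D_0$ and $D_{1/2}$ etc. are covered — $D_{1/2}$ certainly is a relative entropy; the statement only asks for $\alpha \geq \tfrac12$ so there is nothing delicate at the $\alpha = 0$ end.

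\textbf{Step 3: $(c) \Rightarrow (a)$ — the main obstacle.} This is where the real work lies and where I expect the proof to cite or reconstruct a ``catalytic'' equivalence result for relative majorisation. The strategy: the condition in $(c)$ says the whole R\'enyi-divergence ``spectrum'' of $(\p,\q)$ for $\alpha \geq \tfrac12$ dominates that of $(\p',\q')$, and symmetrically with arguments swapped; by the monotonicity and continuity structure of R\'enyi divergences and the duality $D_\alpha(\p\|\q)$ for $\alpha < \tfrac12$ being controlled by $D_{1-\alpha}$-type quantities of the swapped pair, this actually pins down \emph{all} $D_\alpha$, $\alpha \in [0,\infty]$, for both orderings. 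The plan is then to invoke (or prove, via the testing-region/Lorenz-curve picture together with the asymptotic equipartition remark at the end of Section~\ref{sec:majo}) that domination of all R\'enyi divergences in both directions is equivalent to the existence, for large block length $n$, of catalyst-like embeddings realizing the trumping relation on $n$-fold tensor copies perturbed slightly — i.e. that $D_\alpha(\p\|\q) \geq D_\alpha(\p'\|\q')$ for all $\alpha$, plus the reverse, implies $(\p^{\otimes n}, \q^{\otimes n})$ approximately relatively-majorises $(\p'^{\otimes n}, \q'^{\otimes n})$ up to an error vanishing with $n$, which after incorporating a flat catalyst gives the sequences required by Definition~\ref{def:cata}. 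The crux is controlling the approximation: one produces perturbed vectors $\p_k, \q_k, \p_k', \q_k'$ by smoothing the tensor powers (typicality) so that the Lorenz curve of $(\p_k^{\otimes n_k}, \q_k^{\otimes n_k})$ lies below that of $(\p_k'^{\otimes n_k}, \q_k'^{\otimes n_k})$, using that the R\'enyi divergences determine the asymptotic slopes/positions of these curves. The hard part will be handling the two-sided nature — needing domination of \emph{both} $(\p,\q)$-over-$(\p',\q')$ and $(\q,\p)$-over-$(\q',\p')$ simultaneously to control both ends of the Lorenz curves (the $\lambda \to 0$ and $\lambda \to \infty$ tails of the likelihood ratio), and making sure the catalyst required for the trumping step genuinely has finite dimension for each fixed $k$ even though that dimension blows up as $k \to \infty$ — which is precisely why we land in $\cata$ rather than $\trump$. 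I would also need the full-support hypotheses $\p, \q \in \cPp(n)$, $\p', \q' \in \cPp(m)$ to keep all R\'enyi divergences finite and the constructions well-defined.
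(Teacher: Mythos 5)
Your Steps 1 and 2 are correct and coincide with the paper's argument: (a)$\implies$(b) via additivity, the DPI, and the continuity of relative entropies on $\cPp(n)\times\cPp(n)$ from Corollary~\ref{cor:upperlowercont}; and (b)$\implies$(c) is the trivial specialisation to $\D = D_\alpha$. (Your aside about needing $\r$ to behave well for the reversed inequality is a legitimate point that the paper itself glosses over.)

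The problem is Step 3, which is where the entire mathematical content of the theorem sits, and which you do not actually prove. You propose to ``invoke (or prove, via the testing-region/Lorenz-curve picture together with the asymptotic equipartition remark)'' that domination of all R\'enyi divergences in both directions yields approximate relative majorisation of tensor powers after smoothing, and you explicitly defer the two genuinely hard issues (``the crux is controlling the approximation'', ``the hard part will be handling the two-sided nature''). That is a plan, not a proof, and it is precisely the plan of the second-laws paper~\cite{wehner13} about which the authors state: ``We were unable to close a gap in their proof and thus provide a full derivation here.'' So the route you sketch is the one the paper deliberately avoids. The paper's actual argument for (c)$\implies$(a) is quite different and does not use typicality at all: (i) perturb $\q$ and $\q'$ to rational, full-support vectors $\q_k,\q_k'$ so that $(\p,\q_k)\majo(\p,\q)$ and $(\p',\q')\majo(\p',\q_k')$ (Appendix~\ref{app:lemmas}), which preserves the chain of R\'enyi inequalities; (ii) use the embedding Lemma~\ref{lem:embed} to replace $(\p,\q_k)$ and $(\p',\q_k')$ by single vectors $\r_k,\r_k'$ against a common uniform distribution $\uu{m_k}$, thereby reducing \emph{relative} trumping to \emph{ordinary} trumping; (iii) mix $\r_k'$ with $\uu{m_k}$ to obtain a full-support target $\s_k$ for which all the inequalities become strict; (iv) invoke the Turgut--Klimesh characterisation of ordinary trumping (Lemma~\ref{lem:turgut}) to conclude $\r_k\trump\s_k$, hence $(\p,\q_k)\trump(\p_k',\q_k')$; and (v) pass to the limit $k\to\infty$ using the robustness of $\cata$ (Lemma~\ref{lem:perturb}). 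Without either carrying out your AEP construction in full or substituting a finite-dimensional reduction of this kind, your Step 3 leaves the implication (c)$\implies$(a) unestablished.
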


An equivalence similar to (a)~$\iff$~(c) was first claimed in~\cite{wehner13}, although for a slightly different definition of catalytic majorisation where only $\p'$ is a limit of a sequence and $\p$, $\q$ and $\q'$ are fixed. We were unable to close a gap in their proof and thus provide a full derivation here.

The implication (c)~$\implies$~(b) is new and tells us that if all R\'enyi relative entropies are ordered then in fact all relative entropies are ordered.

\begin{proof}[Proof of Theorem~\ref{thm:char}, (a)~$\implies$~(b)]
	Catalytic relative majorization implies (cf.\ Definition~\ref{def:cata}) that there exist sequences $\{\p_k\}_{k}$,  $\{\q_k\}_{k}$, $\{\p_k'\}_{k}$ and $\{\q_k'\}_{k}$, with limits $\p$, $\q$, $\p'$ and $\q'$ such that $(\p_k, \q_k) \trump (\p_k', \q_k')$ for all $k \in \bbN$. For every relative entropy~$\D$, due to additivity and the DPI, the trumping relation thus implies
	\begin{align}
		\D(\p_k \| \q_k) \geq \D(\p_k' \| \q_k') \quad \textrm{and} \quad \D(\q_k \| \p_k) \geq \D(\q_k' \| \p_k') \,.
	\end{align}
	Taking the limit $k \to \infty$ together with our continuity result in Corollary~\ref{cor:upperlowercont} then yields the desired statement.
\end{proof}

The implication (b)~$\implies$~(c) is trivial, and it thus remains to show (c)~$\implies$~(a). We will instead show a slightly stronger theorem that has weaker assumptions on the support of the probability vectors.

\begin{theorem}\label{thm:char2}
  	Let $m, n \in \bbN$, $\p, \q \in \cP(n)$, and $\p', \q' \in \cP(m)$ be such that either $\p$ or $\q$ have full support. Then the following statements are equivalent:
\begin{enumerate}
		\item[(a)] $(\p, \q) \cata (\p', \q')$
		\item[(c)] $D_{\alpha}(\p\|\q) \geq D_{\alpha}(\p'\|\q')$ and $D_{\alpha}(\q\|\p) \geq D_{\alpha}(\q'\|\p')$ for every $\alpha \geq \frac12$.\end{enumerate}
\end{theorem}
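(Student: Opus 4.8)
\textbf{Proof plan for Theorem~\ref{thm:char2}.}
The direction (a)~$\implies$~(c) is immediate: catalytic relative majorisation gives approximating sequences related by trumping, each R\'enyi relative entropy is additive and data-processing monotone hence respects trumping, and all $D_\alpha$ with $\alpha\geq\frac12$ are continuous on $\cP(n)\times\cPp(n)$ so one passes to the limit — here the full-support hypothesis on $\p$ or $\q$ is exactly what keeps $D_\alpha(\p\|\q)$ or $D_\alpha(\q\|\p)$ from jumping. The substance is (c)~$\implies$~(a). The plan is to reduce the statement, via Lemma~\ref{lem:perturb}, to perturbed pairs $\p_\delta,\q_\delta$ (and $\p'_\delta,\q'_\delta$) having \emph{strictly} ordered R\'enyi quantities and rational entries; then to establish an actual trumping relation $(\p_\delta\otimes\r,\q_\delta\otimes\t)\majo(\p'_\delta\otimes\r,\q'_\delta\otimes\t)$ for a suitable catalyst pair $(\r,\t)$, and finally let $\delta\to 0$ using robustness of $\cata$.

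The core is a catalytic version of Blackwell's theorem: I need to show that strict inequalities $D_\alpha(\p\|\q)>D_\alpha(\p'\|\q')$ and $D_\alpha(\q\|\p)>D_\alpha(\q'\|\p')$ for \emph{all} $\alpha\geq\frac12$ imply the existence of a catalyst realising the relative majorisation of tensor products. The key geometric fact is that $(\p\otimes\r,\q\otimes\t)\majo(\p'\otimes\r,\q'\otimes\t)$ is equivalent, by Blackwell, to an inclusion of testing regions, i.e.\ to the lower Lorenz curve of $(\p\otimes\r,\q\otimes\t)$ lying nowhere above that of $(\p'\otimes\r,\q'\otimes\t)$. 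By tensoring with well-chosen catalysts one can ``smooth out'' the Lorenz curves; the relevant handle is that the Lorenz curve of a tensor product concentrates (via the law of large numbers / AEP referenced at the end of Section~\ref{sec:majo}) around a curve governed by the R\'enyi exponents $D_\alpha$. Concretely, for $\p\neq\q$ and large $N$ the testing region $\cT(\p^{\otimes N},\q^{\otimes N})$ covers all of $(0,1)^2$, and more refined estimates show the boundary of $\cT\big((\p\otimes\r)^{\otimes N},(\q\otimes\t)^{\otimes N}\big)$ is controlled, in the appropriate scaling, by $\{D_\alpha(\p\otimes\r\,\|\,\q\otimes\t)\}_\alpha=\{D_\alpha(\p\|\q)+D_\alpha(\r\|\t)\}_\alpha$. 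The plan is to pick $(\r,\t)$ so that the combined R\'enyi profile forces pointwise domination of Lorenz curves: since $D_\alpha$ for $\alpha\in[\frac12,\infty]$ (via the duality $\alpha\leftrightarrow 1-\alpha$, which is why only $\alpha\geq\frac12$ appears once both orderings $(\p,\q)$ and $(\q,\p)$ are imposed) parametrise the two branches of the Lorenz curve, strict domination of \emph{all} these exponents plus an asymptotic concentration argument yields genuine (finite-dimensional) relative majorisation after tensoring enough copies.

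I would organise it as: (i) pass to strict inequalities and rational data using Lemma~\ref{lem:perturb} and continuity of $D_\alpha$; (ii) record the parametrisation of the Lorenz curve of $(\p,\q)$ by the R\'enyi divergences — both endpoints/slopes and the intermediate vertices are recoverable from $\{D_\alpha(\p\|\q),D_\alpha(\q\|\p):\alpha\geq\frac12\}$; (iii) prove the concentration/approximation lemma: for any target accuracy, tensoring $(\p,\q)$ with a suitable finite catalyst $(\r,\t)$ makes the Lorenz curve of the product uniformly close to the ``R\'enyi envelope''; (iv) combine (ii) and (iii) with the strict inequalities to get, for $N$ large, the inclusion of testing regions $\cT\big((\p'\otimes\r)^{\otimes N},(\q'\otimes\t)^{\otimes N}\big)\subseteq\cT\big((\p\otimes\r)^{\otimes N},(\q\otimes\t)^{\otimes N}\big)$, hence a trumping relation via Blackwell's theorem; (v) undo the perturbation with Lemma~\ref{lem:perturb} to conclude $(\p,\q)\cata(\p',\q')$.

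The main obstacle is step (iii)–(iv): making the approximation of Lorenz curves by R\'enyi exponents quantitative and uniform, and ensuring that a \emph{single} finite catalyst $(\r,\t)$ (not a sequence of growing dimension) suffices before taking tensor powers — this is precisely the subtlety that the paper flags as the gap in~\cite{wehner13}. I expect to handle it by first working with the perturbed, strictly-ordered, rational pairs (where there is room to spare in every R\'enyi inequality), building the catalyst from finitely many tensor powers of the given distributions together with near-uniform padding (as in Lemma~\ref{lem:embed}), and only then invoking robustness of $\cata$ under $\delta\to0$; the strictness of the R\'enyi inequalities is what converts an asymptotic/limiting majorisation into an honest catalytic one.
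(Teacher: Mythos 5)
Your outer scaffolding matches the paper's: reduce to strict R\'enyi inequalities by perturbation, pass to rational data, obtain an honest trumping relation, and undo the perturbation with Lemma~\ref{lem:perturb}. The (a)~$\implies$~(c) direction is also essentially right, though you gloss over the fact that for $\alpha\geq 1$ the primed pair $(\p',\q')$ need not have full support, so one must invoke \emph{lower} semi-continuity of $D_\alpha$ (not continuity) to pass to the limit on the right-hand side of the inequalities; this is a minor repair.

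The genuine gap is in your steps (iii)--(iv). You propose to prove, from scratch, that strict domination of all R\'enyi exponents forces an exact inclusion of testing regions after tensoring with a single finite catalyst, via an AEP/concentration argument on Lorenz curves. But the AEP-type statement available (and the only one cited in Section~\ref{sec:majo}) is that $\cT(\p^{\otimes N},\q^{\otimes N})$ asymptotically covers $(0,1)^2$; this and its large-deviation refinements give relative majorisation only up to errors that vanish as $N\to\infty$, never an \emph{exact} finite-dimensional relation $(\p\otimes\r,\q\otimes\t)\majo(\p'\otimes\r,\q'\otimes\t)$. Converting strict R\'enyi inequalities into an exact catalytic majorisation is precisely the content of the Turgut--Klimesh theorem, whose known proofs are long and do not proceed by concentration of Lorenz curves; in particular, catalysts built from tensor powers of the given distributions plus uniform padding are known to be insufficient in general. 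The paper sidesteps this entirely: it uses Lemma~\ref{lem:embed} (together with the rational approximation lemmas of Appendix~\ref{app:lemmas}) to replace both pairs by pairs of the form $(\r_k,\uu{m_k})$ and $(\s_k,\uu{m_k})$ with a \emph{common} uniform second argument, perturbs $\r_k'$ towards $\uu{m_k}$ to make the inequalities strict and the target full-support, and then invokes the Turgut--Klimesh characterisation (Lemma~\ref{lem:turgut}) as a black box to get $\r_k\trump\s_k$, hence $(\p,\q_k)\trump(\p_k',\q_k')$. Unless you are prepared to supply a complete new proof of a Turgut--Klimesh-type theorem, you should perform this reduction to the uniform case and cite that result; as written, your plan leaves the hardest step as an unproven assertion.
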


\begin{proof}[Proof of Theorem~\ref{thm:char2}, (a)~$\implies$~(c)]
Let $\p_k,\q_k,\p_k',\q_k'$ as in Definition~\ref{def:cata}. 
Hence, $(\p_k,\q_k)\trump(\p_k',\q_k')$ so that 
\begin{align}
&D_{\alpha}(\p_k\|\q_k)\geq D_{\alpha}(\p_k'\|\q_k')\quad\text{and}\label{1g}\\
&D_{\alpha}(\q_k\|\p_k)\geq D_{\alpha}(\q_k'\|\p_k')\label{2g}\;.
\end{align}

Consider first the case $\alpha\in(0,1)$. Then, $D_\alpha$ is continuous in $\cP(n)\times\cP(n)$, so that~\eqref{1g} and~\eqref{2g} imply in the limit $k\to\infty$ that $D_{\alpha}(\p\|\q)\geq D_{\alpha}(\p'\|\q')$ and $D_{\alpha}(\q\|\p)\geq D_{\alpha}(\q'\|\p')$. We therefore consider now the case $\alpha\geq 1$.
Since we assume that $\q>0$ for sufficiently large $k$ also $\q_k>0$ and therefore the limit $\lim_{k\to\infty}D_{\alpha}(\p_k\|\q_k)$ exists and equals to $D_\alpha(\p\|\q)<\infty$ for all $\alpha\geq 1$. Therefore, taking the liminf on both sides of~\eqref{1g}
gives
\begin{align}
D_{\alpha}(\p\|\q)\geq \liminf_{k\to\infty}D_{\alpha}(\p_k'\|\q_k')\geq D_{\alpha}(\p'\|\q')
\end{align}
where the second inequality follows from the lower semi-continuity of $D_\alpha$ (see, e.g.,~\cite{vanerven14}).
It is left to show $D_{\alpha}(\q\|\p) \geq D_{\alpha}(\q'\|\p')$. Observe that since $\q>0$, if $\p\not>0$ then $D_{\alpha}(\q\|\p)=\infty\geq D_{\alpha}(\q'\|\p')$. On the other hand, if $\p>0$ then taking the liminf on both sides of~\eqref{2g} gives
\begin{align}
D_{\alpha}(\q\|\p)&=\liminf_{k\to\infty}D_{\alpha}(\q_k\|\p_k)\nonumber\\
&\geq \liminf_{k\to\infty}D_{\alpha}(\q_k'\|\p_k')\nonumber\\
&\geq D_{\alpha}(\q'\|\p')
\end{align}
where the first equality follows from the fact that $D_\alpha$ is continuous on $\cP_{>0}(n)\times\cP_{>0}(n)$, and the last inequality follows from the lower semi-continuity of $D_\alpha$.
\end{proof}

To show the other direction we will need the following lemma, which is a reformulation of the Turgut-Klimesh characterisation of the trumping relation~\cite{turgut07,klimesh07}. We present it in a symmetric form that we find instructive and that has not yet appeared in the literature.

\begin{lemma}[cf.~\cite{turgut07,klimesh07}]
\label{lem:turgut}
Let $n\in\bbN$ and $\p,\p'\in\cP(n)$ with $\p \neq \p'$ and either $\p$ or $\p'$ have full support. Then the following statements are equivalent:
\begin{enumerate}
\item[(1)] $\p  \trump \p'$.
\item[(2)] For every $\alpha \geq \frac12$, we have
\begin{align}
	&D_{\alpha} \big(\p\big\|\uu{n}\big) > D_{\alpha}\big(\p'\big\|\uu{n}\big)\label{a1}
	\quad \textrm{and} \\
	 &D_{\alpha}\big(\uu{n}\big\|\p\big) >  D_{\alpha}\big(\uu{n}\big\|\p'\big) \,.
\end{align}
\end{enumerate}
\end{lemma}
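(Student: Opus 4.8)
\textbf{Plan for the proof of Lemma~\ref{lem:turgut}.}

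The goal is to recast the Turgut--Klimesh theorem~\cite{turgut07,klimesh07}, which characterises when $\p \trump \p'$ (equivalently, when $\p^{\otimes k}$ majorises $\p'^{\otimes k}$ after tensoring with a catalyst) in terms of a family of additive monotones. The standard statement of that result asserts that for $\p \neq \p'$ with at least one of full support, $\p \trump \p'$ holds if and only if a certain collection of functions is \emph{strictly} ordered: the R\'enyi entropies $H_\alpha$ for $\alpha \geq 0$ together with the so-called "Burg entropy'' type quantity $\sum_i \log p_i$, which is essentially $D_1(\uu n\|\p)$ up to affine terms. So the plan is first to translate the quantities appearing in the original formulation into the R\'enyi relative entropy notation $D_\alpha(\p\|\uu n)$ and $D_\alpha(\uu n\|\p)$ used here, and then to verify that the index ranges match up, in particular that the two families $\{D_\alpha(\cdot\|\uu n)\}_{\alpha\geq 1/2}$ and $\{D_\alpha(\uu n\|\cdot)\}_{\alpha\geq 1/2}$ together are equivalent (as a test of strict ordering) to the original Turgut--Klimesh list.

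The first concrete step is to recall that, via the bijection of Section~\ref{sec:biject} applied to R\'enyi entropies, $H_\alpha(\p) = \log n - D_\alpha(\p\|\uu n)$, so strict ordering $H_\alpha(\p) < H_\alpha(\p')$ is exactly $D_\alpha(\p\|\uu n) > D_\alpha(\p'\|\uu n)$. For $\alpha \in [0,1]$ these are the entries directly appearing in Turgut--Klimesh. The second step handles the "dual'' family: a short computation shows that $D_\alpha(\uu n\|\p)$ is, up to an affine reparametrisation in $\alpha$ and additive constants depending only on $n$, the same as a R\'enyi entropy of $\p$ at a shifted order; concretely $D_\alpha(\uu n\|\p) = \frac{1}{\alpha-1}\log\big(\frac1n\sum_i p_i^{1-\alpha}\big) = \frac{\beta}{1-\beta}\,\frac1\beta\log\big(\tfrac1n\sum p_i^\beta\big)$ with $\beta = 1-\alpha$, so that the range $\alpha \geq \tfrac12$ on the dual side corresponds to $\beta \leq \tfrac12$, i.e.\ $H_\beta$ for $\beta \in (-\infty,\tfrac12]$. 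Combining both families, the union of orders covered is all of $H_\alpha$, $\alpha \in [0,\infty]$, plus the Burg-type endpoint, which is precisely the Turgut--Klimesh list; one checks that the overlap region $\alpha \in [\tfrac12,1]$ on each side is redundant but harmless. The third step is simply to invoke~\cite{turgut07,klimesh07} to get the equivalence with $\p\trump\p'$, being careful that their statement requires $\p\neq\p'$ and a full-support hypothesis on one of the two vectors, which is exactly what we have assumed.

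The main obstacle, and the reason this deserves a careful write-up rather than a one-line citation, is twofold. First, the original papers state the characterisation with a slightly different and less symmetric set of monotones, and one must check that nothing is lost in the endpoints $\alpha = \tfrac12$, $\alpha = 1$, and $\alpha \to \infty$ when passing to the symmetric pair of families $D_\alpha(\p\|\uu n)$ and $D_\alpha(\uu n\|\p)$ with $\alpha \geq \tfrac12$; in particular the two "$\alpha=1$'' entries (Shannon entropy and Burg entropy) must both be accounted for, and the behaviour of $D_\alpha(\uu n\|\p)$ as $\alpha\to\infty$ (which forces $\p$ to have full support, else it is $+\infty$) must be reconciled with the support hypothesis. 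Second, one must confirm that strictness is preserved in both directions of the translation --- strict inequalities for $H_\alpha$ over a dense set of $\alpha$ plus the endpoints are what Turgut--Klimesh need, and the affine reparametrisation $\alpha\mapsto 1-\alpha$ is monotone-decreasing so it reverses the index order but not the direction of the inequalities between the \emph{values}. Once these bookkeeping points are dispatched, the lemma follows.
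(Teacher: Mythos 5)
Your proposal is correct and follows essentially the same route as the paper, which likewise does not reprove Turgut--Klimesh but only restates it in symmetric form, relying on the duality $D_{\alpha}(\p\|\uu{n}) = \frac{\alpha}{1-\alpha}D_{1-\alpha}(\uu{n}\|\p)$ to map the family $\{D_{\alpha}(\uu{n}\|\cdot)\}_{\alpha\ge 1/2}$ onto the orders $\beta = 1-\alpha \le \tfrac12$ (including the Burg-entropy entry at $\alpha=1$ and the negative orders for $\alpha>1$), exactly as in your second step. One small bookkeeping slip: the prefactor should be $\frac{\alpha}{1-\alpha}$ (your $\frac{\beta}{1-\beta}\cdot\frac{1}{\beta}$ evaluates to $\frac{1}{\alpha}$, and the constant inside the logarithm is $n^{-\alpha}$ rather than $\frac{1}{n}$), but since these depend only on $\alpha$ and $n$ and the prefactor is sign-definite on each of $(\tfrac12,1)$ and $(1,\infty)$, the translation of the strict inequalities goes through unchanged.
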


The range of $\alpha$ in (2) can be extended to all $\alpha > 0$ by simply noting that $D_{\alpha}(\p \| \uu{n}) = \frac{\alpha}{1-\alpha} D_{1-\alpha}(\uu{n}\|\p)$.
Further note that the domain and support restrictions do not restrict the applicability of the theorem. For any $k, m \in \bbN$, $\tilde{\p} \in \cP(k)$ and $\tilde{\p}' \in \cP(m)$ we can construct $\p, \q \in \cP(n)$ with $n = \max\{|\tilde{\p}|, |\tilde{\p}'|\}$ in the form expected by Lemma~\ref{lem:turgut} by adding and removing zeros so that $\tilde{\p} \trump \tilde{\p}' \iff \p \trump \p'$. 

\begin{proof}[Proof of Theorem~\ref{thm:char2}, (c)~$\implies$~(a)]
In the following we assume $\p \neq \q$ and $\p' \neq \q'$ as the implication is trivial otherwise.
Due to the symmetry in the roles of $\p$ and $\q$, we can assume w.l.o.g.\ that $\q>0$. Further, as shown in Appendix~\ref{app:lemmas}, there exist sequences $\{ \q_k \}_{k \in \bbN}$ and $\{ \q_k' \}_{k \in \bbN}$ with $\q_k\in \cPp(n) \cap \bbQ^n$ and $\q_k'\in \cPp(m) \cap \bbQ^m$ such that  $\q_k \to \q$ and $\q_k' \to \q'$ as $k \to \infty$, and for all $k\in\bbN$,
\begin{align}
(\p,\q_k)\majo(\p,\q)\quad\text{and}\quad (\p',\q')\majo(\p',\q_k')\,.
\end{align}
Therefore, we have
\begin{align}
D_{\alpha}(\p\|\q_k)\geq D_{\alpha}(\p\|\q)\geq D_{\alpha}(\p'\|\q')\geq D_{\alpha}(\p'\|\q_k') ,
\end{align}
where the second inequality is the assumption in (c).
Similarly
\begin{align}
D_{\alpha}(\q_k\|\p)\geq D_{\alpha}(\q\|\p)\geq D_{\alpha}(\q'\|\p')\geq D_{\alpha}(\q_k'\|\p')
\end{align}
Now, since both $\q_k$ and $\q_k'$ have positive rational components, there exists two finite dimensional probability vectors $\r_k,\r_k'\in\cP(m_k)$ with $m_k\in\bbN$ (cf.~Lemma~\ref{lem:embed}), such that 
\begin{align}
&\big(\r_k,\uu{m_k}\big)\sim(\p,\q_k)\nonumber\\
&\big(\r_k',\uu{m_k}\big)\sim(\p',\q_k')\;.\label{conpq}
\end{align}
Hence, for all $\alpha\geq \frac12$
\begin{align}
&D_{\alpha}\big(\r_k\big\|\uu{m_k}\big) \geq D_{\alpha}\big(\r_k' \big\|\uu{m_k}\big)\quad\text{and}\nonumber\\
& D_{\alpha}\big(\uu{m_k}\big\|\r_k\big) \geq D_{\alpha}\big(\uu{m_k} \big\|\r_k'\big)\;.
\end{align}

Let us now first consider the case $\r_k' = \uu{m_k}$. Consulting the construction in Lemma~\ref{lem:embed}, we see that $\r_k' = \uu{m_k}$ implies $\p' = \q_k'$. However, this cannot occur for sufficiently large $k$ since $\q_k' \to \q' \neq \p'$ by our assumption.
Hence, we can assume $\r_k' \neq \uu{m_k}$. Note then that for any $k\in\bbN$, we have 
\begin{align}
\r_k'\majo \s_k\eqdef\left(1-\frac1k\right)\r_k'+\frac1k\uu{m_k} \,.
\end{align}
Then, by appealing to the implication (1)~$\implies$~(2) in Lemma~\ref{lem:turgut} and noting that $\s_k$ has full support, we get the following strict inequalities for all $\alpha\geq \frac12$:
\begin{align}
&D_{\alpha}\big(\r_k\big\|\uu{m_k}\big) > D_{\alpha}\big(\s_k \big\|\uu{m_k}\big)\quad\text{and}\nonumber\\
& D_{\alpha}\big(\uu{m_k}\big\|\r_k\big)> D_{\alpha}\big(\uu{m_k} \big\|\s_k\big)\,
\end{align}

Since the condition above is equivalent to the condition given in Lemma~\ref{lem:turgut} it follows that $\r_k\trump \s_k$. 
Hence, 
\begin{align}\label{99}
(\p,\q_k)\sim(\r_k,\uu{m_k})\trump (\s_k,\uu{m_k})\sim(\p'_k,\q_k')\;,
\end{align}
where 
\begin{align}
\p_k'\eqdef\left(1-\frac1{k}\right)\p'+\frac1{k}\q_k'\;.
\end{align}
The equivalence $(\s_k,\uu{m_k})\sim(\p'_k,\q_k')$ can be verified from the construction in Lemma~\ref{lem:embed}.
Since catalytic majorisation is robust to small perturbations (cf.~Lemma~\ref{lem:perturb}), taking the limit $k\to \infty$ in Eq.~\eqref{99} gives $(\p,\q)\cata(\p',\q')$, concluding the proof.
\end{proof}


\section{Conclusion}
\label{sec:conc}

We have shown that a rich framework for entropies and relative entropies can be derived from only a few axioms, including monotonicity under data-processing (relative entropy) and mixing (entropy) and additivity for product distributions. These axioms are information-theoretically meaningful and even necessary for many applications of entropies and relative entropies in information theory. Our approach thus stands in contrast to most other work on axiomatic derivations of entropies where more mathematical axioms have been taken as a starting point. 

We leave open what we believe to be a very interesting and nontrivial question, namely whether our axioms for entropies restrict us to convex combinations of R\'enyi entropies. All the properties we have shown are consistent with this hypothesis. It is worth pointing out two such properties in particular. First, in Section~\ref{sec:bounds} we have shown that all relative entropies are bounded from below and above by the minimal R\'enyi relative entropy and maximal R\'enyi relative entropy, respectively, and similarly for R\'enyi entropies. 
Second, in Section~\ref{sec:faithful} we were able to show faithfulness for all relative entropies with order parameter strictly larger than $0$, exactly as we would expect for convex combinations of R\'enyi divergences. A similar axiomatic derivation of $\ell_p$ norms via their multiplicative property, related to the additivity of R\'enyi entropies, has recently been achieved in~\cite{aubrun11}. Their techniques however do not seem to readily apply here, in particular because our axioms only restrict entropies up to convex combinations.

If the conjecture is true, it would also imply that all relative entropies that are continuous in the second argument are convex combinations of R\'enyi relative entropies.

\paragraph*{Acknowledgements} GG acknowledges support form the Natural Sciences and Engineering Research Council of Canada (NSERC). MT is supported by NUS startup grants (R-263-000-E32-133 and R-263-000-E32-731) and by the National Research Foundation, Prime Minister's Office, Singapore and the Ministry of Education, Singapore under the Research Centres of Excellence programme.





\appendices

\section{Divergences from Schur-convex functions}
\label{app:schurconvex}

In Section~\ref{sec:biject} we presented a map that constructs relative entropies from entropies. Here we develop a similar construction for divergences. 
For this purpose, let $g: \bigcup_{n \in \bbN} \cP(n) \to \bbR \cup \{\infty\}$ be a function with the following properties:
\begin{enumerate}
	\item For every $n \in \bbN$, the function $\cP(n) \ni \p \to g(\p)$ is \emph{Schur convex}\footnote{A function $g$ is Schur convex if $\p \majo \q$ implies $g(\p) \geq g(\q)$ for all $\p, \q \in \cP(n)$ and all $n \in \mathbb{N}$.} and continuous on $\cP(n)$. 
	\item For $n = 1$ it is normalised to $g(1) = 0$.
	\item For all $k, n \in \bbN$ and $\r \in \cP(n)$,
	\begin{align}
		g\big( \r \otimes \uu{k} \big) = g(\r) \,.
	\end{align}
\end{enumerate}

A large class of such functions can be constructed as follows, and we will see that they correspond to Csisz\'ar's $f$-di\-ver\-gences~\cite{csiszar72}. Given a convex function $f: \bbRp \to \bbR \cup \{\infty\}$ with $f(1) = 0$, take, for any $n \in \bbN$ and $\p \in \cP(n)$,
\begin{align}
	g(\p) = \frac{1}{n} \sum_{x=1}^n f(n \p_x) \,.
\end{align}
Property~1) is now satisfied because $g$ is symmetric and convex, and thus Schur convex. Convexity also implies continuity in the interior; however, we need to additionally assume here that the function is also continuous at the boundary. Property~2) is satisfied by assumption on $f$ and Property~3) can be verified by close inspection. More generally, for any divergence $\D$ that is continuous in the first argument, the function $g_{\D}(\p) := \D(\p \| \uu{n})$ for all $\p \in \cP(n)$ is a valid $g$-function. To verify this, note that 
\begin{align}
	\D(\p \| \uu{n}) = \D\big(\p \otimes \uu{k} \big\| \uu{nk} \big)
\end{align}
as a consequence of the DPI applied twice for channels introducing and removing an independent distribution $\uu{k}$.

\begin{theorem}
	Let $g$ be a function satisfying Properties~1)--3). For any $n \in \bbN$, $\p \in \cP(n)$ and $\q \in \cPp(n) \cap \bbQ^n$, we define
	\begin{align}
		D_g(\p\|\q) := g \left( \bigoplus_{x=1}^n p_x \uu{k_x} \right) ,  \label{eq:Dg}
	\end{align}
	where $\q = ( \frac{k_1}{k}, \frac{k_2}{k}, \ldots )$ for $k_i \in \bbN$, $k \in \bbN$. For general $\q \in \cP(n)$, $D_g$ is defined via continuous extension. Then, $D_g$ is a divergence and continuous in $\q$ for any fixed $\p \in \cP(n)$.
\end{theorem}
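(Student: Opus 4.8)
The plan is to mirror the construction of Proposition~\ref{prop:ent-to-relent}, with the entropy $\H$ replaced by the Schur-convex function $g$ and the quantity $\log k - \H(\r)$ replaced by $g(\r)$, while dropping the additivity requirements (Eqs.~\eqref{eq:add1}--\eqref{eq:add2}), which are not available for general divergences. First I would check that $D_g$ is well defined on $\cPp(n)\cap\bbQ^n$: if $\q$ is written with two common denominators $k$ and $k'=mk$, the associated vectors are $\bigoplus_x p_x\uu{k_x}$ and $\bigoplus_x p_x\uu{mk_x}=\big(\bigoplus_x p_x\uu{k_x}\big)\otimes\uu{m}$, so Property~3) of $g$ forces the same value. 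Taking $\p=\q=1$ gives $D_g(1\|1)=g(1)=0$ by Property~2), and $D_g\geq 0$ follows on the rational domain since $(\p,\q)\majo(\vec u^{(2)},\vec u^{(2)})$ via a constant channel while $g(\vec u^{(2)})=g(1\otimes\vec u^{(2)})=g(1)=0$. Next I would establish data-processing on the rational domain. By Lemma~\ref{lem:embed} the vector $\r=\bigoplus_x p_x\uu{k_x}$ is precisely the one with $(\p,\q)\sim(\r,\uu{k})$, so $D_g(\p\|\q)=g(\r)$. If $(\p,\q)\majo(\p',\q')$ with $\q,\q'$ rational and positive, choose $\r,k$ and $\r',k'$ accordingly; then $(\r,\uu{k})\majo(\r',\uu{k'})$, and tensoring the two pairs by $\uu{k'}$ and $\uu{k}$ respectively, together with $(\vec a,\uu{N})\majo(\vec b,\uu{N})\iff\vec a\majo\vec b$, yields $\r\otimes\uu{k'}\majo\r'\otimes\uu{k}$. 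Schur-convexity of $g$ and Property~3) then give $g(\r)=g(\r\otimes\uu{k'})\geq g(\r'\otimes\uu{k})=g(\r')$, i.e.\ $D_g(\p\|\q)\geq D_g(\p'\|\q')$.

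The main obstacle is to show $\q\mapsto D_g(\p\|\q)$ is continuous on $\cPp(n)\cap\bbQ^n$, so that the continuous extension to $\cPp(n)$ is well defined. The triangle-inequality argument of Theorem~\ref{thm:triangle} is unavailable here, since $D_g$ need not be additive and one cannot append unused symbols for free (no analogue of Lemma~\ref{lem:deterministic}). Instead I would work directly from $D_g(\p\|\q)=g(\r)$: given $\q,\q'$ rational, positive, and close to a common positive point, put them over a common denominator $K$, so that the associated vectors $\r,\r'\in\cP(K)$ have ordinary Lorenz curves equal to the relative-majorisation curves $\cL(\p,\q)$ and $\cL(\p,\q')$, which are uniformly close when $\q,\q'$ are. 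Sandwiching $\r$ and $\r'$ in the majorisation order of $\cP(K)$ between vectors $\r^{+},\r^{-}\in\cP(K)$ whose Lorenz curves approximate the limiting one from the two sides, Schur-convexity gives $|g(\r)-g(\r')|\leq g(\r^{-})-g(\r^{+})$, which one estimates using continuity of $g$ on $\cP(K)$. The delicate point, which I would need to treat with care, is that the ambient dimension $K$ grows as the two arguments approach the target, so I must check that the resulting modulus of continuity is controlled uniformly in $K$; this is immediate for Csisz\'ar $f$-divergences (there $g$ has a dimension-independent Lipschitz constant in $\q$ near any positive point), and I expect a comparable estimate to follow in general from Properties~1)--3) via the Lorenz-curve picture.

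With continuity in $\q$ on the rational positive vectors in hand, the continuous extension defines $D_g(\p\|\q)$ for all $\q$, and data-processing passes to the limit: given $(\p,\q)\majo(\p',\q')$ with $\q,\q'\in\cPp(n)$, approximate $\q$ from ``above'' by rationals $\q_j\to\q$ with $(\p,\q_j)\majo(\p,\q)$ and $\q'$ from ``below'' by rationals $\q'_j\to\q'$ with $(\p',\q')\majo(\p',\q'_j)$ (such approximants exist because the relevant Lorenz curves may be perturbed to rational ones while preserving the inclusion); transitivity then gives $(\p,\q_j)\majo(\p',\q'_j)$, hence $D_g(\p\|\q_j)\geq D_g(\p'\|\q'_j)$ by the rational case, and taking $j\to\infty$ yields $D_g(\p\|\q)\geq D_g(\p'\|\q')$. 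Combined with $D_g(1\|1)=0$ this shows $D_g$ is a divergence, and continuity in $\q$ for fixed $\p$ is exactly what was established, completing the proof.
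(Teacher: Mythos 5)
Your proposal tracks the paper's proof through the first half: well-definedness via Property~3), normalisation via Property~2), and the data-processing inequality on $\cPp(n)\cap\bbQ^n$ (the paper puts $\q$ and $\q W$ over a single common denominator $k$ and concludes $\r\majo\s$ directly, whereas you tensor by $\uu{k'}$ and $\uu{k}$ to compare across different denominators; both are correct and essentially interchangeable). Your treatment of the extension of the DPI to irrational second arguments via one-sided rational approximants is, if anything, more explicit than the paper's closing remark, and it matches the approximation lemmas the paper proves in its appendix for Theorem~\ref{thm:char2}.

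The genuine gap is in the continuity step, which is the crux of the theorem, and you have flagged it yourself without closing it. Your sandwich $|g(\r)-g(\r')|\leq g(\r^{+})-g(\r^{-})$ compares vectors living in $\cP(K)$ with $K$ the common denominator of $\q$ and $\q'$, and $K\to\infty$ as the two rational arguments approach a generic target; Property~1) only gives continuity of $g$ on each fixed $\cP(K)$ and carries no uniformity in $K$, so "I expect a comparable estimate to follow from Properties 1)--3)" is precisely the missing ingredient, not a routine verification. The paper takes a different route that avoids asking for such a dimension-uniform modulus: it builds an explicit channel $W$ with $\ee{i}W=(1-\eps)\ee{i}+\eps\q+(\q'-\q)$ and $\eps\geq 1-2^{-D_{\max}(\q'\|\q)}$, so that $\q W=\q'$ while $\|\p-\p W\|\leq\eps+\|\q-\q'\|$. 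The DPI then bounds $D_g(\p\|\q)-D_g(\p\|\q')$ from below by $D_g(\p W\|\q')-D_g(\p\|\q')$, i.e.\ by a difference of values of $D_g$ at two nearby \emph{first} arguments with the \emph{same fixed rational} second argument $\q'$; for fixed $\q'$ with denominator $K'$, the map $\p\mapsto D_g(\p\|\q')=g\big(\bigoplus_x p_x\uu{k'_x}\big)$ is continuous simply because $g$ is continuous on the fixed simplex $\cP(K')$, so no uniformity over dimensions is needed there. A symmetric channel $W'$ handles the other direction. If you want to complete your write-up you should either import this channel construction or actually prove the dimension-uniform estimate you are hoping for; as it stands, the continuity of $\q\mapsto D_g(\p\|\q)$ on the rational positive vectors, and hence the very existence of the continuous extension, is not established.
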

Alternatively, we can also write $D_g(\p\|\q) = g(\r)$ where $\r$ is constructed in Lemma~\ref{lem:embed}.

\begin{proof}
We first need to verify that $D_g$ is well-defined for $\q \in \cPp(n) \cap \bbQ^n$. Note that there is a freedom in choosing $k$ in Eq.~\eqref{eq:Dg}; however, due to Property~3 this does not change the value of $D_g$, and we can  pick the least common denominator.

Next we show that $D_g$ is indeed a divergence on this restricted space. First, note that normalisation holds since $D_g(1\|1) = g(1) = 0$. To show the DPI, let $\p \in \cP(n)$ and $W \in \cS(n,m)$ be a channel that maps rationals to rationals. Let further $k\in\bbN$ be large enough such that we can express
\begin{align}
	\q= \left(\frac{k_1}k, \ldots, \frac{k_n}k\right), \quad \q' = \q W =\left(\frac{k_1'}k, \ldots ,\frac{k_n'}k\right), \label{eq:qq}
\end{align}
where $k_x,k_x'\in\bbN$ with $\sum_{x=1}^nk_x=\sum_{x=1}^{n}k_x'=k$.
Let now $\r,\s\in \cP(k)$ be such that $(\p,\q) \sim (\r,\uu{k})$ and $(\p W, \q W) \sim (\s,\uu{k})$, where $\r$ and $\s$ are constructed using Lemma~\ref{lem:embed}. By definition,
$(\p,\q) \majo (\p W, \q W)$ so that $(\r,\uu{k}) \majo (\s,\uu{k})$, or, equivalently, $\r \majo \s$. Hence,
\begin{align}
D_g(\p W \| \q W) = g(\s) \leq g(\r) = D_{g}(\p\|\q)\;,
\end{align}
where the inequality follows from the Schur convexity of $g$.

Next we need to show continuity in the second argument on $\cPp(n) \cap \bbQ^n$, so that the continuous extension is well-defined. Given $\q$ and $\q'$ we introduce the channel $W$ given by
\begin{align}
	\ee{i} W = (1-\eps) \ee{i} + \eps \q + (\q' - \q)  \,,
\end{align}
where $\eps \geq 1 - 2^{-D_{\max}(\q'\| \q)}$ is chosen rational and can be arbitrarily small when $\q$ and $\q'$ approach each other. This choice of $\eps$ ensures that this is indeed a channel. Similarly, we define $W'$ and $\eps'$ with $\q$ and $\q'$ interchanged. We then have $\q W = \q'$ and $\q' W' = \q$, and furthermore,
\begin{align}
	\| \p - \p W \| = \big\| \eps( \p - \q ) + \q' - \q \big\| \leq \eps + \| \q - \q' \|,
\end{align}
and similarly $\| \p - \p W' \| \leq \eps' + \| \q - \q' \|$. Thus, $\p W \to \p$ and $\p W' \to \p$ as $\q \to \q'$.
Using the DPI we can thus bound
\begin{align}
	D_g(\p\|\q) - D_g(\p\|\q') &\geq D_g(\p W \|\q') - D_g(\p\|\q') \,,\\
	D_g(\p\|\q) - D_g(\p\|\q') &\leq D_g(\p \|\q) - D_g(\p W'\| \q) \,.
\end{align}
We now simply argue that since the two expressions on the right-hand side vanish when $\q \to \q'$ due to the continuity of $D_g$ in the first argument, which itself is inherited directly from the continuity of $g$, we have established continuity in the second argument.

Finally, we note that the DPI remains valid when we define the quantity for irrational $\q$ via continuous extension.
\end{proof}

\section{Properties of R\'enyi relative entropies}
\label{app:renyi}

\begin{lemma}
\label{lm:renyi2nd}
	Let $\vec{v} \in \mathbb{R}^d$ be such that $\sum_{x=1}^d v_x = 0$ und $\vec{u} = \uu{d}$. Then,
	for $\alpha \in [0, \infty]$, we have
	\begin{align}
   		\frac{\partial^2}{\partial \eps^2} D_{\alpha}(\vec{u} + \eps \vec{v} \| \vec{u} ) \Big|_{\eps=0} = \alpha d | \vec{v} |^2 ,
	\end{align}
	where $|\cdot|$ denotes the Euclidian vector norm.
\end{lemma}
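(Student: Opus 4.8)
The plan is to split into the regimes $\alpha\in(0,1)\cup(1,\infty)$, $\alpha=1$, $\alpha=0$, and $\alpha=\infty$; the first is the main computation and the others are either a limiting case of it or degenerate. For finite $\alpha\neq 1$, since $u_x=\tfrac1d$ one has $u_x^{1-\alpha}=d^{\alpha-1}$, so
\begin{align}
	D_{\alpha}(\vec u+\eps\vec v\,\|\,\vec u)=\log d+\frac{1}{\alpha-1}\log g(\eps),\qquad g(\eps)\eqdef\sum_{x=1}^{d}\Big(\tfrac1d+\eps v_x\Big)^{\alpha}.
\end{align}
Because $u_x=\tfrac1d>0$, the map $\eps\mapsto g(\eps)$ is smooth on a neighbourhood of $0$, and a direct differentiation together with $\sum_x v_x=0$ gives $g(0)=d^{1-\alpha}$, $g'(0)=\alpha d^{1-\alpha}\sum_x v_x=0$, and $g''(0)=\alpha(\alpha-1)d^{2-\alpha}\sum_x v_x^2$. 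Using $\frac{d^2}{d\eps^2}\log g\big|_{0}=\frac{g''(0)}{g(0)}-\big(\frac{g'(0)}{g(0)}\big)^2=\alpha(\alpha-1)d\,|\vec v|^2$ and multiplying by $\frac{1}{\alpha-1}$ yields the claimed value $\alpha d|\vec v|^2$.

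For $\alpha=1$ I would argue directly from $D_{1}(\vec u+\eps\vec v\,\|\,\vec u)=\log d+\sum_x p_x\log p_x$ with $p_x=\tfrac1d+\eps v_x$: differentiating and using $\sum_x v_x=0$ gives $\frac{d}{d\eps}\sum_x p_x\log p_x=\sum_x v_x\log p_x$, which vanishes at $\eps=0$, while $\frac{d^2}{d\eps^2}\sum_x p_x\log p_x=\sum_x\frac{v_x^2}{p_x}$ equals $d|\vec v|^2$ at $\eps=0$. Alternatively, one can pass to the limit $\alpha\to1$ in the formula above, since $D_\alpha(\vec u+\eps\vec v\,\|\,\vec u)$ is jointly smooth in $(\alpha,\eps)$ near $(1,0)$ and the value $\alpha d|\vec v|^2$ is continuous in $\alpha$.

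For $\alpha=0$ the statement is immediate: for $|\eps|$ small enough $\vec u+\eps\vec v$ still has full support, hence $D_0(\vec u+\eps\vec v\,\|\,\vec u)=-\log\sum_x u_x=0$ identically near $\eps=0$, so the second derivative is $0=0\cdot d|\vec v|^2$. For $\alpha=\infty$ the identity is to be read between possibly infinite quantities: if $\vec v=0$ both sides are $0$, and otherwise $\sum_x v_x=0$ forces $\max_x v_x>0>\min_x v_x$, so $D_\infty(\vec u+\eps\vec v\,\|\,\vec u)$ equals $\log d+\log\big(\tfrac1d+\eps\max_x v_x\big)$ for $\eps\ge0$ and $\log d+\log\big(\tfrac1d+\eps\min_x v_x\big)$ for $\eps\le0$; this function has a convex corner at $0$, and its symmetric second difference $\eps^{-2}\big[\log(1+\eps d\max_x v_x)+\log(1-\eps d\min_x v_x)\big]$ tends to $+\infty$ as $\eps\to0$, matching $\infty\cdot d|\vec v|^2$.

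The computation is otherwise entirely routine; the only point requiring care is the $\alpha=\infty$ case, where no classical second derivative exists and one must interpret $\frac{\partial^2}{\partial\eps^2}$ via the (divergent) symmetric second-difference quotient, consistently with the definition of the order parameter $\alpha_{\D}$ that this lemma feeds into.
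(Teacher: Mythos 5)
Your proposal is correct and follows essentially the same route as the paper: both reduce to computing the zeroth, first and second derivatives of the inner sum $\sum_x p_x^\alpha u_x^{1-\alpha}$ at $\eps=0$ and using $\sum_x v_x=0$, the only cosmetic difference being that the paper works with the symmetric second-difference quotient plus de l'H\^opital while you differentiate $\log g$ directly for finite $\alpha$. Your explicit treatment of $\alpha\in\{0,1,\infty\}$ (in particular reading the $\alpha=\infty$ case as a divergent symmetric second difference, which is exactly the quantity entering the definition of $\alpha_{\D}$) fills in details the paper only asserts.
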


\begin{proof}
  Consider first $\alpha \in [0, \infty) \setminus \{1\}$. We first note that
  \begin{align}
  	&\frac{\partial^2}{\partial \eps^2} D_{\alpha}(\vec{u} + \eps \vec{v} \| \vec{u}) \Big|_{\eps=0} \notag\\
	&\quad= \lim_{\eps \to 0} \left\{ \frac{D_{\alpha}(\vec{u} + \eps \vec{v} \| \vec{u}) + D_{\alpha}(\vec{u} -\eps \vec{v} \| \vec{u})}{\eps^2} \right\} \label{eq:def2nd} \\
	&\quad= \frac{1}{\alpha-1} \lim_{\eps \to 0} \left\{ \frac{\log q_+(\eps)}{\eps^2} +  \frac{\log q_-(\eps)}{\eps^2}  \right\} ,
  \end{align}
  where $q_{\pm}(\eps) = \sum_x u_x (1 \pm \eps d v_x)^{\alpha}$ and in the first equality we used the fact that $D_{\alpha}(\vec{u}\|\vec{u}) = 0$. Together with their derivatives $q_{\pm}'$ and $q_{\pm}''$, the functions satisfy $q_{\pm}(0) = 1$, $q_{\pm}'(0) = 0$ and $q_{\pm}''(0) = \alpha(\alpha-1) d | v |^2$.
  Using de l'H{\^o}pital's rule twice, the calculation proceeds as
 \begin{align}
  	&\frac{\partial^2}{\partial \eps^2} D_{\alpha}(\vec{u} + \eps \vec{v} \| \vec{u}) \Big|_{\eps=0} \\
	&\quad= \frac{1}{\alpha-1} \lim_{\eps \to 0} \left\{ \frac{q_+'(\eps)}{2 \eps q_+(\eps)} + \frac{q_-'(\eps)}{2 \eps q_-(\eps)} \right\} \\
	&\quad= \frac{1}{\alpha-1} \lim_{\eps \to 0} \left\{ \frac{q_+''(\eps)}{2 q_+(\eps) + 2\eps q_+'(\eps)} + \frac{q_-''(\eps)}{2 q_-(\eps) + 2 \eps q_-'(\eps)} \right\} \\
	&\quad= \alpha d |v|^2 \,.
  \end{align}
Finally, we note that for $\alpha = \infty$, the limit in Eq.~\eqref{eq:def2nd} diverges to $+\infty$, as required. A similar computation reveals that the statement of the lemma also holds for $\alpha = 1$.
\end{proof}


\section{Simpler catalytic relative majorisation}
\label{app:restrict}

\newcommand{\tq}{\tilde{\q}}

\begin{lemma}\label{oneside}
Let $m, n \in \bbN$, $\p, \q \in \cP(m)$ and $\p', \q' \in \cP(n)$. Suppose further that $\p'$ and $\q$ have full support. Then, $(\p, \q) \cata (\p', \q')$ if and only if there exist sequences $\{ \q_k \}_{k \in \bbN}$ and $\{ \q_k' \}_{k \in \bbN}$ with $\q_k\in \cP(m)$ and $\q_k'\in \cP(n)$ such that
$\q_k \to \q$ and $\q_k' \to \q'$ as $k \to \infty$, and $(\p, \q_k) \trump (\p', \q'_k)$ for all $k \in\bbN$.
\end{lemma}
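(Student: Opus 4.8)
The ``if'' direction is immediate: if such $\q_k,\q_k'$ exist, then taking $\p_k\equiv\p$ and $\p_k'\equiv\p'$ together with $\q_k,\q_k'$ gives a witness for $(\p,\q)\cata(\p',\q')$ in the sense of Definition~\ref{def:cata}. For the ``only if'' direction the plan is to re-run the construction in the proof of Theorem~\ref{thm:char2}, exploiting that the extra hypothesis $\p'\in\cPp(n)$ lets us absorb all of the required perturbation into the second arguments. Since $\q$ has full support, Theorem~\ref{thm:char2} applies and gives, for all $\alpha\geq\tfrac12$,
\begin{align}
	D_\alpha(\p\|\q)\geq D_\alpha(\p'\|\q') \quad\text{and}\quad D_\alpha(\q\|\p)\geq D_\alpha(\q'\|\p') ;
\end{align}
since $\supp(\p')=[n]$, finiteness of $D_\alpha(\p'\|\q')$ for $\alpha\geq 1$ then forces $\q'\in\cPp(n)$ as well. (If $\p=\q$ the displayed inequalities force $\p'=\q'$ and the claim is trivial; the case $\p'=\q'$, $\p\neq\q$ is also handled directly, so assume $\p\neq\q$ and $\p'\neq\q'$ below.)

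Next I would build the two sequences. By the approximation lemma in Appendix~\ref{app:lemmas}, choose $\q_k\in\cPp(m)\cap\bbQ^m$ with $\q_k\to\q$ and $(\p,\q_k)\majo(\p,\q)$; the DPI then gives $D_\alpha(\p\|\q_k)\geq D_\alpha(\p\|\q)$ and $D_\alpha(\q_k\|\p)\geq D_\alpha(\q\|\p)$ for all $\alpha\geq\tfrac12$. Independently, I claim one can choose $\q_k'\in\cPp(n)\cap\bbQ^n$ with $\q_k'\to\q'$ such that for all $\alpha\geq\tfrac12$
\begin{align}
	D_\alpha(\p'\|\q_k')<D_\alpha(\p'\|\q') \quad\text{and}\quad D_\alpha(\q_k'\|\p')<D_\alpha(\q'\|\p') .
\end{align}
The point is that moving $\q'$ slightly toward the full-support vector $\p'$, i.e.\ replacing it by $(1-\delta)\q'+\delta\p'$ with $\delta\searrow 0$, strictly decreases both $D_\alpha(\p'\|\cdot)$ and $D_\alpha(\cdot\|\p')$ for every $\alpha\geq\tfrac12$: these functionals are nonnegative and vanish only at $\p'$, and restricted to this segment the sums $\sum_x(p'_x)^\alpha q_x^{1-\alpha}$ and $\sum_x q_x^\alpha(p'_x)^{1-\alpha}$ are, because $\p'$ has no zero entry, strictly convex (for $\alpha>1$) or strictly concave (for $\alpha<1$) in the mixing parameter, with the relevant extremum attained at $\p'$; the case $\alpha=\infty$ is checked separately by an elementary argmax computation. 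One then rationalises the mixture while preserving the strict inequalities, using joint continuity of $(\alpha,\q)\mapsto D_\alpha(\p'\|\q)$ on $[\tfrac12,\infty]\times K$ for any compact $K$ in the interior to get uniformity in $\alpha$. Chaining the displayed inequalities yields, for all $\alpha\geq\tfrac12$, the strict relations $D_\alpha(\p\|\q_k)>D_\alpha(\p'\|\q_k')$ and $D_\alpha(\q_k\|\p)>D_\alpha(\q_k'\|\p')$.

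Finally, since $\q_k$ and $\q_k'$ are positive rationals, Lemma~\ref{lem:embed} furnishes a common $M_k\in\bbN$ and vectors $\r_k,\s_k\in\cP(M_k)$ with $(\r_k,\uu{M_k})\sim(\p,\q_k)$ and $(\s_k,\uu{M_k})\sim(\p',\q_k')$; here $\s_k$ has full support because $\p'$ does. Rewriting the strict inequalities above in terms of $\r_k,\s_k$ gives $D_\alpha(\r_k\|\uu{M_k})>D_\alpha(\s_k\|\uu{M_k})$ and $D_\alpha(\uu{M_k}\|\r_k)>D_\alpha(\uu{M_k}\|\s_k)$ for all $\alpha\geq\tfrac12$, and in particular $\r_k\neq\s_k$; Lemma~\ref{lem:turgut} (applicable since $\s_k$ has full support) then yields $\r_k\trump\s_k$. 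Consequently $(\p,\q_k)\sim(\r_k,\uu{M_k})\trump(\s_k,\uu{M_k})\sim(\p',\q_k')$, i.e.\ $(\p,\q_k)\trump(\p',\q_k')$ for every $k$, and $\q_k\to\q$, $\q_k'\to\q'$, as required. The main obstacle is the middle step: engineering the rational sequence $\q_k'$ so that \emph{both} divergence inequalities become strict \emph{simultaneously for all} $\alpha\geq\tfrac12$ while $\q_k'$ stays rational and converges to $\q'$; this is exactly where $\supp(\p')=[n]$ enters, and care is needed for the uniformity in $\alpha$ and the endpoint $\alpha=\infty$.
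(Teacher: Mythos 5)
Your proof is logically sound, but it takes a genuinely different and much heavier route than the paper. The paper's argument is a short, self-contained channel construction: starting from the four witnessing sequences $\p_k,\q_k,\p_k',\q_k'$ in Definition~\ref{def:cata}, it pre- and post-composes with explicit stochastic maps ($\ee{i}W=\p_k+2^{-D_{\max}(\p\|\p_k)}(\ee{i}-\p)$ and its analogue on the primed side) to absorb the drift of the first arguments into modified second arguments $\tilde\q_k=\p+2^{D_{\max}(\p\|\p_k)}(\q_k-\p_k)$ and $\tilde\q_k'=\p'+2^{-D_{\max}(\p_k'\|\p')}(\q_k'-\p_k')$, giving $(\p,\tilde\q_k)\majo(\p_k,\q_k)\trump(\p_k',\q_k')\majo(\p',\tilde\q_k')$ directly; the hypotheses $\q>0$ (so $\tilde\q_k\in\cP(m)$ for large $k$) and $\p'>0$ (so $D_{\max}(\p_k'\|\p')\to0$) are used only to make these two vectors and channels well defined. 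You instead discard the witnessing sequences, pass to the R\'enyi inequalities via Theorem~\ref{thm:char2}~(a)$\implies$(c), and then re-run the (c)$\implies$(a) machinery (Lemmas~\ref{lem:embed} and~\ref{lem:turgut}) with the strictifying perturbation pushed onto $\q'$ toward $\p'$ rather than onto $\p'$. This is not circular (Lemma~\ref{oneside} is not used in the proof of Theorem~\ref{thm:char2}), your convexity/concavity argument for the strict decrease of $Q_\alpha$ along the segment toward the full-support point $\p'$ is correct, and the degenerate cases are dispatched properly. The price you pay is exactly the step you flag: making both inequalities strict \emph{uniformly} over $\alpha\in[\tfrac12,\infty]$ while rationalising $\q_k'$. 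This can be completed — the gap $\alpha\mapsto D_\alpha(\p'\|\q')-D_\alpha(\p'\|\q_\delta')$ is continuous and positive on the compactified interval (Dini's theorem handles uniformity of $D_\alpha\to D_\infty$ on compact subsets of the interior), Corollary~\ref{cor:cont} gives an $\alpha$-uniform modulus of continuity for the second-argument perturbation, and a similar elementary estimate handles the first argument on full-support vectors — but it is a nontrivial amount of analysis that the paper's four-line channel construction avoids entirely. In short: correct, but you are proving the lemma with Turgut--Klimesh when a direct data-processing argument suffices.
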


\begin{proof}
Clearly, if the two sequences defined above exist then $(\p, \q) \cata (\p', \q')$ since one can define $\p_k\eqdef\p$, $\p'_k\eqdef\p'$ for all $k\in\bbN$. It thus remains to show the converse implication.

Suppose $(\p, \q) \cata (\p', \q')$ and let $\{\p_k\}_{k \in \bbN}$, $\{ \q_k \}_{k \in \bbN}$, $\{\p_k'\}_{k \in \bbN}$, and $\{ \q_k' \}_{k \in \bbN}$ be as in Definition~\ref{def:cata}. In particular, $(\p_k, \q_k) \trump (\p'_k, \q'_k)$ for all $k\in\bbN$. Now, define a channel $W$ by its action
\begin{align}
\ee{i} W=\p_k+2^{-D_{\max}(\p\|\p_k)}(\ee{i}-\p)\;,\quad\forall\; i \in [n] \;.
\end{align}
It is simple to check that $W$ is indeed a channel.
Define also 
\begin{align}
\tq_k\eqdef\p+2^{D_{\max}(\p\|\p_k)}(\q_k-\p_k)\;.
\end{align}
Note that since $D_{\max}(\p\|\p_k)\to 0$ as $k\to\infty$ and since we assume that $\q>0$ (and $\q_k\to\q$ as $k\to\infty$) we conclude that for sufficiently large $k\in\bbN$, $\tq_k\in\cP(m)$. Further,  observe that
\begin{align}
\p W=\p_k\quad,\quad\tq_k W=\q_k\;.
\end{align}
Therefore, $(\p, \tq_k)\majo(\p_k, \q_k)$ and $\tq_k\to\q$ as $k\to\infty$.

Next, let 
\begin{align}
\tq_k'\eqdef\p'+2^{-D_{\max}(\p_k'\|\p')}(\q_k'-\p_k')\;.
\end{align}
and note that $\tq_k'\in\cP(n)$. Further, define a channel $\tilde{W}$ by its action
\begin{align}
\ee{i} \tilde{W}=\p'+2^{-D_{\max}(\p_k'\|\p')}(\ee{i}-\p_k')\;,\quad\forall\;i \in [n]\;.
\end{align}
It is simple to check that $\tilde{W}$ is a channel and
\begin{align}
\p_k' \tilde{W}=\p'\quad,\quad\q_k' \tilde{W}=\tq_k\;.
\end{align}
Therefore, $(\p_k', \q_k')\majo(\p', \tq_k')$ and $\tq_k'\to\q'$ as $k\to\infty$.
With the above two constructions of $\{\tq_k\}$ and $\{\tq_k'\}$ we have
\begin{align}
(\p, \tq_k)\majo(\p_k, \q_k)\trump(\p'_k, \q'_k)\majo(\p', \tq_k')
\end{align}
where $\tq_k \to \q$ and $\tq_k' \to \q'$ as $k \to \infty$. 
\end{proof}


\section{Lemmas used in the proof of Theorem~\ref{thm:char2}}
\label{app:lemmas}

\begin{lemma}\label{lem1}
Let $\p,\q\in\cP(n)$ and suppose $\p\neq \q$. Then, there exists sequence $\{ \q_k \}_{k \in \bbN}$ with $\q_k\in \cPp(n) \cap \bbQ^n$ such that
$\q_k \to \q$ as $k \to \infty$ and for all $k\in\bbN$ 
\begin{align}
(\p,\q)\majo(\p,\q_k)\;.
\end{align}
\end{lemma}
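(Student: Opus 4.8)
The plan is to realise $\q_k$ as a rational, full-support perturbation of a point that already lies strictly inside the relative-majorisation cone below $(\p,\q)$, the perturbation being controlled through Blackwell's theorem (quoted above): $(\p,\q)\majo(\p,\q')$ if and only if $\cT(\p,\q')\subseteq\cT(\p,\q)$. Throughout I shall assume $\supp(\p)\cup\supp(\q)=[n]$; see the last paragraph.

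First introduce, for $t\in(0,1)$, the vector $\q^{(t)}\eqdef(1-t)\q+t\p$ and the channel $V^{(t)}\in\cS(n)$ with rows $\ee{i}V^{(t)}=(1-t)\ee{i}+t\p$. Since $\p V^{(t)}=\p$ and $\q V^{(t)}=\q^{(t)}$ we have $(\p,\q)\majo(\p,\q^{(t)})$, the vector $\q^{(t)}$ has support $\supp(\p)\cup\supp(\q)=[n]$, and $\q^{(t)}\to\q$ as $t\downarrow0$. The next step is to show that this inclusion of testing regions is \emph{strict away from the two corners} $(0,0)$ and $(1,1)$. Ordering coordinates so that the likelihood ratios $p_x/q_x$ are non-increasing\,---\,an ordering unchanged upon replacing $\q$ by $\q^{(t)}$\,---\,the lower Lorenz curves $\cL(\p,\q)$ and $\cL(\p,\q^{(t)})$ are piecewise affine with the same $x$-vertices $a_\ell=\sum_{x\le\ell}p_x$, and, writing $b_\ell=\sum_{x\le\ell}q_x$, their $y$-vertices satisfy
\begin{align}
	b_\ell^{(t)} = b_\ell + t\,(a_\ell-b_\ell), \qquad 0\le\ell\le n .
\end{align}
Since $\p\ne\q$ one checks that $a_\ell-b_\ell>0$ for $1\le\ell\le n-1$ while $a_0-b_0=a_n-b_n=0$, hence $b_\ell^{(t)}>b_\ell$ for $1\le\ell\le n-1$; thus $\cL(\p,\q^{(t)})$ is strictly above $\cL(\p,\q)$ on $(0,1)$ and the curves meet only at $(0,0)$ and $(1,1)$ (by the $180^{\circ}$ rotation relating lower and upper Lorenz curves the same holds for the upper curves).

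Now $(0,0)$ and $(1,1)$ belong to \emph{every} testing region, so the strict separation together with continuity of $\q'\mapsto\cT(\p,\q')$ yields, for each $t$, a radius $\rho_t>0$ with the property that $\|\q'-\q^{(t)}\|_\infty<\rho_t$ forces $\cT(\p,\q')\subseteq\cT(\p,\q)$, i.e.\ $(\p,\q)\majo(\p,\q')$: away from the corners the region of $(\p,\q')$ remains in the interior of $\cT(\p,\q)$, and near a corner it cannot escape because the extremal likelihood ratio of $(\p,\q^{(t)})$ there is strictly interior to that of $(\p,\q)$ and stays so under a small perturbation. Since $\q^{(t)}\in\cPp(n)$, the $\rho_t$-ball about it meets $\cPp(n)\cap\bbQ^n$; choosing $t_k\downarrow0$ and $\q_k\in\cPp(n)\cap\bbQ^n$ with $\|\q_k-\q^{(t_k)}\|_\infty<\min\{\rho_{t_k},t_k\}$ gives $(\p,\q)\majo(\p,\q_k)$ for every $k$ and $\|\q_k-\q\|_\infty\le t_k\bigl(1+\|\p-\q\|_\infty\bigr)\to0$, which is the claim.

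The main obstacle is the full-support requirement on $\q_k$. If $\p$ and $\q$ share a null coordinate $x$ (that is $p_x=q_x=0$), then every channel $W$ with $\p W=\p$ satisfies $(\q W)_x=0$, so no admissible $\q_k$ can be strictly positive; the lemma therefore needs the hypothesis $\supp(\p)\cup\supp(\q)=[n]$\,---\,automatic whenever $\p$ or $\q$ has full support, which is the situation in which it is applied\,---\,and without it one should only claim $\supp(\q_k)=\supp(\p)\cup\supp(\q)$. The only other slightly technical point is the quantitative robustness near the corners, best argued directly in terms of the slopes of the Lorenz curves at $(0,0)$ and $(1,1)$.
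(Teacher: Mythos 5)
Your route is genuinely different from the paper's. The paper perturbs $\q$ directly into a positive rational vector by adding explicit small rationals $\eps_x$ to $q_1,\dots,q_{n-1}$ and subtracting their sum from $q_n$, then compares Lorenz-curve vertices; you instead pass through the interior point $\q^{(t)}=(1-t)\q+t\p$ (reached by a channel fixing $\p$), prove strict separation of the lower Lorenz curves away from the corners, and then perturb to a nearby positive rational $\q_k$ by a robustness argument. Your detour buys something real: the lemma as stated is in fact false without a support restriction. For $\p=(\tfrac12,\tfrac12,0)$ and $\q=(1,0,0)$, any $W$ with $\p W=\p$ must have $W(3|1)=W(3|2)=0$, hence $(\q W)_3=q_1W(3|1)=0$, so no $\q_k\in\cPp(3)$ satisfies $(\p,\q)\majo(\p,\q_k)$; the paper's reordering step breaks exactly at coordinates with $p_x=q_x=0$, which after perturbation acquire likelihood ratio $0$, migrate to the end of the ordering, and push a vertex of the new lower Lorenz curve strictly below $(1,1)$. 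Your hypothesis $\supp(\p)\cup\supp(\q)=[n]$ repairs this and is harmless for the application in Theorem~\ref{thm:char2}, since common null coordinates can be deleted beforehand.

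Two points need attention, however. First, your stated obstruction is wrong as a characterisation: $\p W=\p$ only forces $W(x|i)=0$ for $i\in\supp(\p)$, so $(\q W)_x=\sum_{i\notin\supp(\p)}q_iW(x|i)$ need not vanish when $\supp(\q)\not\subseteq\supp(\p)$. For $\p=(1,0,0)$ and $\q=(\tfrac12,\tfrac12,0)$, which share the null coordinate $3$, the channel with rows $(1,0,0)$, $(\delta,1-2\delta,\delta)$, $(\delta,1-2\delta,\delta)$ fixes $\p$ and sends $\q$ to the full-support rational vector $(\tfrac{1+\delta}{2},\tfrac{1-2\delta}{2},\tfrac{\delta}{2})\to\q$, so the lemma's conclusion does hold there; the genuine failure mode is $\supp(\q)\subseteq\supp(\p)\subsetneq[n]$, not merely a shared null coordinate. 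Second, the ``quantitative robustness near the corners'' that you defer is the crux of your argument and must be carried out: strict separation on the open interval does not by itself produce a $\rho_t$-ball, because the curves meet at the endpoints. It does go through: $\p\neq\q$ forces $\min_x q_x/p_x<1<\max_x q_x/p_x$, so the initial slope of $\cL(\p,\q^{(t)})$, namely $(1-t)\min_x(q_x/p_x)+t$, strictly exceeds that of $\cL(\p,\q)$, and dually the terminal slope strictly decreases; combined with convexity of the Lorenz curves near the corners and a uniform gap on a middle compact piece this yields $\rho_t>0$. You should also supply the short argument (using the ratio ordering, $\p\neq\q$, and the absence of common null coordinates) for your claim that $a_\ell-b_\ell>0$ for $1\le\ell\le n-1$, which is true but not immediate.
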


\begin{proof}
W.l.o.g. we assume that
\begin{align}
\label{order}
\frac{p_1}{q_1}\geq\frac{p_2}{q_2}\geq\cdots \geq\frac{p_n}{q_n}\geq 0\;.
\end{align}
so that the vertices of $\cL(\p,\q)$ are given by $(a_k,b_k)$ where $a_k=\sum_{x=1}^kp_x$ and $b_k=\sum_{k=1}^kq_x$.
Note that the above relation implies that $q_n>0$. Let $\eps_1,...,\eps_{n-1}$ be small enough positive numbers such that for all $x=1,...,n-1$, $q_x'\eqdef q_x+\eps_x$ is a rational number. Furthermore, we can always choose $\eps_1,...,\eps_{n-1}$ to be small enough such that their sum $\eps\eqdef\sum_{x=1}^{n-1}\eps_x$ satisfies
$q_n'\eqdef q_n-\eps>0$. Note also that for these choices $\q'\eqdef(q_1',...,q_n')$ has positive rational numbers. 
Note that if for some $x\in[n]$, $\frac{p_x}{q_x}>\frac{p_{x-1}}{q_{x-1}}$ then for small enough $\eps>0$ also $\frac{p_x}{q_x'}>\frac{p_{x-1}}{q_{x-1}'}$. On the other hand, if for some $x\in[n]$, $\frac{p_x}{q_x}=\frac{p_{x-1}}{q_{x-1}}$ then if necessary we exchange between $(p_x,q_x)$ and $(p_{x-1},q_{x-1})$ so that $\frac{p_x}{q_x'}\geq\frac{p_{x-1}}{q_{x-1}'}$ still holds. In this way we can assume w.l.o.g. that both Eq.~\eqref{order} holds and
\begin{align}\label{order2}
\frac{p_1}{q_1'}\geq\frac{p_2}{q_2'}\geq\cdots \geq\frac{p_n}{q_n'}\;.
\end{align}
By construction, except for the extreme vertices $(0,0)$ and $(1,1)$, all the vertices of $\cL(\p,\q')$ are strictly above the vertices of $\cL(\p,\q)$; explicitly, note that for all $k=1,...,n-1$
\begin{align}
b_k'\eqdef\sum_{x=1}^kq_x'=b_k+\sum_{x=1}^k\eps_x>b_k\;.
\end{align}
Hence, $\cL(\p,\q')$ is everywhere above $\cL(\p,\q)$ so that $(\p,\q)\majo(\p,\q')$. Finally, since $\{\eps_x\}$ can be made arbitrarily small, we can construct in this way a sequence $\{ \q_k \}_{k \in \bbN}$ with the desired properties.
\end{proof}

\begin{lemma}\label{lem2}
Let $\p,\q\in\cP(n)$. Then, there exists sequence $\{ \q_k \}_{k \in \bbN}$ with $\q_k\in \cP(n) \cap \bbQ^n$ such that
$\q_k \to \q$ as $k \to \infty$, $\supp(\q_k)=\supp(\q)$, and for all $k\in\bbN$
\begin{align}
(\p,\q_k)\majo (\p,\q)\;.
\end{align}
\end{lemma}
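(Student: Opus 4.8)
The plan is to mirror the proof of Lemma~\ref{lem1}, but to push the intermediate vertices of the lower Lorenz curve \emph{down} rather than up, so as to obtain $(\p,\q_k)\majo(\p,\q)$ instead of $(\p,\q)\majo(\p,\q_k)$. Throughout I would use the geometric characterisation: $(\p,\q_k)\majo(\p,\q)$ holds exactly when $\cL(\p,\q_k)$ lies nowhere above $\cL(\p,\q)$.

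First I would set up the perturbation on $S\eqdef\supp(\q)$ only. Since we require $\supp(\q_k)=\supp(\q)$, every $\q_k$ must vanish outside $S$, just like $\q$. After permuting indices we may assume $\tfrac{p_1}{q_1}\ge\cdots\ge\tfrac{p_n}{q_n}$, with the usual conventions ($p_x/q_x=+\infty$ when $q_x=0<p_x$, and the indices with $p_x=q_x=0$ moved to the end). Then the indices of $S$ occupy one contiguous block of positions; for the positions before it (where $q_x=0$) and after it, the partial sums $b_\ell=\sum_{x\le\ell}q_x$ equal $0$ and $1$ respectively, and these are automatically reproduced by any probability vector supported on $S$. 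So it suffices to control the partial sums inside the $S$-block.

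The core step is elementary: re-indexing the $S$-block as $1,\dots,m$ with $q_1,\dots,q_m>0$ summing to $1$, I would pick rationals $0<r_j^{(k)}<q_j$ with $r_j^{(k)}\to q_j$ for $j=1,\dots,m-1$, and set $q_{k,m}\eqdef 1-\sum_{j=1}^{m-1}r_j^{(k)}$. This is rational, equals $q_m+\sum_{j\le m-1}(q_j-r_j^{(k)})>q_m>0$, and tends to $q_m$; moreover the partial sums satisfy $\sum_{j\le\ell}q_{k,j}\le\sum_{j\le\ell}q_j$ for all $\ell$ (strictly for $\ell<m$, with equality at $\ell=m$). Setting $q_{k,x}=0$ for $x\notin S$ yields $\q_k\in\cP(n)\cap\bbQ^n$ with $\supp(\q_k)=S$ and $\q_k\to\q$.

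Finally I would verify $(\p,\q_k)\majo(\p,\q)$. Using the fixed ordering above, the points $(a_\ell,b_\ell^{(k)})$ with $b_\ell^{(k)}=\sum_{x\le\ell}q_{k,x}$ all lie in the testing region $\cT(\p,\q_k)$, since each equals $(\p\t^T,\q_k\t^T)$ for a threshold test $\t\in\{0,1\}^n\subset[0,1]^n$; as $\cT(\p,\q_k)$ is convex, the piecewise-linear curve through these points lies in $\cT(\p,\q_k)$ as well, so $\cL(\p,\q_k)$ (the lower boundary of $\cT(\p,\q_k)$) lies nowhere above that curve. On the other hand this curve shares its $a$-grid with $\cL(\p,\q)$ and has vertex heights $b_\ell^{(k)}\le b_\ell$, so a segment-by-segment comparison of the two piecewise-linear curves shows it lies nowhere above $\cL(\p,\q)$; chaining the two facts gives the claim. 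The case $\p=\q$ needs no separate treatment. I expect the main subtlety to be exactly this last point: perturbing only $\q$ can alter the correct sorting of the likelihood ratios, so one must not assume the curve through $(a_\ell,b_\ell^{(k)})$ is itself $\cL(\p,\q_k)$ — the convexity of the testing region is what sidesteps this (playing the role of the tie-handling device used in Lemma~\ref{lem1}).
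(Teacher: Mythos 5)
Your proposal is correct and follows essentially the same route as the paper: perturb the support entries of $\q$ downward to nearby rationals, compensate on the last entry to preserve normalisation, and compare the lower Lorenz curves vertex by vertex on the shared $a$-grid. The only point of divergence is how you handle the possible re-sorting of likelihood ratios after the perturbation---where the paper takes the $\eps_x$ small enough and swaps tied indices so that the perturbed vertices remain correctly ordered, you instead invoke convexity of the testing region to show that the piecewise-linear curve through the (possibly mis-sorted) points still upper-bounds $\cL(\p,\q_k)$; this is a clean and valid substitute for that step.
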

\begin{proof}
Similarly to the previous lemma, we assume w.l.o.g. that Eq.~\eqref{order} holds. Let $r\in[n-1]$ be such that $q_1=\cdots=q_r=0$ and $q_{r+1}>0$. Define $q'_x\eqdef 0$ for $x\in[r]$. For each $x=r+1,...,n-1$ define $\eps_x$ to be small enough positive real numbers such that $q_x'\eqdef q_x-\eps_x$ are positive rational numbers. Further, define $q_n'\eqdef 1-\sum_{x=r+1}^{n-1}q_x'$.
Hence, by construction, $\q'\eqdef(q_1,...,q_n)$ is a vector with rational components with the same support as $\q$. Moreover, following the same arguments as in the previous Lemma we can assume w.l.o.g. that also in this case Eq.~\eqref{order2} holds for small enough $\{\epsilon_x\}_{x=r+1}^{n-1}$. Hence, for any $k=1,...,n$, the $k^{\text{th}}$-vertex $(a_k,b_k)$ of $\cL(\p,\q)$ is not below the $k^{\text{th}}$-vertex $(a_k,b_k')$ of $\cL(\p,\q')$. That is, $(\p,\q')\majo (\p,\q)$, and since the $\{\epsilon_x\}$ can be made arbitrarily small we can construct in this way a sequence $\{ \q_k \}_{k \in \bbN}$ with the desired properties.
\end{proof}

\end{document}